\theoremstyle{definition}
\newtheorem{theorem}{Theorem} 
\newtheorem{definition}{Definition}
\newtheorem{proposition}[definition]{Proposition}
\newtheorem{lemma}[definition]{Lemma}
\newtheorem{corollary}[definition]{Corollary}
\newtheorem{remark}[definition]{Remark}
\newcommand{\Not}{{\sim}}
\title{Another Combination of \\ Classical and Intuitionistic Conditionals\thanks{The research by Satoru Niki was supported by funding from German Academic Exchange Service (DAAD) and the European Research Council (ERC) under the European Union's Horizon 2020 research and innovation programme, grant agreement ERC-2020-ADG, 101018280, ConLog. The research by Hitoshi Omori has been supported by a Sofja Kovalevskaja Award of the Alexander von Humboldt-Foundation, funded by the German Ministry for Education and Research. We would like to thank the anonymous referees for their careful and helpful comments which helped in improving the presentation of the paper.}}
\author{
Satoru Niki \qquad\qquad Hitoshi Omori
\institute{Department of Philosophy I\\ 
Ruhr University Bochum\\
Bochum, Germany}
\email{Satoru.Niki@rub.de \quad\qquad Hitoshi.Omori@rub.de}
}
\begin{document}
\maketitle

\begin{abstract}
On the one hand, classical logic is an extremely successful theory, even if not being perfect. On the other hand, intuitionistic logic is, without a doubt, one of the most important non-classical logics. But, how can proponents of one logic view the other logic? In this paper, we focus on one of the directions, namely how classicists can view intuitionistic logic. To this end, we introduce an expansion of positive intuitionistic logic, both semantically and proof-theoretically, and establish soundness and strong completeness. Moreover, we discuss the interesting status of disjunction, and the possibility of combining classical logic and minimal logic. We also compare our system with the system of Caleiro and Ramos.
\end{abstract}

\section{Introduction}

Intuitionistic logic is, without a doubt, one of the most important non-classical logics. For the purpose of illustrating our motivation, let us refer to those who, for whatever reasons they may have, endorse intuitionistic logic and classical logic, as \emph{intuitionists} and \emph{classicists}, respectively. Then, we are interested in how one camp views the logic endorsed by the other camp. In other words, we are interested in the following two questions.

\smallskip

{\bf (Q1)} how do/can intuitionists view classical logic? 

{\bf (Q2)} how do/can classicists view intuitionistic logic?

\smallskip

\noindent In what follows, we will somewhat naively refer to intuitionistic logic and classical logic as $\vdash_{\bf IL}$ and $\vdash_{\bf CL}$, respectively, and also to intuitionistic conditional and classical conditional as $\to$ and $\supset$, respectively. Moreover, we will assume that the conditionals $\to$ and $\supset$ internalize the consequence relations $\vdash_{\bf IL}$ and $\vdash_{\bf CL}$, respectively. That is, we assume that the Deduction Theorem, namely $\Gamma, A\vdash_{\bf IL} B$ iff $\Gamma\vdash_{\bf IL} A{\to} B$ and $\Gamma, A\vdash_{\bf CL} B$ iff $\Gamma\vdash_{\bf CL} A{\supset} B$ both hold in the appropriate language.

Let us begin with {\bf (Q1)}. Given the basic assumption concerning the Deduction Theorem, how can intuitionists accommodate classical conditional? If intuitionists accept both that (i) everything provable for $\to$ is also provable for $\supset$ (i.e. if $\vdash_{\bf IL} A{\to} B$, then $\vdash_{\bf IL} A{\supset} B$) and that (ii) \emph{modus ponens} holds for $\supset$ (i.e. $A, A{\supset} B\vdash_{\bf IL} B$), then it follows that $\vdash_{\bf IL} A{\to} B$ iff $\vdash_{\bf IL} A{\supset} B$. This might be seen as a collapse, as observed in \cite{gabbay1996overview} by Dov Gabbay, and thus intuitionists may need to give up one of the assumptions in viewing classical conditional. One option is to give up \emph{modus ponens} with respect to $\supset$, and this is pursued by Dag Prawiz in \cite{Prawitz2015}, followed by Luiz Carlos Pereira and Ricardo Oscar Rodriguez in  \cite{pereira2017normalization} as well as by Elaine Pimentel, L.C. Pereira and Valeria de Paiva in \cite{pimentel2018,pimentel2019ecumenical}. Another option might be to rely on negative translations. This is the way that seems to be endorsed, for example, by Helmut Schwichtenberg  and Stanley S. Wainer (cf. \cite[pp.4-15]{schwichtenberg2011proofs}), but there are also some worries summarized with further references by Lloyd Humberstone (cf. \cite[p.305]{Humberstone2011connectives}).

For the other question, namely {\bf (Q2)}, this may seem to be fairly straightforward since there is Kripke semantics for intuitionistic logic. However, note that the classical conditional $\supset$ is \emph{not} present in the standard semantics as a primitive or defined connective. Moreover, a straightforward attempt of adding classical conditional, which is equivalent to the addition of classical negation, will face a difficulty. Indeed, as observed by Luis Fari{\~n}as del Cerro and Andreas Herzig in \cite[pp.93--94]{dCH1996} and Paqui Lucio in \cite{lucio2000structured}, the heredity condition will be in conflict with the classical conditional/negation. Accepting this conflict is the approach taken in \cite{dCH1996,lucio2000structured} and more recently by Masanobu Toyooka and Katsuhiko Sano in \cite{toyooka2021analytic}.
To the best of our knowledge, previous attempts so far in the literature are facing troubles, in many cases adding \emph{ad hoc} restrictions, such as the atoms being divided into classical ones and intuitionistic ones. Therefore, it seems to be difficult to conclude that these systems, designed for different goals, offer a way to view intuitionistic logic from classicists' perspective.

Based on these, our aim in this paper is to present a new combination of classical logic and intuitionistic logic
for the purpose of addressing {\bf (Q2)}. Given that there are worries raised for the absurdity constant among the constructivists' camp, we will focus on the positive fragment \emph{without} negation, although we can restore the absurdity constant, if desired. This will have a nice and surprising byproduct of allowing us to consider a combination of classical logic and minimal logic, a topic that seems to have never discussed so far in the literature. 

\section{Semantics and proof system}\label{sec:semanticsproofsystem}

The basic languages $\mathcal{L}$ and $\mathcal{L}^-$ consist of sets $\{ \land, \lor , \to \}$ and $\{ \land,  \to \}$, respectively, of propositional connectives and a countable set $\mathsf{Prop}$ of propositional variables which we denote by $p$, $q$, etc. Furthermore, we denote by $\mathsf{Form}$ and $\mathsf{Form}^-$ the set of formulas defined as usual in $\mathcal{L}$ and $\mathcal{L}^-$, respectively. We denote a formula of the given language by $A$, $B$, $C$, etc. and a set of formulas by $\Gamma$, $\Delta$, $\Sigma$, etc. We will use $A\leftrightarrow B$ as an abbreviation for $(A\to B)\land(B\to A)$.

Moreover, we consider a few expansions of both languages $\mathcal{L}$ and $\mathcal{L}^-$. The additional connectives will be made explicit by subscripts. The additional connectives in this paper include $\bot$, $\Box$, and $\supset$. For example, we refer to the language obtained by adding $\supset$ to $\mathcal{L}$ as $\mathcal{L}_\supset$, and the set of formulas as $\mathsf{Form}_\supset$.

Let us now state the semantics. 

\begin{definition}\label{def.model}
An {\bf S}-model for $\mathcal{L}_\supset$ is a quadruple $\langle g, W, \leq, V \rangle$, where $W$ is a set of states with $g\in W$ (the base state); $\leq$ is a reflexive and transitive relation on $W$ with $g$ being the least element; and $V: W\times \mathsf{Prop} \longrightarrow \{ 0 , 1 \}$ is an assignment of truth values to state-variable pairs with the condition that $V(w_1, p)=1$ and $w_1\leq w_2$ only if $V(w_2, p)=1$ for all $p\in \mathsf{Prop}$, all $w_1, w_2\in W$. Valuations $V$ are then extended to interpretations $I$ of state-formula pairs by the following conditions:

\vspace{-2mm}
\begin{itemize}
\setlength{\parskip}{0cm}
\setlength{\itemsep}{0cm}
\item $I(w,p)=V(w,p)$;
\item $I(w, A\land B)=1$ iff $I(w, A)=1$ and $I(w, B)=1$;
\item $I(w, A\lor B)=1$ iff $I(w, A)=1$ or $I(w, B)=1$;
\item $I(w, A{\supset} B)=1$ iff $I(g, A)\neq 1$ or $I(w, B)=1$;
\item $I(w, A{\to} B)=1$ iff for all $x\in W$: if $w\leq x$ and $I(x, A)=1$ then $I(x, B)=1$.
\end{itemize}
\vspace{-2mm}
Finally, semantic consequence is now defined as follows: $\Sigma \models A$ iff for all {\bf S}-models $\langle g, W, \leq, I \rangle$, $I(g, A)=1$ if $I(g, B)=1$ for all $B\in \Sigma$.
\end{definition}

\begin{remark}
Since we assume our meta-theory to be  classical, the clause for $\supset$ is equivalent to `if $I(g,A)=1$ then $I(w,B)=1$' as well as to `for all $x\in W$: if $w\leq x$ and $I(g, A)=1$ then $I(x, B)=1$'.
\end{remark}

Then, by induction on the complexity of formulas, we obtain the following.
\begin{lemma}[Persistence]
Let $\langle g, W, \leq, V \rangle$ be an {\bf S}-model for $\mathcal{L}_\supset$. Then, if $I(w_1, A){=}1$ and $w_1{\leq} w_2$ then $I(w_2, A){=}1$ for all $A\in \mathsf{Form}_\supset$ and for all $w_1, w_2\in W$.
\end{lemma}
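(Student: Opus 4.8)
The plan is to prove Persistence by induction on the structure of the formula $A$, using the definition of {\bf S}-models. The base case is exactly the heredity condition imposed on $V$: if $I(w_1,p)=V(w_1,p)=1$ and $w_1\leq w_2$, then $V(w_2,p)=1$, i.e. $I(w_2,p)=1$. For the inductive step I assume the claim holds for all proper subformulas of $A$ (for all pairs of states), and argue by cases on the principal connective.

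For $A=B\land C$: if $I(w_1,B\land C)=1$, then $I(w_1,B)=1$ and $I(w_1,C)=1$; applying the induction hypothesis to $B$ and to $C$ with $w_1\leq w_2$ gives $I(w_2,B)=1$ and $I(w_2,C)=1$, hence $I(w_2,B\land C)=1$. The case $A=B\lor C$ is analogous, splitting on which disjunct holds at $w_1$. For $A=B\supset C$: this case does not even use the induction hypothesis. Suppose $I(w_1,B\supset C)=1$, so by the clause for $\supset$ either $I(g,B)\neq 1$ or $I(w_1,C)=1$. In the first case the same disjunct witnesses $I(w_2,B\supset C)=1$ directly (the condition $I(g,B)\neq 1$ does not depend on $w_1$ or $w_2$); in the second case, apply the induction hypothesis to $C$ to get $I(w_2,C)=1$, which again gives $I(w_2,B\supset C)=1$. (Alternatively one can invoke the equivalent reformulation in the Remark.) For $A=B\to C$: suppose $I(w_1,B\to C)=1$ and $w_1\leq w_2$; to show $I(w_2,B\to C)=1$, take any $x$ with $w_2\leq x$ and $I(x,B)=1$. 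By transitivity of $\leq$ we have $w_1\leq x$, so the assumption $I(w_1,B\to C)=1$ yields $I(x,C)=1$; hence $I(w_2,B\to C)=1$. Here no appeal to the induction hypothesis is needed, only transitivity of $\leq$.

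There is no real obstacle here; the $\supset$ clause, which at first sight might seem to threaten heredity (since it refers to the base state $g$ rather than to $w$), in fact causes no trouble precisely because its antecedent condition $I(g,B)\neq 1$ is evaluated at the \emph{fixed} state $g$ and is therefore indifferent to moving from $w_1$ to $w_2$. The only mild point worth noting is that the $\to$ case relies on transitivity of $\leq$, and the $\supset$ case implicitly relies on nothing beyond the shape of its truth clause; reflexivity of $\leq$ and the existence of a least element $g$ are not needed for Persistence itself. I would therefore present the base case and the $\supset$ and $\to$ cases explicitly, and leave the $\land$ and $\lor$ cases as routine.
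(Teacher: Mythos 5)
Your proof is correct and follows exactly the route the paper intends (the paper only remarks that the lemma follows ``by induction on the complexity of formulas'', and your induction, with the key observation that the $\supset$-clause refers to the fixed base state $g$ and the $\to$-case needs only transitivity of $\leq$, is precisely that argument). The one tiny slip is that you first announce the $\supset$ case ``does not even use the induction hypothesis'' and then correctly invoke it for $C$ in the second disjunct; the argument itself is fine.
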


\begin{remark}
Unlike the system considered by del Cerro and Herzig in \cite{dCH1996} in which the persistence condition is restricted, we have the \emph{full} persistence.
\end{remark}

Let us turn to relate $\supset$ to the notions of \emph{empirical negation} (cf. \cite{MikeEN,DeOmori2014,DeOmori2016}) and \emph{actuality operator} (cf. \cite{Humberstone2006,NikiOmoriAiMLsubmitted}), which are defined as follows.

\begin{definition}
Empirical negation $\Not$ and actuality operator $@$ are semantically defined by truth conditions $I(w, \Not A)=1$ iff $I(g, A)\neq 1$ and $I(w, @A)=1$ iff $I(g, A)=1$, respectively.
\end{definition}

One can easily observe that $A{\supset} B$ can be defined as $\Not A\lor B$, if empirical negation is available in the language. Note, however, that $\Not$ is not definable in our model, as observed in the next lemma. 

\begin{lemma}\label{lem:Not@def}
For any $\langle g,W,\leq\rangle$, there is an {\bf S}-model $\langle g,W,\leq,V\rangle$ such that for any formula $B[p]$ it is not the case that $I(w, B[p/A])=1$ iff $I(g, A)=0$ for all $w\in W$. Similarly,  For any $\langle g,W,\leq\rangle$ with $\#W\geq 2$ (i.e. more than one world), there is an {\bf S}-model $\langle g,W,\leq,V\rangle$ such that for any formula $B[p]$ it is not the case that $I(w, B[p/A])=1$ iff $I(g, A)=1$ for all $w\in W$.
\end{lemma}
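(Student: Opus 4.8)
The plan is to read the statement as follows: there is an {\bf S}-model on the given frame in which, for every candidate formula $B[p]$, one can exhibit a formula $A$ and a state $w$ where the displayed biconditional fails — which is exactly the negation of ``$B[p]$ defines $\Not$'' (resp.\ ``$B[p]$ defines $@$''). So for each part I would produce one explicit ``saturated'' valuation and then, given an arbitrary $B[p]$, supply the offending instance $A$. The common tool is an elementary fact, proved by induction on the complexity of $C$: in any {\bf S}-model, if $I(x,r)=1$ for every $r\in\mathsf{Prop}$ and every state $x$ with $w\le x$, then $I(w,C)=1$ for every $C\in\mathsf{Form}_\supset$. The $\land$- and $\lor$-cases are immediate; for $C=D{\to}E$ the truth clause quantifies over states $x\ge w$, each of which again satisfies the hypothesis, so the induction hypothesis yields $I(x,E)=1$ (it is essential to state the induction uniformly for all such states, not just for $w$); and for $C=D{\supset}E$ the induction hypothesis gives $I(w,E)=1$, which already suffices.

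For the first claim, take the valuation with $V(x,r)=1$ for all $x\in W$ and all $r\in\mathsf{Prop}$; it trivially satisfies the persistence condition, so it yields an {\bf S}-model on $\langle g,W,\le\rangle$ in which $I(x,C)=1$ for every $C\in\mathsf{Form}_\supset$ and every $x\in W$. Given any $B[p]$, take $A:=p$, so that $B[p/A]=B$ and $I(w,B[p/A])=1$ for every $w\in W$, while $I(g,A)=I(g,p)=1\neq 0$. Hence ``$I(w,B[p/A])=1$ iff $I(g,A)=0$'' is false at every $w\in W$, so a fortiori it does not hold for all $w\in W$; thus no $B[p]$ defines $\Not$ in this model.

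For the second claim, fix one $q\in\mathsf{Prop}$ and set $V(x,r)=1$ for all $x$ whenever $r\neq q$, and $V(x,q)=1$ iff $x\neq g$. This satisfies the persistence condition: if $V(x,q)=1$ (so $x\neq g$) and $x\le y$, then $y=g$ would give $x\le g$ and hence $x=g$, a contradiction, so $V(y,q)=1$. Now every state $x\neq g$ makes all variables true, and any state $\ge$ a state $\neq g$ is itself $\neq g$ (by the same reasoning), so the auxiliary fact gives $I(x,C)=1$ for all $C\in\mathsf{Form}_\supset$ and all $x\neq g$. As $\#W\ge 2$ we may pick some $w\neq g$; given any $B[p]$, the instance $A:=q$ yields $I(w,B[p/q])=1$ while $I(g,A)=V(g,q)=0$, so ``$I(w,B[p/A])=1$ iff $I(g,A)=1$'' fails at $w$ and hence does not hold for all $w\in W$; thus no $B[p]$ defines $@$ in this model. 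I expect the heredity check just performed to be the only delicate point — it is exactly where the minimality of the base state $g$ enters (the inference ``$x\le g$ implies $x=g$'' uses that $\le$ is a partial order) — and it genuinely needs $\#W\ge 2$: when $W=\{g\}$ the operator $@$ is definable, e.g.\ by $p$ itself. Everything else is the routine structural induction.
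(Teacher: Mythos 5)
Your proof is correct. The first half coincides with the paper's own argument: the everywhere-true valuation makes every formula true at $g$, and the biconditional for $\Not$ then fails there (the paper witnesses this with $A=p\to p$, you with $A=p$; both work). For the second half the key idea is also the same --- ``true everywhere, except possibly at $g$'' is closed under all four connectives --- but the implementations differ. The paper fixes the two-element chain $W=\{g,w\}$ and recasts the semantics as a three-valued matrix in which $\{\mathbf{1},\mathbf{i}\}$ is closed under the operations (a table it reuses later to refute the deduction theorem for $\to$); you instead run a direct induction on an arbitrary frame with $\#W\geq 2$, which actually matches the stated generality of the lemma better than the paper's own proof does. The one delicate point, which you yourself flag, is the heredity check: the inference from $x\leq g$ to $x=g$ needs $\leq$ to be antisymmetric, whereas Definition~\ref{def.model} only demands reflexivity and transitivity. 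This is a defect of the lemma's statement rather than of your argument: on a frame containing some $w\neq g$ with both $w\leq g$ and $g\leq w$, persistence forces $I(w,A)=I(g,A)$ for every valuation, so $p$ itself defines $@$ and the second claim is false there; for preorders in which some world is not below $g$ your construction can be repaired by setting $V(x,q)=1$ iff $x\not\leq g$. Your auxiliary induction, including the uniform statement of the hypothesis for all $x\geq w$ in the $\to$-clause and the observation that the $\supset$-clause needs only its consequent, is sound.
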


\begin{proof}
For the former, consider the model in which $V(g, p)=1$ for all $p\in \mathsf{Prop}$. Then, we obtain $I(g, A)=1$ for all $A\in \mathsf{Form}_\supset$. On the other hand, if $\Not$ were definable in the model, then $I(g,\Not(p\to p))=0$, a contradiction. For the latter, consider an {\bf S}-model with $W=\{g,w\}$. We assign values {\bf 1}, {\bf i} and {\bf 0} to a formula $A$ when $I(g,A)=1$, $I(g,A)=0$ but $I(w,A)=1$ and $I(w,A)\neq 0$, respectively. We shall write $I(A)$ for such an assignment. Then we obtain the following truth table: 
\begin{center} 
$
\begin{tabular}{c|cccc}
$A \land B$ & $\mathbf{1}$ & $\mathbf{i}$  & $\mathbf{0}$ \\ \hline
$\mathbf{1}$ & $\mathbf{1}$ & $\mathbf{i}$ & $\mathbf{0}$\\
$\mathbf{i}$ & $\mathbf{i}$ & $\mathbf{i}$ & $\mathbf{0}$\\
$\mathbf{0}$ & $\mathbf{0}$ & $\mathbf{0}$  & $\mathbf{0}$\\
\end{tabular}
\quad
\begin{tabular}{c|cccc}
$A \lor B$ & $\mathbf{1}$ & $\mathbf{i}$  & $\mathbf{0}$ \\ \hline
$\mathbf{1}$ & $\mathbf{1}$ & $\mathbf{1}$ & $\mathbf{1}$\\
$\mathbf{i}$ & $\mathbf{1}$ & $\mathbf{i}$ & $\mathbf{i}$\\
$\mathbf{0}$ & $\mathbf{1}$ & $\mathbf{i}$ & $\mathbf{0}$
\end{tabular}
\quad
\begin{tabular}{c|cccc}
$A {\supset} B$ & $\mathbf{1}$ & $\mathbf{i}$  & $\mathbf{0}$ \\ \hline
$\mathbf{1}$ & $\mathbf{1}$ & $\mathbf{i}$ & $\mathbf{0}$\\
$\mathbf{i}$ & $\mathbf{1}$ & $\mathbf{1}$ & $\mathbf{1}$\\
$\mathbf{0}$ & $\mathbf{1}$ & $\mathbf{1}$ & $\mathbf{1}$
\end{tabular}
\quad
\begin{tabular}{c|cccc}
$A {\to} B$ & $\mathbf{1}$ & $\mathbf{i}$  & $\mathbf{0}$ \\ \hline
$\mathbf{1}$ & $\mathbf{1}$ & $\mathbf{i}$ & $\mathbf{0}$\\
$\mathbf{i}$ & $\mathbf{1}$ & $\mathbf{1}$ & $\mathbf{0}$\\
$\mathbf{0}$ & $\mathbf{1}$ & $\mathbf{1}$ & $\mathbf{1}$
\end{tabular}
$
\end{center}
Then, if $@$ is definable in {\bf S}-models with $\#W\geq 2$, then it is also definable in the above three-valued semantics. In particular, we need $I(@A)=\mathbf{0}$ when $I(A)=\mathbf{i}$. However, this is not possible since in the above three-valued semantics, we obtain that $I(A)\neq\mathbf{0}$ for all $A\in \mathsf{Form}_\supset$ if we consider models with $I(p)\neq \mathbf{0}$ (in other words, $\{ \mathbf{1}, \mathbf{i} \}$ is closed under the given four truth functions). 
\end{proof}

We now turn to the proof system.
\begin{definition}
The system {\bf S} consists of the following axiom schemata and a rule of inference:

\vspace{-3mm}

\noindent
\begin{minipage}{.48\textwidth}
\begin{gather}
A {\to} (B {\to} A) \tag{Ax1} \label{Ax1} \\
(A {\to} (B {\to} C)){\to} ((A {\to} B)  {\to} (A {\to} C)) \tag{Ax2} \label{Ax2} \\
(A \land B) {\to} A \tag{Ax3} \label{Ax3} \\
(A \land B) {\to} B \tag{Ax4} \label{Ax4}\\
(C{\to} A) {\to} ((C{\to} B) {\to} (C{\to} (A {\land} B))) \tag{Ax5} \label{Ax5}\\
A {\to} (A \lor B) \tag{Ax6} \label{Ax6} \\
B {\to} (A \lor B) \tag{Ax7} \label{Ax7}\\
(A {\to} C) {\to} ((B {\to} C) {\to} ((A {\lor} B) {\to} C)) \tag{Ax8} \label{Ax8}
\end{gather}
\end{minipage}
\begin{minipage}{.5\textwidth}
\begin{gather}
(A{\to} B){\supset} (A{\supset} B) \tag{AxM1} \label{AxM1} \\
(A {\supset} (B {\supset} C)){\to} ((A {\supset} B)  {\supset} (A {\supset} C)) \tag{AxM2} \label{AxM2} \\
(A{\supset}(B{\to}C)){\to}(B{\to}(A{\supset}C)) \tag{AxM3} \label{AxM3}\\
(A{\to}(B{\supset}C)){\to}(B{\supset}(A{\to}C)) \tag{AxM4} \label{AxM4}\\
((A{\supset} B){\supset}C){\to}((A{\supset C}){\to} C) \tag{AxM5} \label{AxM5}\\
(A{\supset}C){\to}((B{\supset}C){\to}((A{\lor} B){\supset}C)) \label{AxM6} \tag{AxM6}\\
\frac{\ A\quad A{\supset} B \ }{B} \label{MP} \tag{MP} 
\end{gather}
\end{minipage}

\medskip

\noindent Finally, we write $\Gamma \vdash A$ if there is a sequence of formulas $B_1, \dots, B_n, A$, $n\geq 0$, such that every formula in the sequence $B_1, \dots , B_n, A$ either (i) belongs to $\Gamma$; (ii) is an instance of an axiom of {\bf S}; (iii) is obtained by \eqref{MP} from formulas preceding it.
\end{definition}

\begin{proposition}\label{prop.inf}
The following rule and formulas are derivable in {\bf S}.
\vspace{-2mm}

\noindent 
\begin{minipage}{.49\textwidth}
\begin{gather}
\frac{\ A\quad A{\to}B\ }{B} \tag{MP2} \label{MP2}\\
A{\supset}((A{\supset}B){\to} B) \tag{MP3} \label{MP3} \\
A\to (B\supset A) \tag{Kmix} \label{Kmix} 
\end{gather}
\end{minipage}
\begin{minipage}{.5\textwidth}
\begin{gather}
A {\supset} (B {\supset} A) \tag{K$\supset$} \label{Ksup} \\
(A {\supset} (B {\supset} C)){\supset} ((A {\supset} B)  {\supset} (A {\supset} C)) \tag{S$\supset$} \label{Ssup}\\
A\lor (A\supset B) \tag{C} \label{AxC}
\end{gather}
\end{minipage}
\end{proposition}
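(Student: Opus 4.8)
The plan is to derive each item by a short Hilbert derivation, using only instances of the axiom schemata together with \eqref{MP} and, once it is available, the derived rule \eqref{MP2}.

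I would begin with \eqref{MP2}, since it is needed throughout and the sole primitive rule is \eqref{MP} for $\supset$. Given $\Gamma\vdash A$ and $\Gamma\vdash A{\to}B$, applying \eqref{MP} to $A{\to}B$ and the instance $(A{\to}B){\supset}(A{\supset}B)$ of \eqref{AxM1} gives $\Gamma\vdash A{\supset}B$, and a further application of \eqref{MP} to $A$ and $A{\supset}B$ gives $\Gamma\vdash B$. With \eqref{MP2} established, \eqref{Ax1}--\eqref{Ax8} together with \eqref{MP2} form a standard Hilbert calculus for positive intuitionistic logic, so from now on I may freely use facts such as $\vdash C{\to}C$ for every $C$, and that $\vdash C$ yields $\vdash D{\to}C$ (via \eqref{Ax1} and \eqref{MP2}).

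Next come the ``K'' and ``S'' principles for $\supset$. For \eqref{Ssup}, observe that \eqref{AxM2} is exactly \eqref{Ssup} with its principal connective changed from $\supset$ to $\to$; hence applying \eqref{MP} to \eqref{AxM2} and the matching instance of \eqref{AxM1} yields \eqref{Ssup}. For \eqref{Kmix}, that is $A{\to}(B{\supset}A)$, I would use \eqref{AxM3}: from $\vdash A{\to}A$ one obtains $\vdash B{\to}(A{\to}A)$ and then $\vdash B{\supset}(A{\to}A)$ (by \eqref{AxM1} and \eqref{MP}), and feeding this into the instance $(B{\supset}(A{\to}A)){\to}(A{\to}(B{\supset}A))$ of \eqref{AxM3} gives \eqref{Kmix} by \eqref{MP2}. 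Then \eqref{Ksup} follows from \eqref{Kmix} via \eqref{AxM1} and \eqref{MP}, just as \eqref{Ssup} followed from \eqref{AxM2}.

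The remaining two are handled in the same style. For \eqref{MP3}, namely $A{\supset}((A{\supset}B){\to}B)$, apply \eqref{MP2} to the theorem $(A{\supset}B){\to}(A{\supset}B)$ (an instance of $C{\to}C$) and the instance $((A{\supset}B){\to}(A{\supset}B)){\to}(A{\supset}((A{\supset}B){\to}B))$ of \eqref{AxM4}. For \eqref{AxC}, namely $A\lor(A{\supset}B)$, take the instance of \eqref{AxM5} with $C:=A\lor(A{\supset}B)$, that is $((A{\supset}B){\supset}C){\to}((A{\supset}C){\to}C)$; both $(A{\supset}B){\supset}C$ and $A{\supset}C$ are theorems, since $(A{\supset}B){\to}C$ and $A{\to}C$ are instances of \eqref{Ax7} and \eqref{Ax6} which \eqref{AxM1} and \eqref{MP} promote to the corresponding $\supset$-statements, and two applications of \eqref{MP2} then deliver \eqref{AxC}. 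None of these steps is deep; the two points needing attention are establishing \eqref{MP2} before using it (as \eqref{MP} alone is primitive) and, above all, choosing the right substitution instances -- the key being that \eqref{AxM3}, and dually \eqref{AxM4}, lets one commute a $\supset$-prefix past a $\to$-prefix, which is precisely what makes the two-line derivations of \eqref{Kmix} and \eqref{MP3} work.
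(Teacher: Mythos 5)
Your derivations are correct and follow essentially the same route as the paper's own proof: \eqref{MP2} from \eqref{AxM1}, \eqref{MP3} from \eqref{AxM4} applied to $(A{\supset}B){\to}(A{\supset}B)$, \eqref{Kmix} via $B{\supset}(A{\to}A)$ and \eqref{AxM3}, \eqref{Ksup}/\eqref{Ssup} by promoting \eqref{Kmix}/\eqref{AxM2} with \eqref{AxM1}, and \eqref{AxC} from \eqref{AxM5} with the two $\supset$-forms of \eqref{Ax6} and \eqref{Ax7}. Nothing further is needed.
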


\begin{proof}
\eqref{MP2} follows from \eqref{AxM1} and \eqref{MP}. Then \eqref{MP3} follows from $(A{\supset} B){\to}(A{\supset} B)$, \eqref{AxM4} and \eqref{MP2}. 
For \eqref{Kmix}, first we deduce $B{\supset}(A{\to} A)$ from $B{\to}(A{\to} A)$, \eqref{AxM1} and \eqref{MP}; then use \eqref{AxM3} and \eqref{MP2}. \eqref{Ksup} and \eqref{Ssup} follow from \eqref{AxM1}, \eqref{MP} and \eqref{Kmix} or \eqref{AxM2}, respectively. Finally, \eqref{AxC} is derived from \eqref{AxM5} with $(A{\supset} B){\supset}(A{\lor} (A{\supset} B))$ and $A{\supset} (A{\lor}(A{\supset} B))$, which follow from \eqref{AxM1}. 
\end{proof}

\eqref{Ksup}, \eqref{Ssup} and \eqref{AxC} show that $\supset$ does represent classical implication. Moreover, given that we have \eqref{Ksup} and \eqref{Ssup}, and that \eqref{MP} is the only rule of inference, we obtain the following Deduction theorem.

\begin{proposition}[Deduction theorem]\label{prop:DT}
For all $\Gamma \cup \{ A , B \} \subseteq \mathsf{Form}_\supset$, $\Gamma , A\vdash B$ iff $\Gamma \vdash A{\supset} B$.
\end{proposition}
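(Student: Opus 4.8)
The plan is to prove the two directions of the biconditional separately, with the easy direction first and the genuinely inductive direction second.

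For the right-to-left direction, suppose $\Gamma \vdash A \supset B$. Then in particular $\Gamma, A \vdash A \supset B$ by weakening the derivation (any derivation from $\Gamma$ is also a derivation from $\Gamma \cup \{A\}$), and also $\Gamma, A \vdash A$ since $A$ is now a premise. Applying \eqref{MP} gives $\Gamma, A \vdash B$. This direction uses nothing beyond the definition of $\vdash$.

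For the left-to-right direction, I would argue by induction on the length of a derivation witnessing $\Gamma, A \vdash B$. The claim to be maintained is: for every formula $C$ occurring in the derivation, $\Gamma \vdash A \supset C$. There are the usual cases. If $C$ is an axiom of \textbf{S} or a member of $\Gamma$, then $\vdash C$ or $\Gamma \vdash C$, and composing with \eqref{Ksup} (which gives $C \supset (A \supset C)$) and \eqref{MP} yields $\Gamma \vdash A \supset C$. If $C$ is $A$ itself, then $\Gamma \vdash A \supset A$, which follows since $A \to A$ is an \textbf{S}-theorem and \eqref{AxM1} together with \eqref{MP} upgrades it to $A \supset A$. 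The critical case is when $C$ is obtained by \eqref{MP} from earlier formulas $D$ and $D \supset C$; by the induction hypothesis $\Gamma \vdash A \supset D$ and $\Gamma \vdash A \supset (D \supset C)$, and here I invoke \eqref{Ssup}, namely $(A \supset (D \supset C)) \supset ((A \supset D) \supset (A \supset C))$, applying \eqref{MP} twice to conclude $\Gamma \vdash A \supset C$. Taking $C = B$ at the end of the derivation finishes the argument.

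The main obstacle, such as it is, is purely bookkeeping: one must be careful that \eqref{MP} is the \emph{only} rule of inference (so there is no case for a $\to$-elimination or any other rule to handle), and that \eqref{Ksup} and \eqref{Ssup} are available as \emph{theorem schemata} of \textbf{S} — both of which are exactly what Proposition~\ref{prop.inf} secures. Given those, the induction is the standard Deduction Theorem argument and there is no real difficulty; the only thing worth stating explicitly is the weakening fact (a derivation from $\Gamma$ is a derivation from any superset of $\Gamma$), which is immediate from clause (i) in the definition of $\vdash$.
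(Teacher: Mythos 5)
Your proof is correct and is exactly the standard Hilbert-style argument the paper alludes to when it remarks that the Deduction Theorem follows from having \eqref{Ksup}, \eqref{Ssup}, and \eqref{MP} as the sole rule of inference; the paper gives no further details. All the ingredients you use (weakening, \eqref{Ksup}, \eqref{Ssup}, $A\supset A$ via \eqref{AxM1}) are available as you claim, so nothing is missing.
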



\begin{remark}
Given that we are interested in {\bf (Q2)} of the introduction, namely how classicists can capture intuitionistic logic, the consequence relation is the one for classicists, and the above result shows that the consequence relation is nicely internalized by the classical conditional $\supset$, as it should be the case.
\end{remark}

\begin{remark}
Note that the deduction theorem with respect to $\to$ fails. Indeed, suppose we have the theorem. Then, by applying the deduction theorem twice, we obtain that $\vdash (A{\supset} B){\to} (A{\to} B)$. However, in view of the three-valued model of Lemma~\ref{lem:Not@def} and the soundness of {\bf S} with respect to it, we obtain that $\not\vdash (A{\supset} B){\to} (A{\to} B)$.
\end{remark}

We conclude this section with corollaries of the deduction theorem that will turn out to be crucial for the completeness theorem.

\begin{lemma}\label{lem.trans}
The following rule is derivable in {\bf S}.
\begin{gather}
\frac{\ A{\supset} B\quad B{\supset} C\ }{A\supset C} \tag{Trans} \label{Trans}
\end{gather}
\end{lemma}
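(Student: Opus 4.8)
The plan is to obtain \eqref{Trans} as an immediate corollary of the Deduction theorem (Proposition~\ref{prop:DT}), reading ``derivable rule'' in the standard way: for any set $\Gamma$, if $\Gamma\vdash A{\supset}B$ and $\Gamma\vdash B{\supset}C$, then $\Gamma\vdash A{\supset}C$. So fix such a $\Gamma$ together with the two hypotheses.

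First I would pass to the context $\Gamma\cup\{A\}$: by monotonicity of $\vdash$ (which is immediate from the definition of derivations, since any derivation from $\Gamma$ is also one from $\Gamma\cup\{A\}$), both $\Gamma,A\vdash A{\supset}B$ and $\Gamma,A\vdash B{\supset}C$ hold, and of course $\Gamma,A\vdash A$ since $A$ is in the context. Next I would apply \eqref{MP} twice inside this context: from $A$ and $A{\supset}B$ we get $\Gamma,A\vdash B$, and then from $B$ and $B{\supset}C$ we get $\Gamma,A\vdash C$. Finally, applying the left-to-right direction of Proposition~\ref{prop:DT} to $\Gamma,A\vdash C$ yields $\Gamma\vdash A{\supset}C$, which is exactly the conclusion of \eqref{Trans}.

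I do not expect any genuine obstacle here; the whole point is that \eqref{Ksup} and \eqref{Ssup}, established in Proposition~\ref{prop.inf}, make $\supset$ behave like a Hilbert-style implication with \eqref{MP}, so the Deduction theorem is available and the argument is the usual ``hypothetical syllogism from the deduction theorem'' one. The only point requiring a little care is simply stating precisely what it means for a rule to be derivable (quantifying over the side context $\Gamma$), after which the three steps above — monotonicity, two uses of \eqref{MP}, one use of Proposition~\ref{prop:DT} — complete the proof. (If a bare axiomatic derivation of the single theorem $(A{\supset}B){\supset}((B{\supset}C){\supset}(A{\supset}C))$ were wanted instead, one could unwind the proof of the Deduction theorem into explicit uses of \eqref{Ksup} and \eqref{Ssup}, but the corollary formulation above is cleaner and suffices for the completeness argument that follows.)
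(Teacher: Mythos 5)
Your proof is correct and follows exactly the route the paper intends: the lemma is explicitly introduced there as a corollary of the Deduction theorem (Proposition~\ref{prop:DT}), and your argument---monotonicity, two applications of \eqref{MP} in the context $\Gamma\cup\{A\}$, then one application of the Deduction theorem---is the standard and intended derivation.
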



\begin{proposition}\label{prop.disj}
For all $\Gamma \cup \{ A , B , C \} \subseteq \mathsf{Form}_\supset$, if $\Gamma,A\vdash C$ and $\Gamma,B\vdash C$ then $\Gamma,A\lor B\vdash C$.
\end{proposition}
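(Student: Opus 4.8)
The plan is to derive the "$\lor$-elimination at the level of the consequence relation" from the Deduction theorem (Proposition~\ref{prop:DT}) together with the axioms governing $\supset$, in particular \eqref{AxM6}, which is exactly the $\supset$-analogue of (Ax8). First I would apply the Deduction theorem to the two hypotheses: from $\Gamma,A\vdash C$ I obtain $\Gamma\vdash A{\supset}C$, and from $\Gamma,B\vdash C$ I obtain $\Gamma\vdash B{\supset}C$. The goal, again by the Deduction theorem, is equivalent to showing $\Gamma\vdash (A{\lor}B){\supset}C$, so it suffices to show that from $A{\supset}C$ and $B{\supset}C$ one can derive $(A{\lor}B){\supset}C$ in \textbf{S}.

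This last step is immediate: \eqref{AxM6} is the schema $(A{\supset}C){\to}((B{\supset}C){\to}((A{\lor}B){\supset}C))$, so two applications of \eqref{MP2} (derivable by Proposition~\ref{prop.inf}) to the premisses $A{\supset}C$ and $B{\supset}C$ yield $(A{\lor}B){\supset}C$. Chaining this with the two instances of the Deduction theorem above, and then one more application of the Deduction theorem in the reverse direction, gives $\Gamma,A{\lor}B\vdash C$ as desired. Concretely: $\Gamma\vdash A{\supset}C$ and $\Gamma\vdash B{\supset}C$, hence $\Gamma\vdash(A{\lor}B){\supset}C$ using \eqref{AxM6} and \eqref{MP2} twice, hence $\Gamma,A{\lor}B\vdash C$ by Proposition~\ref{prop:DT}.

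I do not expect any genuine obstacle here; the proof is a routine bookkeeping argument once the right axiom is identified. The only point requiring a little care is making sure the direction of the Deduction theorem is used correctly at each stage (it is stated as a biconditional in Proposition~\ref{prop:DT}, so both directions are available), and that the mixed rule \eqref{MP2} — rather than \eqref{MP} — is what applies to the $\to$-formula \eqref{AxM6}. One could alternatively avoid invoking the Deduction theorem for the final contraction step and instead argue directly that $\Gamma\cup\{A{\lor}B\}\vdash C$ by noting $A{\lor}B\vdash A{\lor}B$ and $(A{\lor}B){\supset}C\in$ the deductive closure of $\Gamma$, then applying \eqref{MP}; but using Proposition~\ref{prop:DT} uniformly is cleaner.
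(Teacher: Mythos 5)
Your proof is correct and is exactly the argument the paper intends: the proposition is introduced as a corollary of the Deduction theorem, and the route via $\Gamma\vdash A\supset C$, $\Gamma\vdash B\supset C$, the axiom \eqref{AxM6} with two applications of \eqref{MP2}, and one reverse application of Proposition~\ref{prop:DT} is the standard derivation. No gaps.
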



\section{Soundness and completeness}\label{sec:soundnesscompleteness}

For the present system, soundness is straightforward by induction on the length of the proof.

\begin{theorem}[Soundness]
For $\Gamma \cup \{ A \}  \subseteq \mathsf{Form}_\supset$, if $\Gamma\vdash A$ then $\Gamma\models A$.
\end{theorem}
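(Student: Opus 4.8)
The plan is to prove soundness by induction on the length of a derivation of $A$ from $\Gamma$, showing that if every $B\in\Gamma$ satisfies $I(g,B)=1$ in an arbitrary \textbf{S}-model, then $I(g,A)=1$. The base cases split into two: if $A\in\Gamma$ this is immediate, and if $A$ is an axiom instance I need to verify that each of the thirteen axiom schemata is valid at the base state $g$ (indeed at every state, which is what the inductive clauses will actually deliver). The inductive step concerns \eqref{MP}: if $I(g,A)=1$ and $I(g,A{\supset}B)=1$, then by the truth condition for $\supset$ the first conjunct $I(g,A)\neq 1$ fails, so we must have $I(g,B)=1$. That step is trivial; the content is entirely in checking the axioms.

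For the axioms I would organise the verification as follows. The intuitionistic schemata \eqref{Ax1}--\eqref{Ax8} are validated exactly as in the standard Kripke soundness proof for positive intuitionistic logic, using the persistence lemma for the $\to$-clause; these are routine. For the mixed axioms, the key observation to state up front is that, by the Remark following Definition~\ref{def.model}, $I(w,A{\supset}B)=1$ is equivalent to ``$I(g,A)=1$ implies $I(w,B)=1$'', and one may freely move between evaluating the antecedent of a $\supset$ at $g$ and at any $w$. With this, \eqref{AxM2} and \eqref{AxM6} are the $\supset$-analogues of the familiar $\mathbf{K}$- and disjunction-elimination reasoning carried out pointwise at each state, using persistence where a $\to$ appears; \eqref{AxM1} reduces to the trivial fact that if for all $x\geq w$ ($I(x,A)=1\Rightarrow I(x,B)=1$) and $I(g,A)=1$ then (taking $x=w$, legitimate since $g\leq w$ and using persistence to push $I(g,A)=1$ up to $w$) $I(w,B)=1$. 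Axioms \eqref{AxM3} and \eqref{AxM4} are the ``permutation'' principles: for \eqref{AxM3}, unfolding the outer $\to$ at a state $w$ and then the inner implications, both sides come down to the condition ``for all $x\geq w$: if $I(g,A)=1$ and $I(x,B)=1$ then $I(x,C)=1$'', so the two sides have literally the same truth set; \eqref{AxM4} is symmetric. The genuinely interesting one is \eqref{AxM5}, $((A{\supset}B){\supset}C){\to}((A{\supset}C){\to}C)$: here I would argue by cases on whether $I(g,A)=1$. If $I(g,A)\neq 1$, then $I(x,A{\supset}B)=1$ at every state, so from the first hypothesis $I(w,C)=1$ follows and we are done; if $I(g,A)=1$, then $A{\supset}C$ and $A{\supset}B$ collapse (at any state $x$, $I(x,A{\supset}C)=1$ iff $I(x,C)=1$, and similarly for $B$), and the conclusion $I(w,C)=1$ follows from the second hypothesis $I(w,A{\supset}C)=1$.

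Finally I would assemble the induction: having shown each $B_i$ in the derivation sequence satisfies $I(g,B_i)=1$ (axioms and members of $\Gamma$ directly, \eqref{MP}-conclusions from earlier members), the last formula $A$ satisfies $I(g,A)=1$; since the model was arbitrary, $\Gamma\models A$. I expect the main obstacle—such as it is—to be purely bookkeeping: keeping straight, in the mixed axioms, which subformulas have their $\supset$-antecedents evaluated at $g$ versus at the current state, and invoking persistence at exactly the right places for the embedded $\to$-subformulas. No single step is hard, but \eqref{AxM5} is the one most worth writing out carefully, and it is prudent to prove the stronger statement that every axiom is valid at \emph{every} state (not just at $g$), since the $\to$- and $\supset$-clauses quantify over states and the weaker ``valid at $g$'' statement would not be preserved by the inductive clauses.
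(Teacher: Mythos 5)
Your proposal is correct and takes essentially the same approach as the paper, which simply asserts that soundness is ``straightforward by induction on the length of the proof'' and omits all details. Your verification of the axioms (in particular the case split on whether $I(g,A)=1$ for \eqref{AxM5}) correctly fills in that induction.
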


For completeness, we first introduce some notions due to Greg Restall in \cite{Restall1994}.

\medskip

\noindent \begin{minipage}{0.48\textwidth}
(i) $\Sigma\vdash_{\Pi} A$ iff $\Sigma\cup\Pi\vdash A$.

(ii) $\Sigma$ is a $\Pi$-\emph{theory} iff:
\begin{itemize}
\setlength{\parskip}{0cm}
\setlength{\itemsep}{0cm}
\item[](a) if $A,B\in\Sigma$ then $A\land B\in\Sigma$.
\item[](b) if $\vdash_{\Pi}A\to B$ then (if $A\in\Sigma$ then $B\in\Sigma$).
\end{itemize}
(iii) $\Sigma$ is \emph{prime} iff $A\in\Sigma$ or $B\in\Sigma$ if $A\lor B\in\Sigma$.

(iv) $\Sigma\vdash_{\Pi}\Delta$ iff $\Sigma\vdash_{\Pi}D_{1}\lor \cdots\lor D_{n}$ 

\qquad \qquad \qquad \qquad for some $D_{1},\ldots,D_{n}\in\Delta$. 
\end{minipage}
\begin{minipage}{0.5\textwidth}
(v) $\vdash_{\Pi}\Sigma\to\Delta$ iff $\vdash_{\Pi} C_{1}{\land}\cdots{\land} C_{n}{\to} D_{1}{\lor}\cdots{\lor} D_{m}$ 

\qquad for some $C_{1},\ldots,C_{n}\in\Sigma$ and $D_{1},\ldots,D_{m}\in\Delta$.

(vi) $\Sigma$ is $\Pi$-\emph{deductively closed} iff $A\in\Sigma$ if $\Sigma\vdash_{\Pi}A$.

(vii) $\langle\Sigma,\Delta\rangle$ is a $\Pi$-\emph{partition} iff:
\begin{itemize}
\setlength{\parskip}{0cm}
\setlength{\itemsep}{0cm}
      \item[](a) $\Sigma\cup\Delta=\mathsf{Form}_\supset$
      \item[](b) $\nvdash_{\Pi}\Sigma\to\Delta$
  \end{itemize}
(viii) $\Sigma$ is \emph{non-trivial} iff $A\notin \Sigma$ for some formula $A$.
\end{minipage}

\smallskip

\begin{lemma}\label{lem.comp1}
If $\Gamma$ is a non-empty $\Pi$-theory, then $\Pi\subseteq\Gamma$.
\end{lemma}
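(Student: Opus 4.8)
The plan is to show that the non-emptiness of a $\Pi$-theory $\Gamma$ already forces every formula of $\Pi$ into $\Gamma$, using the mixed axioms that make $\supset$ behave classically together with the closure condition (b) in the definition of a $\Pi$-theory. First I would take any $C\in\Gamma$, which exists since $\Gamma$ is non-empty, and fix an arbitrary $D\in\Pi$. The goal is to produce an intuitionistic implication $C\to D$ that is $\Pi$-provable, so that condition (b) applies and yields $D\in\Gamma$.

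The key observation is that $\vdash_\Pi C\to D$: since $D\in\Pi$, we have $\Pi\vdash D$, hence $\Pi\vdash C\supset D$ by \eqref{Ksup} and \eqref{MP} (or directly by the Deduction theorem, Proposition~\ref{prop:DT}, applied to $\Pi, C\vdash D$). This gives a $\supset$-implication, but condition (b) of a $\Pi$-theory is phrased in terms of $\to$, not $\supset$. To bridge the gap I would use \eqref{Kmix}, i.e. $A\to(B\supset A)$: the point is that one wants to turn ``$\vdash_\Pi C\supset D$'' into ``$\vdash_\Pi C\to D$''. Actually the cleanest route is: from $\Pi\vdash D$ and \eqref{Kmix} (instantiated so that $D$ plays the role of the antecedent and $C$ that of $B$), we get $\vdash D\to(C\supset D)$; but that still lands on $\supset$. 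So instead I would argue directly that $\vdash_\Pi C\to D$ holds by the following: $\Pi\vdash D$, and by \eqref{Ax1} we have $\vdash D\to(C\to D)$, hence $\Pi\vdash C\to D$ by \eqref{MP2}. Then condition (b), with $A:=C$ and $B:=D$, together with $C\in\Gamma$, delivers $D\in\Gamma$. Since $D\in\Pi$ was arbitrary, $\Pi\subseteq\Gamma$.

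The main thing to be careful about is not a deep obstacle but a matching-of-definitions point: condition (b) of a $\Pi$-theory quantifies over $\to$-implications provable from $\Pi$, and the content $D\in\Pi\Rightarrow\Pi\vdash D$ is immediate from the definition of $\vdash_\Pi$, while $\Pi\vdash D\Rightarrow\Pi\vdash C\to D$ is just weakening in the antecedent, available via \eqref{Ax1} and \eqref{MP2} (both established in Proposition~\ref{prop.inf}). No use of primeness, deductive closure, or partition-hood is needed. I would therefore present the proof in three short lines: pick $C\in\Gamma$; for each $D\in\Pi$ note $\vdash_\Pi C\to D$; conclude $D\in\Gamma$ by clause (b).
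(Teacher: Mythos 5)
Your argument is correct and is essentially identical to the paper's own proof: both take an arbitrary $D\in\Pi$ and some $C\in\Gamma$, use \eqref{Ax1} (with \eqref{MP2}) to get $\vdash_\Pi C\to D$, and then invoke clause (b) of the definition of a $\Pi$-theory to conclude $D\in\Gamma$. The detours through \eqref{Ksup} and \eqref{Kmix} in your write-up are unnecessary, as you yourself note, and can simply be omitted.
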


\begin{proof}
Take $A\in\Pi$. Then, we have $\Pi\vdash A$. Since $\Gamma$ is non-empty,
take any $C\in\Gamma$. Then, by \eqref{Ax1}, we obtain $\Pi\vdash C\to A$. Thus, combining this together with $C\in\Gamma$ and that $\Gamma$ is $\Pi$-theory, we obtain $A\in\Gamma$.
\end{proof}

We refer to \cite{DeOmori2014,DeOmori2016,NikiOmoriAiMLsubmitted,Restall1994} for the details of the next lemmas. 

\begin{lemma}\label{lem.comp2}
If $\langle\Sigma,\Delta\rangle$ is a $\Pi$-partition then $\Sigma$ is a prime $\Pi$-theory.
\end{lemma}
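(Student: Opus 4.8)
The plan is to exploit the single prohibition built into a $\Pi$-partition, namely clause (b): $\nvdash_{\Pi}\Sigma\to\Delta$. Combined with clause (a), $\Sigma\cup\Delta=\mathsf{Form}_\supset$, this says that whenever a formula fails to lie in $\Sigma$ it must lie in $\Delta$, and conversely that there can be no provable $C_{1}\Land\cdots\Land C_{n}\to D_{1}\Lor\cdots\Lor D_{m}$ with all $C_{i}\in\Sigma$ and all $D_{j}\in\Delta$. Each of the three properties defining a prime $\Pi$-theory will then be obtained by contraposition: assume the property fails, use (a) to place the offending formula in $\Delta$, and exhibit a $\vdash_\Pi$-provable implication of the forbidden shape, contradicting (b).

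First I would record the auxiliary observation that $\Sigma\cap\Delta=\emptyset$: if some $A$ were in both, then $\vdash A\to A$ (an instance of $C\to C$, derivable from \eqref{Ax1} and \eqref{Ax2}), hence $\vdash_{\Pi}A\to A$, which already witnesses $\vdash_{\Pi}\Sigma\to\Delta$ — impossible. Then, for closure under conjunction, suppose $A,B\in\Sigma$ but $A\Land B\notin\Sigma$; by (a) we get $A\Land B\in\Delta$, and $\vdash_{\Pi}(A\Land B)\to(A\Land B)$ is a witness with $C_1=A,C_2=B\in\Sigma$ and $D_1=A\Land B\in\Delta$, contradicting (b). For the $\Pi$-theory closure condition, suppose $\vdash_{\Pi}A\to B$ and $A\in\Sigma$ but $B\notin\Sigma$; then $B\in\Delta$ by (a), and $\vdash_{\Pi}A\to B$ is itself a forbidden witness. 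For primeness, suppose $A\Lor B\in\Sigma$ while $A\notin\Sigma$ and $B\notin\Sigma$; then $A,B\in\Delta$ by (a), and $\vdash_{\Pi}(A\Lor B)\to(A\Lor B)$ is a witness with $C_1=A\Lor B\in\Sigma$ and $D_1=A,D_2=B\in\Delta$, again contradicting (b).

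I do not expect a genuine obstacle here; the argument is essentially bookkeeping around the definition of $\vdash_{\Pi}\Sigma\to\Delta$. The only points that need a word of care are (i) checking that the implications invoked ($A\to A$, $(A\Land B)\to(A\Land B)$, $(A\Lor B)\to(A\Lor B)$) are genuinely theorems of $\mathbf{S}$, which is immediate since they are instances of the derivable $C\to C$; and (ii) being mindful of whatever convention governs empty conjunctions or disjunctions in clause (v) — but this is moot, since in every case above we supply explicit nonempty lists of formulas from $\Sigma$ and from $\Delta$.
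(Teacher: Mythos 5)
Your argument is correct and is essentially the standard one that the paper defers to its references for: in each case you place the offending formula in $\Delta$ via condition (a) and exhibit a provable implication of the shape forbidden by condition (b), using only the derivable theorem $C\to C$. The preliminary observation that $\Sigma\cap\Delta=\emptyset$ is harmless but not actually needed, since each of the three witnesses is produced directly.
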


\begin{lemma}\label{lem.comp3}
If $\nvdash_{\Pi}\Sigma\to\Delta$ then there are $\Sigma'\supseteq\Sigma$ and $\Delta'\supseteq\Delta$ such that $\langle\Sigma',\Delta'\rangle$ is a $\Pi$-partition.
\end{lemma}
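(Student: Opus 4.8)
The plan is to prove Lemma~\ref{lem.comp3} by the standard Lindenbaum-style construction, extending the pair $\langle\Sigma,\Delta\rangle$ to a $\Pi$-partition while preserving the key property $\nvdash_{\Pi}\Sigma\to\Delta$. First I would fix an enumeration $A_{0}, A_{1}, A_{2}, \dots$ of all formulas in $\mathsf{Form}_\supset$ (this set is countable since $\mathsf{Prop}$ is countable). Starting from $\Sigma_{0}=\Sigma$ and $\Delta_{0}=\Delta$, I would define $\Sigma_{n+1}$ and $\Delta_{n+1}$ by stages: at stage $n+1$, look at $A_{n}$; if $\nvdash_{\Pi}(\Sigma_{n}\cup\{A_{n}\})\to\Delta_{n}$, then put $\Sigma_{n+1}=\Sigma_{n}\cup\{A_{n}\}$ and $\Delta_{n+1}=\Delta_{n}$; otherwise put $\Sigma_{n+1}=\Sigma_{n}$ and $\Delta_{n+1}=\Delta_{n}\cup\{A_{n}\}$. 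Finally set $\Sigma'=\bigcup_{n}\Sigma_{n}$ and $\Delta'=\bigcup_{n}\Delta_{n}$. By construction every formula ends up in exactly one of $\Sigma'$, $\Delta'$, so condition (a) of a $\Pi$-partition, $\Sigma'\cup\Delta'=\mathsf{Form}_\supset$, holds immediately (and the two are disjoint as a bonus).

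The heart of the argument is to verify (b), namely $\nvdash_{\Pi}\Sigma'\to\Delta'$. I would prove by induction on $n$ that $\nvdash_{\Pi}\Sigma_{n}\to\Delta_{n}$ for every $n$. The base case $n=0$ is the hypothesis. For the inductive step, the case where $A_{n}$ is added to $\Sigma_{n}$ is immediate from the defining case split. The case where $A_{n}$ is added to $\Delta_{n}$ is where the work lies: assuming (for contradiction) that $\vdash_{\Pi}\Sigma_{n}\to(\Delta_{n}\cup\{A_{n}\})$ together with the fact that the ``otherwise'' branch was triggered, i.e. $\vdash_{\Pi}(\Sigma_{n}\cup\{A_{n}\})\to\Delta_{n}$, I would derive $\vdash_{\Pi}\Sigma_{n}\to\Delta_{n}$, contradicting the induction hypothesis. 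Concretely, unfolding definition (v), the first assumption gives finite $\vec C\subseteq\Sigma_{n}$, finite $\vec D\subseteq\Delta_{n}$ with $\vdash_{\Pi}\bigwedge\vec C\to(\bigvee\vec D\lor A_{n})$, and the second gives finite $\vec C'\subseteq\Sigma_{n}$, finite $\vec D'\subseteq\Delta_{n}$ with $\vdash_{\Pi}(\bigwedge\vec C'\land A_{n})\to\bigvee\vec D'$; combining $\vec C$ and $\vec C'$ into $\vec C''$ and $\vec D,\vec D'$ into $\vec D''$, a case analysis on the disjunction $\bigvee\vec D\lor A_{n}$ (using \eqref{Ax8} together with Proposition~\ref{prop.disj} and the Deduction theorem, i.e.\ the positive-intuitionistic reasoning available for $\to$) yields $\vdash_{\Pi}\bigwedge\vec C''\to\bigvee\vec D''$, hence $\vdash_{\Pi}\Sigma_{n}\to\Delta_{n}$. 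Since $\nvdash_{\Pi}\Sigma_{n}\to\Delta_{n}$ for all $n$ and any purported derivation $\vdash_{\Pi}\Sigma'\to\Delta'$ would only use finitely many formulas from $\Sigma'$ and $\Delta'$, hence be contained in some $\Sigma_{n}\to\Delta_{n}$, we conclude $\nvdash_{\Pi}\Sigma'\to\Delta'$.

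The main obstacle I anticipate is the propositional bookkeeping in the inductive step just described, specifically justifying the ``reasoning by cases'' move $\vdash_{\Pi}\bigwedge\vec C''\to\bigvee\vec D''$ purely within the positive fragment with $\to$. This is exactly the point where Proposition~\ref{prop.disj} (closure of $\Gamma$-consequence under disjunction in the antecedent) is needed, together with the fact that conjunctions and disjunctions behave well under $\to$ in positive intuitionistic logic; one has to be slightly careful that all the manipulations stay inside $\vdash$ and do not secretly invoke $\supset$-specific principles, though in fact the argument is the familiar one and $\supset$ plays no role here. A secondary (routine) point is the finiteness/compactness observation at the end, which is immediate from the definition of $\vdash$ as a finite sequence of formulas. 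No genuinely new idea beyond the classical Lindenbaum argument adapted to Restall's $\Pi$-partition setting is required.
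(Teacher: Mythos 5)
Your proof is correct and is exactly the standard pair-extension (Lindenbaum) argument that the paper defers to via its references to Restall and De--Omori: enumerate, extend one side at a time, and show that putting $A_n$ on neither side is impossible, the key inductive step going through in positive intuitionistic logic using \eqref{Ax8} together with exportation and contraction for $\to$ (all derivable from \eqref{Ax1}--\eqref{Ax8}). One small caution: Proposition~\ref{prop.disj} and the Deduction theorem are not really the right tools for that step, since the Deduction theorem internalizes $\vdash$ via $\supset$ and $(A\supset B)\to(A\to B)$ is unprovable, so the case analysis must be carried out directly with \eqref{Ax8} at the level of $\to$-implications --- which your explicit appeal to \eqref{Ax8} already accomplishes.
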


\begin{corollary}\label{cor.comp1}
Let $\Sigma$ be a non-empty $\Pi$-theory, $\Delta$ be closed under disjunction, and $\Sigma\cap\Delta=\emptyset$. Then there is $\Sigma'\supseteq\Sigma$ such that $\Sigma'\cap\Delta=\emptyset$ and $\Sigma'$ is a prime $\Pi$-theory.
\end{corollary}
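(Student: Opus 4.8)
The plan is to derive this corollary from Lemma~\ref{lem.comp3} (the pair-extension lemma) together with Lemma~\ref{lem.comp2}, by first massaging the hypotheses into the shape that Lemma~\ref{lem.comp3} expects, namely a non-derivability statement of the form $\nvdash_{\Pi}\Sigma\to\Delta$.

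First I would observe that the hypotheses $\Sigma$ is a non-empty $\Pi$-theory, $\Delta$ is closed under disjunction, and $\Sigma\cap\Delta=\emptyset$ together imply $\nvdash_{\Pi}\Sigma\to\Delta$. Suppose otherwise; then by definition (v) there are $C_1,\dots,C_n\in\Sigma$ and $D_1,\dots,D_m\in\Delta$ with $\vdash_{\Pi} C_1\Land\cdots\Land C_n\to D_1\Lor\cdots\Lor D_m$. Since $\Sigma$ is a non-empty $\Pi$-theory it is closed under $\land$ by (ii)(a), so $C_1\Land\cdots\Land C_n\in\Sigma$ (using also Lemma~\ref{lem.comp1}, or directly (ii)(a), to handle the degenerate case; the non-emptiness is exactly what rules out $n=0$ causing trouble). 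Then by (ii)(b) applied to the provable implication, $D_1\Lor\cdots\Lor D_m\in\Sigma$. But $\Delta$ is closed under disjunction, so $D_1\Lor\cdots\Lor D_m\in\Delta$ as well, contradicting $\Sigma\cap\Delta=\emptyset$.

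Having established $\nvdash_{\Pi}\Sigma\to\Delta$, I would apply Lemma~\ref{lem.comp3} to obtain $\Sigma'\supseteq\Sigma$ and $\Delta'\supseteq\Delta$ with $\langle\Sigma',\Delta'\rangle$ a $\Pi$-partition. By Lemma~\ref{lem.comp2}, $\Sigma'$ is then a prime $\Pi$-theory. It remains to check $\Sigma'\cap\Delta=\emptyset$: since $\langle\Sigma',\Delta'\rangle$ is a $\Pi$-partition we have $\nvdash_{\Pi}\Sigma'\to\Delta'$, which in particular forbids any single $A\in\Sigma'\cap\Delta'$ (as that would give the trivially provable $\vdash_{\Pi} A\to A$); hence $\Sigma'\cap\Delta'=\emptyset$, and since $\Delta\subseteq\Delta'$ we conclude $\Sigma'\cap\Delta=\emptyset$.

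The only delicate point I anticipate is the bookkeeping around empty conjunctions/disjunctions in definition (v) and making sure the non-emptiness of $\Sigma$ is used in exactly the right place — this is precisely the role of Lemma~\ref{lem.comp1}, which guarantees $\Pi\subseteq\Sigma$ so that even a "$\Pi$-provable" formula sits inside $\Sigma$. Everything else is a routine unwinding of the Restall-style definitions and a direct appeal to the two preceding lemmas, so I expect no substantial obstacle.
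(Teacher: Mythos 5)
Your proof is correct and is exactly the standard argument the paper has in mind (the paper itself omits the proof of Corollary~\ref{cor.comp1}, deferring to \cite{DeOmori2014,Restall1994}): use closure of $\Sigma$ under $\land$ and condition (ii)(b), together with closure of $\Delta$ under $\lor$ and disjointness, to get $\nvdash_{\Pi}\Sigma\to\Delta$, then invoke Lemmas~\ref{lem.comp3} and~\ref{lem.comp2}, and read off $\Sigma'\cap\Delta=\emptyset$ from $\nvdash_{\Pi}\Sigma'\to\Delta'$ via $\vdash_\Pi A\to A$. The only cosmetic quibble is that Lemma~\ref{lem.comp1} is not really what handles the degenerate case; what non-emptiness buys you is that, via (ii)(b) and \eqref{Ax1}, a non-empty $\Pi$-theory already contains every $\Pi$-provable formula, which is the relevant fact if one reads definition (v) as permitting an empty antecedent list.
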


Note that we make use of \eqref{Trans} and Proposition~\ref{prop.disj} for the next lemma.

\begin{lemma}\label{lem.comp4}
If $\Sigma\nvdash\Delta$ then there are $\Sigma'\supseteq\Sigma$ and $\Delta'\supseteq\Delta$ such that $\langle\Sigma',\Delta'\rangle$ is a partition, and $\Sigma'$ is deductively closed.
\end{lemma}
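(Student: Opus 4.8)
The plan is to prove Lemma~\ref{lem.comp4} by first constructing a prime deductively-closed theory and then partitioning. Since $\Sigma \nvdash \Delta$, by definition~(iv) there is no finite disjunction $D_1 \lor \cdots \lor D_n$ of members of $\Delta$ with $\Sigma \vdash D_1 \lor \cdots \lor D_n$. First I would pass to the deductive closure $\Sigma_0 := \{ A : \Sigma \vdash A \}$; this is a $\emptyset$-theory (closure under $\land$ comes from \eqref{Ax5} together with \eqref{MP2}, and closure under provable $\to$-consequence is immediate), it contains $\Sigma$, and it is still disjoint from the disjunctive closure of $\Delta$ — otherwise $\Sigma_0 \vdash D_1\lor\cdots\lor D_n$ would give $\Sigma \vdash D_1\lor\cdots\lor D_n$, contradicting $\Sigma\nvdash\Delta$. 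Let $\Delta^\lor$ be the closure of $\Delta$ under (finite) disjunction; it is closed under disjunction by construction, and $\Sigma_0 \cap \Delta^\lor = \emptyset$.

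Next I would apply Corollary~\ref{cor.comp1} to $\Sigma_0$, $\Delta^\lor$ and $\Pi = \emptyset$: since $\Sigma_0$ is a non-empty $\emptyset$-theory (non-empty as it contains all theorems, e.g.\ $p\to p$), $\Delta^\lor$ is closed under disjunction, and $\Sigma_0\cap\Delta^\lor=\emptyset$, there is a prime $\emptyset$-theory $\Sigma' \supseteq \Sigma_0 \supseteq \Sigma$ with $\Sigma'\cap\Delta^\lor=\emptyset$, hence $\Sigma'\cap\Delta=\emptyset$. It remains to check that $\Sigma'$ is deductively closed (not merely a $\emptyset$-theory) and to produce the matching $\Delta'$. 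For deductive closure: since $\Sigma'$ is a prime $\emptyset$-theory, if $\Sigma'\vdash A$ then by the Deduction theorem (Proposition~\ref{prop:DT}) applied finitely many times, $\vdash C_1\supset(\cdots\supset(C_n\supset A)\cdots)$ for some $C_1,\dots,C_n\in\Sigma'$; then repeated use of \eqref{MP2} via \eqref{MP3} — i.e.\ the fact that $A, A\supset B \vdash B$ is \eqref{MP} and $C\supset A$ is consumed by the $\emptyset$-theory/primeness machinery — pushes $A$ into $\Sigma'$. More carefully: from $\Sigma'\vdash A$ and compactness of $\vdash$ we get $C_1\land\cdots\land C_n \vdash A$ with $C_i\in\Sigma'$, so $\vdash (C_1\land\cdots\land C_n)\supset A$ by Proposition~\ref{prop:DT}; since $\Sigma'$ is closed under $\land$ we have $C_1\land\cdots\land C_n\in\Sigma'$, and then \eqref{MP3} plus \eqref{MP} (or directly \eqref{AxM1} turning the $\to$-theorem route) yields $A\in\Sigma'$, using that $\Sigma'$ being a $\emptyset$-theory is closed under provable $\to$-steps and \eqref{AxM1} bridges $\to$ to $\supset$.

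Finally, set $\Delta' := \mathsf{Form}_\supset \setminus \Sigma'$. Then trivially $\Delta'\supseteq\Delta$ (as $\Sigma'\cap\Delta=\emptyset$) and $\Sigma'\cup\Delta' = \mathsf{Form}_\supset$, giving clause~(a) of a partition. For clause~(b), $\nvdash\Sigma'\to\Delta'$: suppose $\vdash C_1\land\cdots\land C_n \to D_1\lor\cdots\lor D_m$ with $C_i\in\Sigma'$ and $D_j\in\Delta'$. Since $\Sigma'$ is a deductively closed $\emptyset$-theory and closed under $\land$, it contains $C_1\land\cdots\land C_n$, hence by the $\to$-theory property it contains $D_1\lor\cdots\lor D_m$, hence by primeness some $D_j\in\Sigma'$ — contradicting $D_j\in\Delta'=\mathsf{Form}_\supset\setminus\Sigma'$. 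So $\langle\Sigma',\Delta'\rangle$ is a partition with $\Sigma'$ deductively closed, as required.

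The main obstacle I anticipate is the bookkeeping in the deductive-closure step: one must be careful that the closure is genuinely under $\vdash$ (which allows $\supset$-\eqref{MP}, not just $\to$-steps), and this is exactly where \eqref{Trans} and Proposition~\ref{prop.disj} — flagged in the note before the lemma — are needed, namely to guarantee that the disjunctive closure $\Delta^\lor$ stays disjoint from $\Sigma_0$ when one reasons through $\supset$-consequences and case splits, and that Corollary~\ref{cor.comp1} can legitimately be invoked. Everything else is routine assembly of the Restall-style lemmas already in hand.
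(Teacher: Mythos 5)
There is a genuine gap at the step where you claim that the prime $\emptyset$-theory $\Sigma'$ delivered by Corollary~\ref{cor.comp1} is deductively closed. Being a $\Pi$-theory only gives closure under conjunction and under \emph{provable $\to$-implications}, whereas deductive closure requires closure under $\vdash$, whose sole rule is \eqref{MP} for $\supset$. The deduction theorem (Proposition~\ref{prop:DT}) turns $\Sigma'\vdash A$ into $\vdash (C_1\land\cdots\land C_n)\supset A$, a $\supset$-theorem, and there is no route back to a $\to$-theorem: $(A\supset B)\to(A\to B)$ is unprovable (see the remark following Proposition~\ref{prop:DT}), \eqref{AxM1} bridges the two conditionals in the wrong direction, and \eqref{MP3} only detaches its outer $\supset$ when the antecedent is a \emph{theorem}, not merely a member of $\Sigma'$. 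Concretely, in the three-valued model of Lemma~\ref{lem:Not@def}, taking $I(p)=\mathbf{i}$ and $I(q)=\mathbf{0}$ gives $I(p\supset q)=\mathbf{1}$, $I(p\land(p\supset q))=\mathbf{i}$ and hence $I((p\land(p\supset q))\to q)=\mathbf{0}$, so $(p\land(p\supset q))\to q$ is not provable. Therefore $\nvdash\{p,p\supset q\}\to\{q\}$, and Lemma~\ref{lem.comp3} produces a partition whose left component is a prime $\emptyset$-theory containing $p$ and $p\supset q$ but not $q$; it satisfies $\Sigma'\vdash q$ while $q\notin\Sigma'$. So deductive closure cannot be recovered after the fact from the output of Corollary~\ref{cor.comp1}, and your ``bookkeeping'' paragraph does not repair this.

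The property has to be built into the extension itself, and this is where Proposition~\ref{prop.disj} is actually used (not, as you suggest, to keep $\Delta^{\lor}$ disjoint from $\Sigma_0$ --- that part of your argument is fine and follows directly from the definition of $\Sigma\nvdash\Delta$). Enumerate $\mathsf{Form}_\supset$ as $A_1,A_2,\dots$ and extend $\langle\Sigma,\Delta\rangle$ stepwise, putting $A_{n+1}$ into $\Sigma_{n+1}$ if $\Sigma_n\cup\{A_{n+1}\}\nvdash\Delta_n$ and into $\Delta_{n+1}$ otherwise; Proposition~\ref{prop.disj}, together with \eqref{Trans} and the disjunction axioms, is what shows that the invariant $\Sigma_n\nvdash\Delta_n$ survives the second case. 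For $\Sigma'=\bigcup_n\Sigma_n$ and $\Delta'=\bigcup_n\Delta_n$, deductive closure is then automatic: if $\Sigma'\vdash A$ with $A\notin\Sigma'$, then $A\in\Delta'$, whence $\Sigma'\vdash\Delta'$, contradicting the invariant. Your verification of clause (b) of the partition from primeness and $\land$-closure is correct, but it and the rest of your construction are subsumed by this direct Lindenbaum argument.
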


\begin{corollary}\label{cor.comp2}
If $\Sigma\nvdash A$ then there is $\Pi\supseteq\Sigma$ such that $A\notin\Pi$, $\Pi$ is a prime $\Pi$-theory and is $\Pi$-deductively closed.
\end{corollary}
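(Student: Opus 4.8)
\textbf{Proof plan for Corollary~\ref{cor.comp2}.}
The plan is to derive this from Lemma~\ref{lem.comp4} by taking $\Delta=\{A\}$ as the right-hand side and $\Sigma$ as given, so that from $\Sigma\nvdash A$ we get $\Sigma\nvdash\Delta$ (here using that $\Sigma\vdash\Delta$ for a singleton $\Delta=\{A\}$ is just $\Sigma\vdash A$). Lemma~\ref{lem.comp4} then yields $\Sigma'\supseteq\Sigma$ and $\Delta'\supseteq\{A\}$ with $\langle\Sigma',\Delta'\rangle$ a partition and $\Sigma'$ deductively closed. Set $\Pi:=\Sigma'$. Since $A\in\Delta'$ and $\langle\Sigma',\Delta'\rangle$ is a partition, part~(b) of the partition definition gives $\nvdash_{\Pi}\Sigma'\to\Delta'$, and in particular $\Sigma'\nvdash A$ (taking the singleton conjunction/disjunction witnesses from $\Sigma'$ and $\Delta'$ respectively, or more directly using that $\Sigma'\vdash A$ together with $A\in\Delta'$ would contradict $\nvdash_{\Pi}\Sigma'\to\Delta'$ via the deduction theorem, Proposition~\ref{prop:DT}); hence $A\notin\Pi$ because $\Pi$ is deductively closed.

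Next I would check the two closure conditions. First, $\Pi$ is $\Pi$-deductively closed: by Lemma~\ref{lem.comp2} a partition's left component is a prime $\Pi$-theory, and being $\Pi$-deductively closed follows since $\Sigma'$ is already (absolutely) deductively closed, i.e. $\Sigma'\vdash B$ implies $B\in\Sigma'$, and a fortiori $\Sigma'\vdash_\Pi B = \Sigma'\cup\Pi\vdash B = \Sigma'\vdash B$ implies $B\in\Sigma'$ because $\Pi=\Sigma'$. Second, $\Pi$ is a prime $\Pi$-theory: again directly from Lemma~\ref{lem.comp2} applied to the partition $\langle\Sigma',\Delta'\rangle$, noting that a $\Pi$-theory with $\Pi=\Sigma'$ is automatically non-empty (for instance it contains all theorems, or one can invoke $\Sigma\subseteq\Sigma'$ if $\Sigma$ is assumed non-empty; in any case $\Sigma'$ being deductively closed means it contains $\vdash$-theorems). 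Finally $\Pi\supseteq\Sigma$ is immediate from $\Sigma'\supseteq\Sigma$.

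The only point requiring a little care — and the step I would regard as the main obstacle — is matching the bookkeeping between the two notions ``$\Sigma$ is a prime $\Pi$-theory'' and ``$\Sigma$ is $\Pi$-deductively closed'' as instantiated with $\Pi=\Sigma'$ itself (a self-referential choice of parameter), and confirming that deductive closure of $\Sigma'$ in the plain sense upgrades to $\Pi$-deductive closure without circularity. Since $\vdash_{\Sigma'}B$ unfolds to $\Sigma'\cup\Sigma'\vdash B$, i.e. $\Sigma'\vdash B$, this collapse is harmless, and the rest is routine unwinding of the definitions in the displayed list preceding Lemma~\ref{lem.comp1}. No new appeal to \eqref{Trans} or Proposition~\ref{prop.disj} is needed here beyond what already went into Lemma~\ref{lem.comp4}.
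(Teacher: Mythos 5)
Your proposal is correct and follows the same route the paper intends (the paper defers the details to the cited references, but the standard argument is exactly yours): instantiate Lemma~\ref{lem.comp4} with $\Delta=\{A\}$, set $\Pi=\Sigma'$, and check that plain deductive closure of $\Sigma'$ lets the partition properties upgrade to the self-referential ``prime $\Pi$-theory'' and ``$\Pi$-deductively closed'' conditions. You correctly identify and discharge the only delicate point, namely that $\vdash_{\Sigma'}$ collapses to $\vdash$ relative to $\Sigma'$ itself, so nothing circular occurs.
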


We also prepare two more lemmas for the truth lemma.

\begin{lemma}\label{lem.comp5}
If $\Delta$ is a $\Pi$-theory and $A\to B\notin \Delta$, then there is a prime $\Pi$-theory $\Gamma\supseteq \Delta$, such that $A\in\Gamma$ and $B\notin\Gamma$.
\end{lemma}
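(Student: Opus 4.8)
The plan is to prove Lemma~\ref{lem.comp5} by a standard Lindenbaum-style maximization argument, using Corollary~\ref{cor.comp1} as the engine. First I would record the obvious consequence of the hypothesis $A\to B\notin\Delta$: since $\Delta$ is a $\Pi$-theory and hence closed under the rule ``if $\vdash_\Pi C\to D$ then ($C\in\Delta$ implies $D\in\Delta$)'', I want to produce a $\Pi$-theory containing $\Delta$ together with $A$ but missing $B$, and then prime it. The natural candidate is $\Sigma:=\{C\in\mathsf{Form}_\supset : \Delta\vdash_\Pi A\to C\}$, i.e. the $\Pi$-deductive closure of $\Delta\cup\{A\}$ under $\to$-consequence; equivalently one may simply take $\Sigma$ to be the smallest $\Pi$-theory containing $\Delta\cup\{A\}$.

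Next I would verify the three things needed to apply Corollary~\ref{cor.comp1} with this $\Sigma$ and with $\Delta_{0}:=\{B\}$ (or, to be safe, the closure of $\{B\}$ under disjunction, though a single formula already gives a disjunction-closed set up to logical equivalence — more cleanly, take $\Delta_0:=\{C : \vdash_\Pi C\to B\}$, which is closed under $\lor$ by \eqref{Ax8}). The three points are: (i) $\Sigma$ is a non-empty $\Pi$-theory — non-emptiness is immediate since $A\in\Sigma$ (from $\vdash A\to A$), and the theory conditions (closure under $\land$ via \eqref{Ax5}, and closure under provable $\to$-consequence by transitivity of $\to$) are routine; (ii) $\Delta_0$ is closed under disjunction, as just noted; and (iii) $\Sigma\cap\Delta_0=\emptyset$. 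Point (iii) is the crux: if some $C\in\Sigma\cap\Delta_0$, then $\Delta\vdash_\Pi A\to C$ and $\vdash_\Pi C\to B$, whence $\Delta\vdash_\Pi A\to B$ by transitivity of $\to$ (provable from \eqref{Ax1} and \eqref{Ax2}), so $A\to B\in\Delta$ because $\Delta$ is a $\Pi$-theory — contradicting the hypothesis. Here I should be slightly careful that ``$\Delta$ is a $\Pi$-theory'' only directly gives closure under $\to$ whose antecedent lies in $\Delta$; since $A\to B$ follows from $\Delta$ together with $\Pi$, I use that $\Delta$ is non-empty (it contains, e.g., any theorem, or if $\Delta=\emptyset$ the statement is handled by observing $\emptyset$ is not a $\Pi$-theory unless one allows it, in which case $A\to B\notin\emptyset$ trivially and one primes $\Pi\cup\{A\}$ directly) together with Lemma~\ref{lem.comp1} to get $\Pi\subseteq\Delta$, so $\Delta\vdash_\Pi A\to B$ reduces to $\Delta\vdash A\to B$, and then closure of the theory $\Delta$ under $\land$ and provable $\to$-consequence yields $A\to B\in\Delta$.

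Once (i)--(iii) are in place, Corollary~\ref{cor.comp1} hands me a prime $\Pi$-theory $\Gamma\supseteq\Sigma$ with $\Gamma\cap\Delta_0=\emptyset$. Then $\Gamma\supseteq\Sigma\supseteq\Delta$ and $A\in\Sigma\subseteq\Gamma$, while $B\in\Delta_0$ (since $\vdash B\to B$) gives $B\notin\Gamma$. That is exactly the desired conclusion.

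The main obstacle I anticipate is the bookkeeping in step (iii) around whether $\Delta$ is empty or whether ``$\Pi$-theory'' already forces $\Pi\subseteq\Delta$: the clean move is to invoke Lemma~\ref{lem.comp1} so that $\vdash_\Pi$ and $\vdash$ coincide over $\Delta$, after which the derivation ``$\Delta\vdash_\Pi A\to C$ and $\vdash_\Pi C\to B$ imply $A\to B\in\Delta$'' is just \eqref{Trans}-for-$\to$ plus the theory-closure conditions. Everything else — non-emptiness and the theory axioms for $\Sigma$, disjunction-closure of $\Delta_0$ — is entirely routine and can be dispatched with one-line appeals to the relevant intuitionistic axioms \eqref{Ax1}, \eqref{Ax2}, \eqref{Ax5}, \eqref{Ax8}.
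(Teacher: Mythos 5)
Your overall strategy (build a $\Pi$-theory containing $\Delta\cup\{A\}$, pair it against a disjunction-closed set generated by $B$, and invoke Corollary~\ref{cor.comp1}) is the right one, but your primary definition of $\Sigma$ breaks the argument at exactly the step you flag as the crux. You set $\Sigma:=\{C:\Delta\vdash_\Pi A\to C\}$ and then, for disjointness, pass from $\Delta\vdash_\Pi A\to B$ to $A\to B\in\Delta$. That inference requires $\Delta$ to be $\Pi$-\emph{deductively closed}, which is strictly stronger than being a $\Pi$-theory and is not part of the hypothesis. Theory-closure only lets you go from $\vdash_\Pi C\to D$ (a \emph{theorem} of the $\Pi$-extended system, no premises from $\Delta$) and $C\in\Delta$ to $D\in\Delta$; to exploit $\Delta\vdash_\Pi A\to B$ you would have to discharge the premises from $\Delta$ into a $\to$-implication, and the only deduction theorem available is the one for $\supset$ (Proposition~\ref{prop:DT}) --- the remark following it records that the deduction theorem for $\to$ fails, precisely because $(A\supset B)\to(A\to B)$ is not provable. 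The gap is not hypothetical: let $\Delta$ be the set of formulas true at the upper world of the two-world model in which $p$ holds only there and $q$ nowhere; this is a prime $\Pi$-theory (for $\Pi$ the truth set at the base) containing $p$ and $p\supset q$ but neither $q$ nor $p\to q$. Then $\Delta\vdash q$ by \eqref{MP}, hence $\Delta\vdash_\Pi p\to q$ by \eqref{Ax1} and \eqref{MP2}, so $q$ lies in your $\Sigma$ as well as in $\Delta_{0}$; disjointness fails, and no prime extension of your $\Sigma$ can omit $q$, even though the lemma's conclusion does hold for this $\Delta$.

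The repair is the definition you mention in passing as ``equivalent'', namely the smallest $\Pi$-theory containing $\Delta\cup\{A\}$ --- but note that it is \emph{not} equivalent to your derivability version; concretely it is $\Sigma:=\{C:A\to C\in\Delta\}$. With this membership definition everything goes through using only the $\Pi$-theory closure conditions: $\Delta\subseteq\Sigma$ and $A\in\Sigma$ follow from \eqref{Ax1} together with the non-emptiness of $\Delta$ (which the lemma implicitly needs and which its use in the completeness proof supplies); closure under $\land$ uses \eqref{Ax5}; closure under provable $\to$-consequence uses prefixing, i.e.\ $\vdash_\Pi(A\to C)\to(A\to D)$ from $\vdash_\Pi C\to D$; and disjointness from $\Delta_{0}=\{C:\vdash_\Pi C\to B\}$ reduces, again by prefixing and theory-closure of $\Delta$ itself, exactly to the hypothesis $A\to B\notin\Delta$. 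The remainder of your argument (applying Corollary~\ref{cor.comp1}, then noting $B\in\Delta_{0}$ so $B\notin\Gamma$) is fine as written.
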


\begin{lemma}\label{lem:pPdc1}
If $\Sigma$ is prime, $\Pi$-deductively closed and $A\not\in \Sigma$ then $A\supset B\in \Sigma$.
\end{lemma}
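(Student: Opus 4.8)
The plan is to reduce the statement to the derivable formula \eqref{AxC}, namely $A\lor(A\supset B)$, established in Proposition~\ref{prop.inf}, together with the two closure properties assumed of $\Sigma$.

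First I would note that since \eqref{AxC} is a theorem of $\mathbf{S}$, we have $\vdash A\lor(A{\supset}B)$, and hence trivially $\Sigma\vdash_{\Pi}A\lor(A{\supset}B)$, as adding premises never destroys derivability. Next, because $\Sigma$ is $\Pi$-deductively closed, this yields $A\lor(A{\supset}B)\in\Sigma$. Now I invoke primeness of $\Sigma$: from $A\lor(A{\supset}B)\in\Sigma$ we get $A\in\Sigma$ or $A{\supset}B\in\Sigma$. Since the hypothesis gives $A\notin\Sigma$, the only remaining possibility is $A{\supset}B\in\Sigma$, which is exactly what we want.

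There is no real obstacle here; the whole content is the observation that \eqref{AxC} plays the role of a classical excluded middle for $\supset$, so that primeness can be leveraged. The only point worth stating carefully is that $\Pi$-deductive closure is applied to a formula derivable outright (so the $\Pi$ and $\Sigma$ premises are not actually needed for the derivation), and that primeness is used in the form `$C\lor D\in\Sigma$ implies $C\in\Sigma$ or $D\in\Sigma$' with $C:=A$ and $D:=A{\supset}B$.
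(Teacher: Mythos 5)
Your proposal is correct and follows exactly the paper's own argument: use \eqref{AxC} together with $\Pi$-deductive closure to get $A\lor(A{\supset}B)\in\Sigma$, then apply primeness and the hypothesis $A\notin\Sigma$ to conclude $A{\supset}B\in\Sigma$. Your version merely spells out the intermediate step that a theorem of {\bf S} is $\Pi$-derivable from $\Sigma$, which the paper leaves implicit.
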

\begin{proof}
If $\Sigma$ is $\Pi$-deductively closed, then by \eqref{AxC} we obtain $A\lor (A\supset B)\in \Sigma$. This together with $A\not\in \Sigma$ and the primeness of $\Sigma$ implies $A\supset B\in \Sigma$. 
\end{proof}

We are now ready to establish the completeness.

\begin{theorem}[Completeness]\label{thm.comp}
For $\Gamma \cup \{ A \}  \subseteq \mathsf{Form}_{\supset}$, if $\Gamma\models A$ then $\Gamma\vdash A$.
\end{theorem}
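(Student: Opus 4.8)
The plan is to prove completeness by the standard canonical-model construction, using the apparatus of $\Pi$-theories and partitions assembled above. Suppose $\Gamma \nvdash A$. By Corollary~\ref{cor.comp2} there is a set $\Pi \supseteq \Gamma$ with $A \notin \Pi$ such that $\Pi$ is a prime $\Pi$-theory and is $\Pi$-deductively closed; this $\Pi$ will serve as the base state $g$ of the countermodel. I would then take $W$ to be the collection of all prime $\Pi$-theories, order it by set inclusion $\subseteq$ (reflexive and transitive, with $\Pi$ itself the least element by Lemma~\ref{lem.comp1}, since every member of $W$ is a non-empty $\Pi$-theory and hence contains $\Pi$), and define $V(w,p)=1$ iff $p \in w$. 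Persistence of $V$ is immediate from $\subseteq$, so $\langle \Pi, W, \subseteq, V\rangle$ is an {\bf S}-model.

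The heart of the argument is the Truth Lemma: for every $C \in \mathsf{Form}_\supset$ and every $w \in W$, $I(w,C)=1$ iff $C \in w$. This goes by induction on the complexity of $C$. The atomic case is the definition of $V$; conjunction uses closure of theories under $\land$ together with \eqref{Ax3}, \eqref{Ax4}; disjunction uses primeness in one direction and \eqref{Ax6}, \eqref{Ax7} in the other. For the $\to$ clause, the forward direction is routine from the induction hypothesis and closure under provable implications, while the converse — if $A{\to}B \notin w$ then there is $x \supseteq w$ in $W$ with $A \in x$, $B \notin x$ — is exactly Lemma~\ref{lem.comp5}. For the $\supset$ clause I would argue: if $A \supset B \in w$, then since $\Pi \subseteq w$ and $A{\supset}B$ depends only on the base state, I need $I(\Pi,A)\neq 1$ or $I(w,B)=1$; if $A \in \Pi$ then by \eqref{MP} (applied in $w$, using $\Pi \subseteq w$) $B \in w$, so by the induction hypothesis the disjunct $I(w,B)=1$ holds, and if $A \notin \Pi$ then by the induction hypothesis $I(\Pi,A) \neq 1$. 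Conversely, if $A \supset B \notin w$, I must show $I(\Pi,A)=1$ and $I(w,B)\neq 1$: from $A\supset B \notin w$, \eqref{Kmix} gives $B \notin w$ (else $A \supset B \in w$), hence $I(w,B)\neq 1$ by the induction hypothesis; and $A \in \Pi$ must hold, for otherwise Lemma~\ref{lem:pPdc1} applied to the base state $\Pi$ gives $A \supset B \in \Pi \subseteq w$, a contradiction — so $I(\Pi,A)=1$ by the induction hypothesis.

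Once the Truth Lemma is in place, the theorem follows: every $B \in \Gamma \subseteq \Pi$ satisfies $I(\Pi,B)=1$, while $A \notin \Pi$ gives $I(\Pi,A)\neq 1$, so $\Gamma \not\models A$, contradicting the hypothesis; hence $\Gamma \vdash A$.

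The main obstacle I anticipate is the $\supset$ clause of the Truth Lemma, specifically making the interplay between the fixed base state $\Pi$ and an arbitrary prime $\Pi$-theory $w$ work in both directions. The two lemmas tailored for this — Lemma~\ref{lem:pPdc1}, which forces $A \supset B$ into any prime $\Pi$-deductively-closed set missing $A$, and the derived formulas \eqref{Kmix}, \eqref{AxC} — are precisely what is needed, but one must be careful that $\Pi$ (and not merely $w$) has the stronger property of being $\Pi$-deductively closed, which is why Corollary~\ref{cor.comp2} rather than Corollary~\ref{cor.comp1} is invoked to build the base state. Everything else is the familiar routine for intuitionistic canonical models, with Lemma~\ref{lem.comp5} supplying the one genuinely non-trivial step for $\to$.
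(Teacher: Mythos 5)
Your proposal follows the paper's own proof almost step for step: the same appeal to Corollary~\ref{cor.comp2} to obtain the base state $\Pi$, the same canonical frame of prime $\Pi$-theories ordered by inclusion, Lemma~\ref{lem.comp5} for the hard direction of the $\to$ clause, and Lemma~\ref{lem:pPdc1} together with \eqref{Kmix} for the $\supset$ clause. Two points need tightening, one of which is a genuine misstep as written. First, the minor one: $W$ must be restricted to the \emph{non-empty} (the paper also adds non-trivial) prime $\Pi$-theories, since the empty set is vacuously a prime $\Pi$-theory and does not contain $\Pi$; only with that restriction does Lemma~\ref{lem.comp1} make $\Pi$ the least element. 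Second, in the forward direction of the $\supset$ clause you conclude $B\in w$ from $A\in\Pi$ and $A\supset B\in w$ ``by \eqref{MP} applied in $w$, using $\Pi\subseteq w$.'' That step is not licensed: a prime $\Pi$-theory is closed under conjunction and under $\Pi$-provable $\to$-implications, but \emph{not} under detachment for $\supset$. Indeed it cannot be in general: if $A\in w\setminus\Pi$ and $B\notin w$, then Lemmas~\ref{lem:pPdc1} and \ref{lem.comp1} put $A\supset B$ into $\Pi\subseteq w$ while $A\in w$ and $B\notin w$. The conclusion you want is nevertheless true, but the correct justification (the paper's) goes through \eqref{MP3}: from $A\in\Pi$ and the theorem $A\supset((A\supset B)\to B)$ one obtains $\vdash_\Pi (A\supset B)\to B$ by \eqref{MP}, and then the $\Pi$-theory closure of $w$ under $\Pi$-provable $\to$-implications yields $B\in w$ from $A\supset B\in w$. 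With these two repairs your argument coincides with the one in the paper.
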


\begin{proof}
We prove the contrapositive. Suppose that $\Gamma\nvdash A$. Then, by Corollary \ref{cor.comp2}, there is a $\Pi\supseteq\Gamma$ such that $\Pi$ is a prime $\Pi$-theory, $\Pi$-deductively closed and $A\notin\Pi$. Define the interpretation $\mathfrak{A}=\langle X,\Pi,\leq,I\rangle$, where $X=\{\Delta:\Delta\text{ is a non-empty and non-trivial prime $\Pi$-theory}\}$, $\Delta\leq\Sigma$ iff $\Delta\subseteq\Sigma$ and $I$ is defined thus: for every state $\Sigma$ and propositional parameter $p$:
$I(\Sigma,p)=1\text{ iff }p\in\Sigma$.\\
\indent We show by induction on $B$ that $I(\Sigma,B)=1$ iff $B\in\Sigma$. We will only deal with the cases involving $\to$ and $\supset$ (the cases for conjunction and disjunction can be found, for example, in \cite[Theorem 3.13]{DeOmori2014}). 

\vspace{-1mm}

\begin{itemize}
\item When $B\equiv C\to D$, by IH $I(\Sigma,C\to D)=1$ iff for all $\Delta$ s.t. $\Sigma\subseteq\Delta$, if $C\in\Delta$ then $D\in\Delta$. Hence it suffices to show that this latter condition is equivalent to $C\to D\in\Sigma$. For the forward direction, we argue by contraposition; so assume $C\to D\notin\Sigma$. Then by Lemma \ref{lem.comp5} we can find a non-trivial prime $\Pi$-theory $\Sigma'\supset\Sigma$ such that $C\in\Sigma'$ but $D\notin\Sigma'$. For the backward direction, assume $C\to D\in\Sigma$ and $C\in\Delta$ for any $\Delta$ s.t. $\Sigma\subseteq\Delta$. Then $C\to D\in \Delta$ as well, and so $D\in\Delta$ since $\Delta$ is a $\Pi$-theory.

\item When $B\equiv C\supset D$, by IH $I(\Sigma, C\supset D)=1$ iff $C\not\in \Pi$ or $D\in \Sigma$ iff $C\supset D\in\Sigma$. For the last equivalence, suppose $C \not\in \Pi $. Then, by Lemma~\ref{lem:pPdc1}, we obtain $C\supset D \in \Pi$. This together with Lemma~\ref{lem.comp1} implies that $C\supset D \in \Sigma$. That $D\in \Sigma$ implies $C\supset D\in\Sigma$ is easy in view of \eqref{Kmix}. For the other way around, suppose $C\supset D\in\Sigma$ and $C \in \Pi $; we want to show $D\in\Sigma$.
Then we infer from \eqref{MP3} and \eqref{MP} that $\vdash_{\Pi}(C\supset D)\to D$. Now because $\Sigma$ is a $\Pi$-theory, $C\supset D\in\Sigma$ implies $D\in\Sigma$, as desired.
\end{itemize}
\vspace{-1mm}
It now suffices to observe that $B{\in}\Pi$ for all $B{\in}\Gamma$ and $A{\notin}\Pi$, which in view of the above means $\Gamma\not\models A$. 
\end{proof}

\begin{remark}
It is straightforward to observe that {\bf S} is a conservative extension of intuitionistic logic. In addition, an interpretation at $g$ of formulas in $\{\land,\lor,\supset\}$-fragment corresponds to a classical model, because it never refers to other worlds. Consequently, {\bf S} is also a conservative extension of classical logic. Hence we see that {\bf S} properly combines the two implications.
\end{remark}

\begin{remark}
Given the notion of empirical negation, our classical conditional is defined as the material conditional. However, in view of the treatment of classical conditional by Humberstone in \cite[p.174]{Humberstone1979}, we may also introduce another classical conditional as $\Not (A\land \Not B)$. Of course, these two ways are equivalent in classical logic, but this is not the case in our present context, giving rise to the following truth condition. 

\smallskip

$\bullet$ $I(w, A{\supset} B)=1$ iff $I(g, A)\neq 1$ or $I(g, B)=1$.

\smallskip

\noindent Otherwise, we retain the definition of {\bf S}-model. Let us refer to the resulting model and the semantic consequence relation as {\bf T}-model and $\models_{\bf T}$, respectively. We can then also axiomatize this system.
\end{remark}

\begin{definition}
The system {\bf T} is obtained from {\bf S} by (i) replacing \eqref{AxM3} and \eqref{AxM4} by \eqref{Ksup} and \eqref{MP3}, and (ii) adding the following axiom schemata in {\bf T}:

\vspace{-4mm}

\noindent
\begin{minipage}{.5\textwidth}
\begin{gather}
(A {\supset} B){\to}((\top{\supset} A){\to} B) \tag{T1} \label{T1} 
\end{gather}
\end{minipage}
\begin{minipage}{.5\textwidth}
\begin{gather}
(A{\supset} B){\to}(C{\supset}(A{\supset} B)) \tag{T2} \label{T2}
\end{gather}
\end{minipage}

\smallskip

\noindent where $\top$ abbreviates $p\to p$ for some $p\in \mathsf{Prop}$. We refer to the proof-theoretic consequence relation as $\vdash_{\bf T}$. 
\end{definition}

Then, we may establish that for all $\Gamma \cup \{ A \} \subseteq \mathsf{Form}_{\supset}$, $\Gamma\models_{\bf T} A$ iff $\Gamma\vdash_{\bf T} A$, and that {\bf S} and {\bf T} are incomparable. The details are kept for the full version of this paper, due to space restriction.

\section{Indispensability of disjunction}\label{sec:disjunction}
The axiom schemata of {\bf S} expressing properties of $\supset$ are in implicational form (i.e. $\to$ and $\supset$ are the only connectives), except for \eqref{AxM6}. This allows us to establish the completeness of the conjunction-free fragment of {\bf S}, by altering the definition of $\vdash_{\Pi}\Sigma\to\Delta$ to  ``$\vdash_{\Pi} C_{1}\to(C_{2}\to(\cdots\to(C_{n}\to D_{1}\lor\cdots\lor D_{m})))$ for some $C_{1},\ldots,C_{n}\in\Sigma$ and $D_{1},\ldots,D_{m}\in\Delta$'', and dropping the condition (a) from a $\Pi$-theory. Then we can observe that the arguments in \cite{DeOmori2014,DeOmori2016,NikiOmoriAiMLsubmitted} are suitably modified for conjunction-free language.

It is then a natural question to ask if we may take the fragment of {\bf S} which is also disjunction-free, i.e. the implicational fragment. This will be reduced to the question if we can replace \eqref{AxM6} by an axiom in an implicational form.

It turns out that such axiom does \emph{not} exist. In order to observe this, we change our setting slightly  so as to make the presence of disjunction in axiomatization explicit. We shall employ a formalism with axioms and the rule of substitution, rather than axiom schemata. In addition, we discuss in a system with the absurdity $\bot$. We shall establish a stronger result that even in this expanded language we cannot replace \eqref{AxM6} by an implicational formula.

\begin{definition}
Let {\bf S$_\bot$} be a system in $\mathcal{L}_{\bot, \supset}$ obtained by adding the next axiom schemata to {\bf S}.
\vspace{-2mm}
\begin{gather}
\bot\to A \tag{Ax0} \label{Ax0}    
\end{gather}

\vspace{-2mm}

\noindent If we remove \eqref{AxM6}, then we obtain a system which we shall call {\bf S$^-_\bot$}. If we further eliminate \eqref{Ax0} and move the language to $\mathcal{L}_\supset$, we obtain the system {\bf S$^-$}.
\end{definition}

\begin{definition}
We define {\bf S$_\bot$2} to be a system in $\mathcal{L}_{\bot, \supset}$ defined by the axioms corresponding to axiom schemata in {\bf S$_\bot$}, e.g.
\vspace{-2mm}
\begin{gather}
    (p{\supset}r){\to}((q{\supset}r){\to}((p\lor q){\supset}r)) \label{AxM6'} \tag{AxM6'}
\end{gather}

\vspace{-2mm}

\noindent with rules \eqref{MP} and 

\vspace{-2mm}
\begin{gather}
\frac{A}{A[p/B]} \label{Sub} \tag{Sub} 
\end{gather}

\vspace{-2mm}

\noindent where $A[p/B]$ denotes the result of substituting all instances of $p$ occurring in $A$ (and assumption) by $B$. If we remove \eqref{AxM6'} then it defines a subsystem {\bf S$^-_\bot$2}.
\end{definition}

\begin{remark}
It is straightforward to observe {\bf S$_\bot$} and {\bf S$_\bot$2}, as well as {\bf S$^-_\bot$} and {\bf S$^-_\bot$2} coincide. Furthermore, Propositions~\ref{prop.inf}, \ref{prop:DT} as well as Lemma \ref{lem.trans} do not depend on \eqref{AxM6}, and so they hold for {\bf S$^-_\bot$} as well. 
The reason we introduced $\bot$ in the language is to appeal to the intuitionistic modal logic {\bf L$_4$} introduced in \cite{Ono1977} by Hiroakira Ono. Let us briefly recall this system.
\end{remark}

\begin{definition}[Ono]
The logic {\bf L$_4$} is defined in 
$\mathcal{L}_{\bot, \Box}$ with the following axiom schemata and rules, along with those of intuitionistic logic.

\vspace{-3mm}
\noindent
\begin{minipage}{.5\textwidth}
\begin{gather}
\Box (A{\to} B){\to} (\Box A{\to} \Box B) \label{@1} \tag{$\Box$1} \\
\Box A{\to} A \label{@2} \tag{$\Box$2} \\
\Box A{\to} \Box \Box A \label{@3} \tag{$\Box$3} \\
\Box A\lor \Box (\Box A{\to} B) \label{@4} \tag{$\Box$4}
\end{gather}
\end{minipage}
\begin{minipage}{.5\textwidth}
\begin{gather}
\frac{A}{ \Box A } \qquad \quad \label{RN} \tag{RN} \\
\frac{\ A\quad A{\to} B \ }{B} \tag{MP2}
\end{gather}
\end{minipage}
\end{definition}

\begin{remark}
{\bf L$_{4}$} is a syntactic variant of the logic {\bf TCC$_\omega$}, introduced in \cite{Gordienko}, which in turn is a subsystem of the expansion of intuitionistic logic by empirical negation, {\bf IPC$^\Not$} introduced in \cite{MikeEN}. See also \cite{NikiBSL2020I,NikiBSL2020II,NikiOmoriAiMLsubmitted} for a comparison of these systems.
\end{remark}

We are now going to establish the relationship between {\bf S$^-_\bot$} and {\bf L$_{4}$}.

\begin{definition}
Let $()^{\Box }$ and $()^{\supset}$ be the following translations between $\mathcal{L}_{\bot,\supset}$ and $\mathcal{L}_{\bot, \Box}$.
\vspace{-3mm}
\begin{IEEEeqnarray*}{rClrCl}
p^{\Box }            & = & p                     &\hspace{3mm} p^{\supset}         & = & p.\\
\bot^{\Box }         & = & \bot                  &\hspace{3mm} \bot^{\supset}      & = & \bot.\\
(A\circ B)^{\Box }   & = & A^{\Box } \circ B^{\Box }     &\hspace{3mm}(A\circ B)^{\supset} & = & A^{\supset} \circ B^{\supset}.\\
(A\supset B)^{\Box } & = & \Box A^{\Box }\to B^{\Box }       &\hspace{3mm} (\Box A)^{\supset} & = & (A^{\supset}\supset\bot)\supset\bot.
\end{IEEEeqnarray*}

\noindent where $\circ\in\{\land,\lor,\to\}$.
\end{definition}

We shall occasionally abbreviate $A\supset\bot$ as $\Not A$. The choice of the symbol $\Not$ is based on the fact that the formula defines empirical negation in {\bf $S_{\bot}$}.

\begin{lemma}\label{lem.ex}
The following formulas are provable in {\bf S}.

\vspace{-3mm}
\noindent 
\begin{minipage}{.5\textwidth}
\begin{gather}
(A{\supset}(B{\supset}C))\to(B{\supset}(A{\supset}C)) \tag{Ex} \label{Ex}
\end{gather}
\end{minipage}
\begin{minipage}{.5\textwidth}
\begin{gather}
(B\to C)\to(A{\supset}B\to A{\supset}C) \tag{Pfix} \label{Pfix}
\end{gather}
\end{minipage}
\end{lemma}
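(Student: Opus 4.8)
The plan is to derive each of \eqref{Ex} and \eqref{Pfix} from the axioms of {\bf S} together with the already-established derived principles of Proposition~\ref{prop.inf} and the deduction theorem (Proposition~\ref{prop:DT}). Since \eqref{MP2} and \eqref{MP3} are available, and since \eqref{Ksup}, \eqref{Ssup} hold, the classical conditional $\supset$ behaves like a genuine implication with its own deduction theorem, so the most economical route is to argue semi-formally with nested hypotheses and then invoke Proposition~\ref{prop:DT} to discharge them.

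For \eqref{Ex}, I would assume $A{\supset}(B{\supset}C)$, $B$, and $A$, and aim to derive $C$; three applications of the deduction theorem then give the formula. From $A$ and $A{\supset}(B{\supset}C)$, \eqref{MP} yields $B{\supset}C$; but I still need to combine this with the hypothesis $B$, which requires $B{\supset}C$ and $B$ to yield $C$ — again just \eqref{MP}. The only subtlety is the mixed shape of the premise: one can alternatively get \eqref{Ex} directly from \eqref{AxM3} and \eqref{AxM4} — indeed \eqref{AxM3} is $(A{\supset}(B{\to}C)){\to}(B{\to}(A{\supset}C))$ and \eqref{AxM4} its converse-shaped companion, and \eqref{Ex} is the purely-$\supset$ analogue, obtainable by feeding $A{\supset}(B{\supset}C)$ through \eqref{AxM2}-style reasoning or by noting it is an instance of the $\supset$-permutation that follows from \eqref{Ksup}, \eqref{Ssup} plus the deduction theorem for $\supset$. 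I would present whichever derivation is shortest, most likely the deduction-theorem one.

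For \eqref{Pfix}, I would assume $B{\to}C$, $A{\supset}B$, and $A$ (using the deduction theorem for $\to$ on the outer two implications is not legitimate, so here I assume $B{\to}C$ and $A{\supset}B$ as the hypotheses to be discharged by $\to$-introduction, which we cannot do directly — instead I discharge by $\supset$ via Proposition~\ref{prop:DT} and then use \eqref{AxM1} to convert, or more carefully I keep the whole thing as a single $\vdash$ derivation). Concretely: from $A$ and $A{\supset}B$ get $B$ by \eqref{MP}; from $B$ and $B{\to}C$ get $C$ by \eqref{MP2}; this shows $B{\to}C, A{\supset}B, A\vdash C$, hence by the deduction theorem $B{\to}C, A{\supset}B\vdash A{\supset}C$, hence $\vdash (B{\to}C){\supset}((A{\supset}B){\supset}(A{\supset}C))$, and finally I would massage the $\supset$'s into $\to$'s using \eqref{AxM3}/\eqref{AxM4} and \eqref{Kmix} to land on the stated shape $(B{\to}C){\to}((A{\supset}B){\to}(A{\supset}C))$. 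The main obstacle is bookkeeping about which conditional ($\to$ or $\supset$) admits the deduction theorem at each step — only $\supset$ does (Proposition~\ref{prop:DT}), so the $\to$-headed occurrences in the target formulas must be introduced through the axioms \eqref{AxM3}, \eqref{AxM4}, \eqref{Ax1}/\eqref{Kmix} rather than by hypothesis discharge. Once that discipline is respected, both derivations are short and routine.
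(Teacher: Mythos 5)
There is a genuine gap, and it sits exactly at the point your closing sentence identifies but your actual derivations do not respect. Both target formulas are \emph{$\to$-headed}, while the deduction theorem (Proposition~\ref{prop:DT}) only introduces $\supset$; and there is no way back, since $(A\supset B)\to(A\to B)$ is unprovable in {\bf S} (this is the content of the remark following Proposition~\ref{prop:DT}). The axioms \eqref{AxM3} and \eqref{AxM4} only \emph{permute} a $\supset$-bound antecedent past a $\to$-bound one; they never convert a $\supset$-binding into a $\to$-binding, and \eqref{Kmix}/\eqref{AxM1} go in the wrong direction. Concretely: for \eqref{Ex}, the route you say you would most likely present (assume $A\supset(B\supset C)$, $B$, $A$, derive $C$, discharge three times) yields $(A\supset(B\supset C))\supset(B\supset(A\supset C))$ --- a different formula from \eqref{Ex}, which has $\to$ as its main connective. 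For \eqref{Pfix}, your derivation correctly establishes $B\to C,\ A\supset B,\ A\vdash C$ and hence $\vdash (B\to C)\supset((A\supset B)\supset(A\supset C))$, but the final step of ``massaging the $\supset$'s into $\to$'s'' is precisely the impossible conversion: no combination of \eqref{AxM3}, \eqref{AxM4} and \eqref{Kmix} turns a $\supset$-bound hypothesis into a $\to$-bound one, so the stated shape $(B\to C)\to((A\supset B)\to(A\supset C))$ is not reached.

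The discipline you name at the end --- $\to$-headed occurrences must come from axioms, not from hypothesis discharge --- is the right one, but it has to be implemented from the start. The paper does this as follows. For \eqref{Ex}: \eqref{AxM2} already supplies the needed outer $\to$, namely $(A\supset(B\supset C))\to((A\supset B)\supset(A\supset C))$; applying \eqref{AxM4} gives $(A\supset B)\supset((A\supset(B\supset C))\to(A\supset C))$, chaining with \eqref{Ksup} via \eqref{Trans} gives $B\supset((A\supset(B\supset C))\to(A\supset C))$, and one more permutation by \eqref{AxM3} lands on \eqref{Ex}. For \eqref{Pfix}: from \eqref{MP3} and the ($\supset$-)deduction theorem one gets $A\vdash (A\supset B)\to B$, then purely intuitionistic $\to$-reasoning gives $A\vdash (B\to C)\to((A\supset B)\to C)$ --- note that the antecedents $B\to C$ and $A\supset B$ are $\to$-bound here because they were introduced by \eqref{Ax1}/\eqref{Ax2}-style manipulation, not by discharge --- and only the single hypothesis $A$ is discharged by the deduction theorem, after which \eqref{AxM3} pushes the resulting $A\supset$ inward to obtain $(B\to C)\to((A\supset B)\to(A\supset C))$. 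You should rework both derivations along these lines; as written, neither proves the formula actually stated in the lemma.
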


\begin{proof}
For \eqref{Ex}, on one hand $B{\supset}(A{\supset} B)$ from \eqref{Ksup}. On the other hand, $(A{\supset} B){\supset}((A{\supset}(B{\supset} C){\to}(A{\supset} C)))$ from \eqref{AxM2} and \eqref{AxM4}. Hence by \eqref{Trans}  $B{\supset}((A{\supset}(B{\supset} C){\to}(A{\supset} C)))$; then apply \eqref{AxM4} again to obtain the formula. For \eqref{Pfix}, By \eqref{MP3} and deduction theorem, $A\vdash (A{\supset} B){\to} B$. Thus $A\vdash (B\to C)\to ((A\supset B){\to} C)$. Therefore by deduction theorem and \eqref{AxM3}, we conclude $(B{\to} C){\to}(A{\supset}B{\to} A{\supset}C)$. 
\end{proof}

\begin{lemma}\label{lem.trans.pr}
\textnormal{(i)} ${\bf S^{-}_{\bot}}\vdash A\leftrightarrow (A^{\Box })^{\supset}$;
\textnormal{(ii)} ${\bf L_{4}}\vdash A\leftrightarrow(A^{\supset})^{\Box }$.
\end{lemma}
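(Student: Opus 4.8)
The plan is to prove both (i) and (ii) by induction on the complexity of the formula $A$, with the atomic and $\bot$ cases immediate and the $\circ\in\{\land,\lor,\to\}$ cases following routinely from the induction hypothesis together with congruence/monotonicity facts (for $\to$, the relevant monotonicity is essentially \eqref{Pfix} of Lemma~\ref{lem.ex} on the $\mathbf{S^-_\bot}$-side, and standard intuitionistic reasoning on the $\mathbf{L_4}$-side). So the real content is in the two "modal" clauses of the translations.

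For (i), the only nontrivial inductive step is $A\equiv B\supset C$. Here $(A^{\Box})^{\supset}=((\Box B^{\Box})^{\supset}\supset (C^{\Box})^{\supset})=\bigl((({B^{\Box}})^{\supset}\supset\bot)\supset\bot\bigr)\supset (C^{\Box})^{\supset}$, and using the induction hypothesis $\mathbf{S^-_\bot}\vdash B\leftrightarrow (B^{\Box})^{\supset}$ and $\mathbf{S^-_\bot}\vdash C\leftrightarrow (C^{\Box})^{\supset}$ together with congruence for $\supset$, it suffices to show $\mathbf{S^-_\bot}\vdash (B\supset C)\leftrightarrow\bigl(((B\supset\bot)\supset\bot)\supset C\bigr)$, i.e. that $\supset$ is "stable in its antecedent" — the double negation of the antecedent can be added or removed. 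One direction uses $B\supset((B\supset\bot)\supset\bot)$ (an instance of \eqref{MP3}) and \eqref{Trans}; the other uses $((B\supset\bot)\supset\bot)\supset B$, which holds because $\supset$ is the classical conditional — concretely it follows from \eqref{AxM5} with the instance $((B\supset\bot)\supset B)\to(((B\supset\bot)\supset \bot)\to B)$ wait — more carefully, from \eqref{AxM5} one gets $(( (B\supset\bot)\supset B)\to (\dots))$; in any case double-negation elimination for $\supset$ is available in $\mathbf{S}$ since $\supset$ satisfies \eqref{Ksup}, \eqref{Ssup}, \eqref{AxC}, and hence one has the full classical $\supset$-fragment, so Peirce-type reasoning gives $((B\supset\bot)\supset\bot)\supset B$. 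Then congruence of $\supset$ (provable from \eqref{Pfix} and \eqref{Ex}/\eqref{AxM3}) closes the step.

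For (ii), the nontrivial step is $A\equiv \Box B$. We have $(A^{\supset})^{\Box}=((B^{\supset}\supset\bot)\supset\bot)^{\Box}=\Box\bigl(\Box(\Box B^{\supset\Box}\to\bot)\wait$ — unwinding: $(B^{\supset}\supset\bot)^{\Box}=\Box (B^{\supset})^{\Box}\to\bot$, and then $((B^{\supset}\supset\bot)\supset\bot)^{\Box}=\Box\bigl(\Box(B^{\supset})^{\Box}\to\bot\bigr)\to\bot$. By the induction hypothesis $\mathbf{L_4}\vdash B\leftrightarrow (B^{\supset})^{\Box}$ and congruence under $\Box$ (available via \eqref{RN} and \eqref{@1}), it suffices to show $\mathbf{L_4}\vdash \Box B\leftrightarrow\bigl(\Box(\Box B\to\bot)\to\bot\bigr)$. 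The left-to-right direction is intuitionistic: from $\Box B$ infer $\Box\Box B$ by \eqref{@3}, hence $(\Box B\to\bot)\to\bot$ after using $\Box(\Box B\to\bot)\to(\Box\Box B\to\Box\bot)$ from \eqref{@1} and $\Box\bot\to\bot$ from \eqref{@2}; actually it is cleaner to derive $\Box(\Box B\to\bot)\to\bot$ directly from $\Box B$, $\Box(\Box B\to\bot)\to(\Box B\to\bot)$ (\eqref{@2}) and modus ponens. The right-to-left direction is where \eqref{@4} is essential: the instance $\Box B\lor\Box(\Box B\to\bot)$ lets us argue by cases — in the first disjunct we are done, and in the second disjunct $\Box(\Box B\to\bot)$ contradicts the hypothesis $\Box(\Box B\to\bot)\to\bot$, so $\bot$, hence $\Box B$.

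The main obstacle I expect is the right-to-left direction of (ii) — the $\Box B\leftarrow\bigl(\Box(\Box B\to\bot)\to\bot\bigr)$ half — since this is exactly the non-intuitionistic content and is the sole place where Ono's axiom \eqref{@4} does real work; everything else is congruence bookkeeping plus the already-established classical behaviour of $\supset$ in $\mathbf{S}$ (Proposition~\ref{prop.inf}) and the derived rules \eqref{Trans} and Lemma~\ref{lem.ex}. A secondary bit of care is needed in verifying the congruence lemmas (replacement of provably-equivalent subformulas) for both $\supset$ on the $\mathbf{S^-_\bot}$-side and $\Box$ on the $\mathbf{L_4}$-side, so that the induction hypotheses can be propagated through the outermost connective in each step; for $\supset$ this rests on \eqref{Pfix} (congruence in the consequent) and \eqref{Ex} together with \eqref{AxM3}/\eqref{AxM4} (to trade antecedent for consequent), and for $\Box$ on the normality axioms.
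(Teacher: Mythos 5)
Part (ii) of your proposal is essentially the paper's argument: the target $\Box B\leftrightarrow(\Box(\Box B\to\bot)\to\bot)$ is unwound correctly, the left-to-right half is the \eqref{@2} instance plus modus ponens, and the right-to-left half is the case split on \eqref{@4} using $\bot\to\Box B$ from \eqref{Ax0}. That half is fine.

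Part (i) contains a genuine error in the crucial $\supset$-case. You compute $((B\supset C)^{\Box})^{\supset}$ as $(\Box B^{\Box})^{\supset}\supset(C^{\Box})^{\supset}$, i.e.\ with $\supset$ as the outermost connective. But by definition $(B\supset C)^{\Box}=\Box B^{\Box}\to C^{\Box}$, and the backward translation preserves $\to$ (it is one of the $\circ$ clauses), so the composite is $\bigl(((B^{\Box})^{\supset}\supset\bot)\supset\bot\bigr)\to(C^{\Box})^{\supset}$, with an \emph{intuitionistic} arrow outermost. Modulo the induction hypothesis, the equivalence you must prove is therefore $(B\supset C)\leftrightarrow(\Not\Not B\to C)$, not $(B\supset C)\leftrightarrow(\Not\Not B\supset C)$. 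This changes the problem substantially: your reduction to ``Peirce-type reasoning in the classical $\supset$-fragment gives $\Not\Not B\supset B$, then congruence of $\supset$ closes the step'' no longer applies, because you cannot freely trade $\supset$ for $\to$ in the main connective --- indeed the paper's own remark after Proposition~\ref{prop:DT} shows $\nvdash(A\supset B)\to(A\to B)$, so an argument that only establishes $\Not\Not B\supset C$ from $B\supset C$ does not yield $\Not\Not B\to C$. The left-to-right direction of the correct equivalence is the real content of the lemma: the paper gets it by proving $B\supset\bigl((B\supset C)\to(\Not\Not B\to C)\bigr)$ and $\Not B\supset\bigl((B\supset C)\to(\Not\Not B\to C)\bigr)$ from instances of \eqref{MP3} lifted through \eqref{Pfix}, and then discharging the case split with \eqref{AxM5}; the right-to-left direction uses $B\supset\Not\Not B$ (from \eqref{Ex}) together with \eqref{Pfix} and \eqref{AxM3}. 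Your proposal as written skips exactly this mixed-connective manipulation, which is where \eqref{AxM5}, \eqref{AxM3} and \eqref{Pfix} earn their keep.
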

\begin{proof}
By induction on $A$.
For (i), we need ${\bf S^{-}_{\bot}}\vdash (B{\supset} C) {\leftrightarrow} ((B{\supset} C)^\Box )^{\supset}$. For this it suffices to show that:

\smallskip

$\bullet$ ${\bf S^{-}_{\bot}}\vdash (B\supset C)\leftrightarrow (\Not\Not B\to C)$.

\smallskip

\noindent For the left-to-right direction, because $B\supset((B\supset C)\to C)$ and $\Not B\supset(\Not\Not B\to C)$  are provable from instances of \eqref{MP3}, we obtain $B\supset((B\supset C)\to (\Not\Not B\to C))$ and $\Not B\supset((B\supset C)\to(\Not\Not B\to C))$ by \eqref{Pfix}. Then the desired implication follows from \eqref{AxM5}. For the right-to-left direction, first we note $B\supset\Not\Not B$ follows from $(B\supset\bot)\supset(B\supset\bot)$ by \eqref{Ex}. Also it is a consequence of \eqref{Pfix} that $(B\supset\Not\Not B)\to(B\supset((\Not\Not B\to C)\to C))$. Thus by \eqref{MP2} and \eqref{AxM3} we obtain the desired implication.

For (ii), we need ${\bf L_4}\vdash \Box A\leftrightarrow((\Box A)^{\supset})^{\Box }$. It suffices to show 
${\bf L_4}\vdash \Box A\leftrightarrow\neg\Box \neg\Box  A$.
This follows from \eqref{Ax0} and \eqref{@4}. 
\end{proof}

\begin{proposition}\label{prop.trans.pr}
\textnormal{(i)} ${\bf S^{-}_{\bot}}\vdash A$ iff ${\bf L_{4}}\vdash A^\Box $;
\textnormal{(ii)} ${\bf L_{4}}\vdash A$ iff ${\bf S^{-}_{\bot}}\vdash A^\supset$.
\end{proposition}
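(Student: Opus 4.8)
The plan is to prove the two ``only if'' directions by induction on derivations, and then to read off the two ``if'' directions from these together with Lemma~\ref{lem.trans.pr}. Concretely, for the ``if'' half of (i): if ${\bf L_4}\vdash A^\Box$, then applying the ``only if'' half of (ii) to the formula $A^\Box$ gives ${\bf S^-_\bot}\vdash(A^\Box)^\supset$, and since ${\bf S^-_\bot}\vdash A\leftrightarrow(A^\Box)^\supset$ by Lemma~\ref{lem.trans.pr}(i), we conclude ${\bf S^-_\bot}\vdash A$ using \eqref{MP2}. The ``if'' half of (ii) is symmetric, applying the ``only if'' half of (i) to $A^\supset$ and using Lemma~\ref{lem.trans.pr}(ii). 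So the whole proposition reduces to the two ``only if'' directions.

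For ``only if'' of (i), I would induct on an ${\bf S^-_\bot}$-derivation of $A$. Since $()^\Box$ fixes $p$ and $\bot$ and commutes with $\land,\lor,\to$, the translations of \eqref{Ax0}--\eqref{Ax8} are again themselves, hence theorems of ${\bf L_4}$ (which contains intuitionistic logic). For \eqref{AxM1}--\eqref{AxM5} one computes the translation explicitly --- recalling that $(C\supset D)^\Box=\Box C^\Box\to D^\Box$ --- and checks each is an ${\bf L_4}$-theorem: \eqref{AxM1} uses \eqref{@1} and \eqref{@2}; \eqref{AxM2} additionally uses \eqref{@3}; \eqref{AxM3} and \eqref{AxM4} are intuitionistically valid permutations of antecedents; and \eqref{AxM5} uses \eqref{@4}. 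Finally \eqref{MP} is handled via \eqref{RN}: if $A^\Box$ and $(A\supset B)^\Box=\Box A^\Box\to B^\Box$ are ${\bf L_4}$-theorems, then so is $\Box A^\Box$ by \eqref{RN}, whence $B^\Box$ by \eqref{MP2}.

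For ``only if'' of (ii), I would induct on an ${\bf L_4}$-derivation of $A$, recalling that $()^\supset$ sends $\Box C$ to $\Not\Not C^\supset$ (with $\Not X=X\supset\bot$) and commutes with the remaining connectives, and noting that Propositions~\ref{prop.inf} and~\ref{prop:DT}, Lemma~\ref{lem.trans}, and Lemma~\ref{lem.ex} --- hence \eqref{AxC}, \eqref{Ex}, \eqref{Pfix} --- do not depend on \eqref{AxM6} and so hold in ${\bf S^-_\bot}$. The intuitionistic axioms translate to themselves and are provable in ${\bf S^-_\bot}$ (which contains positive intuitionistic logic with $\bot$ via \eqref{Ax1}--\eqref{Ax8}, \eqref{Ax0}, \eqref{MP2}). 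For \eqref{@2}--\eqref{@4}, one first extracts from the proof of Lemma~\ref{lem.trans.pr} the equivalence ${\bf S^-_\bot}\vdash(B\supset C)\leftrightarrow(\Not\Not B\to C)$ together with the derivable principles ${\bf S^-_\bot}\vdash\Not\Not X\to X$, ${\bf S^-_\bot}\vdash X\supset\Not\Not X$, ${\bf S^-_\bot}\vdash\Not\Not\Not X\leftrightarrow\Not X$, and monotonicity of $\Not\Not$. Using these, the translation of \eqref{@2} becomes $\Not\Not X\to X$, that of \eqref{@3} becomes essentially trivial, that of \eqref{@1} follows from $\Not\Not X\to X$ together with monotonicity, and that of \eqref{@4} reduces to ${\bf S^-_\bot}\vdash\Not\Not A^\supset\lor\Not\Not(A^\supset\supset B^\supset)$. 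The rule \eqref{MP2} is immediate since $()^\supset$ commutes with $\to$, and \eqref{RN} follows from ${\bf S^-_\bot}\vdash X\supset\Not\Not X$ and \eqref{MP}.

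The hard part will be that last case: proving ${\bf S^-_\bot}\vdash\Not\Not A^\supset\lor\Not\Not(A^\supset\supset B^\supset)$. In the full system one would get this from \eqref{AxC} by distributing $\Not\Not$ over the disjunction, but the obvious proof of such a distribution --- like the proof of Proposition~\ref{prop.disj} --- routes through \eqref{AxM6}, which is precisely the axiom absent from ${\bf S^-_\bot}$. So one must argue more carefully, combining \eqref{AxC}, \eqref{Ax8}, \eqref{Pfix}, the deduction theorem for $\supset$ (Proposition~\ref{prop:DT}), and the $\Not\Not$-equivalences above, and exploiting that this particular disjunction is valid at \emph{every} state, not merely at the base state. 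A mirror image of this difficulty appears in ``only if'' of (i) when verifying the translation of \eqref{AxM5}, which is exactly the point at which \eqref{@4} --- the sole disjunctive axiom of ${\bf L_4}$ --- is needed.
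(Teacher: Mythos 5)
Your overall architecture is exactly the paper's: prove the two left-to-right directions by induction on derivations and recover the converses from Lemma~\ref{lem.trans.pr}, and your case analysis for direction (i) (including which of \eqref{@1}--\eqref{@4} handle which of \eqref{AxM1}--\eqref{AxM5}, and \eqref{RN} for \eqref{MP}) matches the paper's proof step for step. The problem is in direction (ii), at precisely the case you yourself flag as ``the hard part'': the translation of \eqref{@4}, i.e.\ showing ${\bf S^{-}_{\bot}}\vdash\Not\Not A\lor\Not\Not(\Not\Not A\to B)$. You do not actually carry out this step, and the toolkit you propose for it (\eqref{AxC} together with \eqref{Ax8}, \eqref{Pfix} and the deduction theorem) does not obviously close it. The obstruction is that \eqref{Ax8} eliminates a disjunction only from premises with the \emph{intuitionistic} arrow into the conclusion: starting from \eqref{AxC} in the form $\Not\Not A\lor\Not\Not\Not A$ you would need $\Not\Not\Not A\to\Not\Not(\Not\Not A\to B)$, but the deduction theorem (Proposition~\ref{prop:DT}) only internalizes derivations as $\supset$, so what comes out of the natural argument is $\Not A\supset\Not\Not(\Not\Not A\to B)$, and $\supset$ cannot in general be strengthened to $\to$. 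So this route either stalls or circles back to the very principle you are missing.

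What actually closes the gap in the paper is a direct appeal to \eqref{AxM5} rather than to \eqref{AxC}: instantiating \eqref{AxM5} at $\Not A$ (with $B:=\bot$) yields the case-split schema $(\Not\Not A\supset C)\to((\Not A\supset C)\to C)$, which eliminates the hidden disjunction from two $\supset$-antecedents --- exactly the form the deduction theorem delivers. One then takes $C$ to be the target disjunction, obtains $\Not\Not A\supset C$ trivially and $\Not A\supset C$ from $\Not A\supset(\Not\Not A\to B)$ (itself from \eqref{Ax0}, \eqref{Pfix} and \eqref{AxM4}) followed by the $\Not\Not$-introduction argument you already use for \eqref{RN}, and concludes by \eqref{MP2}. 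Since \eqref{AxM5} is the sole disjunction-free classical case-analysis principle available in ${\bf S^{-}_{\bot}}$ (and is what \eqref{AxC} is derived from in Proposition~\ref{prop.inf}), omitting it from this step leaves a genuine hole in the one case where the absence of \eqref{AxM6} actually bites.
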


\begin{proof}
We show by induction on $A$. By Lemma \ref{lem.trans.pr} it is sufficient to show the left-to-right directions. 

For (i), we need to consider the cases for \eqref{AxM1}--\eqref{AxM5} and \eqref{MP}. The translation of \eqref{AxM5} follows from \eqref{@4} and that of \eqref{AxM1} from \eqref{@1} and \eqref{@2}.
The translations of \eqref{AxM3} and \eqref{AxM4} follow by intuitionitistic logic, and that of \eqref{MP} from \eqref{RN} and \eqref{MP2}. For \eqref{AxM2}, we require
\begin{itemize}
\item ${\bf L_{4}}\vdash(\Box A\to(\Box B\to C))\to(\Box (\Box A\to B)\to(\Box A\to C))$.    
\end{itemize}
for which is enough to show
${\bf L_{4}}\vdash(\Box (\Box A\to B)\land \Box A)\to (\Box A\land \Box B)$.    
This follows from \eqref{@1} and \eqref{@3}.\\
\indent For (ii), we have to consider the cases for \eqref{@1}--\eqref{@4} and \eqref{RN}. We recall the equivalence observed in Lemma \ref{lem.trans.pr}, namely $(A\supset B){\leftrightarrow} (\Not\Not A\to B)$. Then the translation of \eqref{@2} follows form $A\supset A$. That of \eqref{@3} follows from $\Not\Not A{\to}\Not\Not\Not\Not A$, which follows from \eqref{AxC} and \eqref{MP3}. For \eqref{RN}, if ${\bf S^-_\bot}\vdash A$ then ${\bf S^-_\bot}\vdash A\supset\Not\Not A$ implies ${\bf S^-_\bot}\vdash \Not\Not A$.
The translation of \eqref{@4} is derived from \eqref{AxM5}; we have $\Not\Not A\supset\Not\Not A$ and $\Not A\supset(\Not\Not(\Not\Not A\to B))$. The latter follows from $\Not\Not A\to(\Not A\supset B)$, which by \eqref{AxM4} derives $\Not A\supset(\Not\Not A\to B)$. Then use deduction theorem and the argument for \eqref{RN}. For \eqref{@1}, we need to show
\begin{itemize}
\item ${\bf S^-_\bot}\vdash\Not\Not(A\to B)\to(\Not\Not A\to\Not\Not B)$
\end{itemize}
This is equivalent to
    ${\bf S^-_\bot}\vdash(A\to B)\supset(A\supset \Not\Not B)$
Now as $A, A\to B\vdash B$ holds in {\bf S$^-_\bot$}. Then arguing as in the case for \eqref{RN}, we obtain $A, A\to B\vdash \Not\Not B$, from which the above formula is derived. 
\end{proof}

The following semantics corresponds to {\bf L$_4$}, as we observed in \cite{NikiOmoriAiMLsubmitted}.

\begin{definition}
An {\bf L$_4$}-model for $\mathcal{L}_{\bot, \Box}$ is a triple $\langle W, \leq, V \rangle$ as in {\bf S}-model, except that we drop the base state.
The interpretations $I_{\Box }$ to state-formula pairs are the same with {\bf S}-models except for the following:

\smallskip

$\bullet$ $I_{\Box }(w,\bot)=0$;

$\bullet$ $I_{\Box }(w, \Box A)=1$ iff for all $x\in W$: $I_{\Box }(x, A)= 1$.

\smallskip

\noindent Finally, the semantic consequence is now defined as follows: $\Sigma \models_{\Box } A$ iff for all {\bf L$_4$}-models $\langle W, \leq, V \rangle$, if $I_{\Box }(w, B)=1$ for all $B\in \Sigma$ and for all $w\in W$, then $I_{\Box }(w, A)=1$ for all $w\in W$.
\end{definition}

\begin{theorem}[\cite{NikiOmoriAiMLsubmitted}]\label{thm.comp@}
${\bf L_{4}}\vdash A$ iff $\models_{\Box } A$.
\end{theorem}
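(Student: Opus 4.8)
The statement is a soundness-and-completeness result for the modal logic {\bf L$_4$} with respect to the class of {\bf L$_4$}-models, so I would prove the two directions separately: soundness by induction on derivations, and completeness by a canonical-model argument of the same flavour as the one already used above for {\bf S} (and in \cite{DeOmori2014,NikiOmoriAiMLsubmitted}).

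\textbf{Soundness.} By induction on the length of a derivation in {\bf L$_4$}, it suffices to check that every axiom is valid (true at every world of every {\bf L$_4$}-model) and that \eqref{RN} and \eqref{MP2} preserve this property. The key observation, which trivialises most cases, is that because $I_{\Box }(w,\Box A)=1$ iff $I_{\Box }(x,A)=1$ for \emph{all} $x\in W$, the truth value of any formula of the form $\Box A$ does not depend on the world; in particular such formulas are automatically persistent. Given this, \eqref{@1}, \eqref{@2} and \eqref{@3} follow by unwinding the clauses (using reflexivity of $\leq$ for \eqref{@2} and the world-independence of $\Box A$ for \eqref{@3}), and \eqref{RN} is immediate. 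For \eqref{@4}: fix a model; if $I_{\Box }(x,A)=1$ for all $x$ then $\Box A$ holds everywhere and the left disjunct is true; otherwise $I_{\Box }(x_0,A)\neq 1$ for some $x_0$, so $\Box A$ is false at every world, whence $\Box A{\to} B$ is vacuously true at every world, whence $\Box(\Box A{\to} B)$ is true everywhere. The intuitionistic axioms and \eqref{MP2} are handled as usual.

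\textbf{Completeness.} I prove the contrapositive. Suppose $\nvdash A$ in {\bf L$_4$}. By a Lindenbaum-style argument (as in Corollary~\ref{cor.comp2}, adapted to $\mathcal{L}_{\bot,\Box}$) there is a prime, non-trivial, deductively closed theory $\Pi$ with $A\notin\Pi$. Put $\Pi_{\Box }=\{C:\Box C\in\Pi\}$; by \eqref{@2}, $\Pi_{\Box }\subseteq\Pi$. Define the canonical model $\langle W,\subseteq, I_{\Box }\rangle$ with $W=\{\Gamma:\Gamma\text{ is a non-trivial prime theory with }\Pi_{\Box }\subseteq\Gamma\}$ and $I_{\Box }(\Gamma,p)=1$ iff $p\in\Gamma$; note $\Pi\in W$. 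The truth lemma ``$I_{\Box }(\Gamma,B)=1$ iff $B\in\Gamma$'' is then proved by induction on $B$; the $\land$, $\lor$ and $\to$ cases are exactly as before (the $\to$ case using the analogue of Lemma~\ref{lem.comp5} for prime $\Pi_{\Box }$-theories, and the fact that any $\Delta\supseteq\Gamma$ still contains $\Pi_{\Box }$). The case $B\equiv\Box C$ is the crux. First one shows every world in $W$ has the same $\Box$-content as $\Pi$: if $\Box C\in\Gamma\in W$ but $\Box C\notin\Pi$, then from \eqref{@4} with $B\equiv\bot$ (so $\Box C\lor\Box\neg\Box C\in\Pi$) and primeness of $\Pi$ we get $\Box\neg\Box C\in\Pi$, hence $\neg\Box C\in\Pi_{\Box }\subseteq\Gamma$, contradicting non-triviality of $\Gamma$; conversely $\Box C\in\Pi$ gives $\Box\Box C\in\Pi$ by \eqref{@3}, so $\Box C\in\Pi_{\Box }\subseteq\Gamma$. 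Thus $I_{\Box }(\Gamma,\Box C)=1$ iff (by IH) $C\in\Delta$ for all $\Delta\in W$; if $\Box C\in\Pi$ this holds since $C\in\Pi_{\Box }\subseteq\Delta$, and if $\Box C\notin\Pi$ then $\Pi_{\Box }\nvdash C$ — for otherwise finitely many $\Box C_i\in\Pi$ would, via the deduction theorem, \eqref{RN} and iterated \eqref{@1}, yield $\Box C\in\Pi$ — so $\Pi_{\Box }$ extends to a non-trivial prime theory $\Delta\in W$ omitting $C$. Hence $I_{\Box }(\Gamma,\Box C)=1$ iff $\Box C\in\Pi$ iff $\Box C\in\Gamma$. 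Finally, since $A\notin\Pi$, the truth lemma gives $I_{\Box }(\Pi,A)\neq 1$, so $\not\models_{\Box } A$.

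\textbf{Main obstacle.} The one genuinely modal point is the $\Box$-clause of the truth lemma: making the \emph{universal} reading of $\Box$ work requires showing that all canonical worlds agree on $\Box$-formulas, which is where \eqref{@3} and the derived schema $\neg\Box A{\to}\Box\neg\Box A$ (itself extracted from \eqref{@4} by $\lor$-elimination and ex falso) are essential, together with the \eqref{RN}/\eqref{@1} closure argument showing $\Pi_{\Box }\nvdash C$ whenever $\Box C\notin\Pi$. Everything else is a routine adaptation of the completeness machinery already developed in this paper.
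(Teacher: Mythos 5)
Your proof is correct, and it fills in a result that the paper itself does not prove but merely imports from \cite{NikiOmoriAiMLsubmitted}; the canonical-model machinery you use (prime, deductively closed $\Pi$, worlds = non-trivial prime theories extending $\Pi_{\Box}$, with $\Box$4 and $\Box$3 forcing all worlds to agree on boxed formulas) is exactly the style of argument used elsewhere in the paper and in the cited source. The only point worth tightening is the step from $\Pi_{\Box}\vdash C$ to $\Box C\in\Pi$: rather than invoking the deduction theorem (which is delicate in the presence of \eqref{RN} applied under assumptions), it is cleaner to show directly by induction on derivations that $\{C:\Box C\in\Pi\}$ is closed under the axioms, \eqref{MP2} (via \eqref{@1}) and \eqref{RN} (via \eqref{@3}); this yields the same conclusion without any convention on how \eqref{RN} interacts with premises.
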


An analogous model can be considered for the language $\mathcal{L}_{\bot,\supset}$, with the only difference being the interpretation (which we shall denote by $I_{\supset}$) where

\smallskip

$\bullet$ $I_{\supset}(w, A{\supset} B)=1$ iff $I_{\supset}(x, A)\neq 1$ for some $x\in W$ or $I_{\supset}(w, B)=1$.

\smallskip

\noindent We shall call a model of this semantics an {\bf S$^{-}_{\bot}$}-model,
and denote the semantic consequence for this semantics by $\models_{\supset}$. Note that the class of {\bf S$^{-}_{\bot}$}-models with a base state is nothing but the class of {\bf S}-models (with $\bot$ added to the language), because the interpretations of $\supset$ coincide. We shall use $\models_{\bot}$ to denote the semantic consequence restricted to this class of models (to be called {\bf S$_\bot$}-models). Then the next completeness follows analogously to Theorem \ref{thm.comp}.

\begin{proposition}
${\bf S_\bot}\vdash A$ iff $\models_{\bot}A$.
\end{proposition}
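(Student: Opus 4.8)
The plan is to prove soundness and completeness of {\bf S$_\bot$} with respect to $\models_\bot$ by closely imitating the argument already carried out for {\bf S} and $\models$ in Theorem~\ref{thm.comp}, checking only the extra ingredient contributed by $\bot$. Soundness is routine: one checks that \eqref{Ax0} is valid on every {\bf S$_\bot$}-model, which is immediate since $I_\supset(w,\bot)=0$ so the antecedent of $\bot\to A$ is never satisfied at any world above $w$; all other axioms and \eqref{MP} were already treated in the soundness theorem for {\bf S}, and the interpretation clauses for $\supset$ coincide on models with a base state. So the work is on the completeness direction.

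For completeness I would argue the contrapositive, as in Theorem~\ref{thm.comp}. Assuming $\Gamma\nvdash A$ in {\bf S$_\bot$}, I would invoke the (now $\bot$-laden) analogues of Corollary~\ref{cor.comp2} and Lemmas~\ref{lem.comp1}--\ref{lem:pPdc1} to obtain a prime, $\Pi$-deductively closed $\Pi$-theory $\Pi\supseteq\Gamma$ with $A\notin\Pi$; here I must first observe that all the lemmas of Section~\ref{sec:soundnesscompleteness} go through verbatim for $\mathsf{Form}_{\bot,\supset}$, since their proofs used only \eqref{Ax1}, \eqref{AxC}, \eqref{MP3}, \eqref{Kmix}, \eqref{Trans} and Proposition~\ref{prop.disj}, none of which is affected by adding \eqref{Ax0}. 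I would then build the canonical model $\mathfrak{A}=\langle X,\Pi,\leq,I\rangle$ exactly as before, with $X$ the set of non-empty non-trivial prime $\Pi$-theories, $\Delta\leq\Sigma$ iff $\Delta\subseteq\Sigma$, and $I(\Sigma,p)=1$ iff $p\in\Sigma$. The Truth Lemma $I(\Sigma,B)=1$ iff $B\in\Sigma$ is proved by induction on $B$; the cases for $\land$, $\lor$, $\to$, $\supset$ are copied from Theorem~\ref{thm.comp}, and the new base case is $B\equiv\bot$: since every $\Sigma\in X$ is non-trivial and $\bot\to C\in\Sigma$ for all $C$ by \eqref{Ax0}, we get $\bot\notin\Sigma$, matching $I(\Sigma,\bot)=0$. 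One also checks that $\mathfrak{A}$ is genuinely an {\bf S$_\bot$}-model, i.e. that it has a base state $\Pi$ which is $\leq$-least (because $\Pi\subseteq\Sigma$ for every $\Sigma\in X$ by Lemma~\ref{lem.comp1}, each member of $X$ being a non-empty $\Pi$-theory) and that $V$ is persistent. Finally $B\in\Pi$ for all $B\in\Gamma$ and $A\notin\Pi$ yield, via the Truth Lemma, $\Gamma\not\models_\bot A$.

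The only genuinely new point — and the one I would write out with care rather than as routine — is the verification that the $\Pi$-theory/partition machinery and Corollary~\ref{cor.comp2} survive the addition of $\bot$, and that the canonical frame still has a $\leq$-least base world once $\bot$-containing theories are excluded. I do not expect any serious obstacle: the phrase in the excerpt, ``the next completeness follows analogously to Theorem~\ref{thm.comp}'', signals that the argument is a mechanical transfer, and the role of $\bot$ is entirely benign because $\bot\to A$ forces $\bot$ out of every non-trivial theory, which is exactly the condition that matches the semantic clause $I_\supset(w,\bot)=0$.
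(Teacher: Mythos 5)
Your proposal is correct and takes essentially the same approach as the paper, which gives no written argument beyond the remark that the result ``follows analogously to Theorem~\ref{thm.comp}''. Your reconstruction --- checking soundness of \eqref{Ax0} from $I(w,\bot)=0$, reusing the canonical model of Theorem~\ref{thm.comp}, and handling the new base case by noting that \eqref{Ax0} forces $\bot$ out of every non-trivial $\Pi$-theory so that $I(\Sigma,\bot)=0$ matches membership --- is exactly the intended analogous argument.
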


\begin{proposition}\label{prop.trans.sem}
\textnormal{(i)} $\models_{\supset} A$ iff $\models_{\Box } A^{\Box }$;
\textnormal{(ii)} $\models_{\Box} A$ iff $\models_{\supset} A^{\supset}$.
\end{proposition}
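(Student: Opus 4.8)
The plan is to prove the two equivalences by relating the two semantics through the syntactic translation results already established, using a sandwich of soundness and completeness theorems. Specifically, for (i), I would chain together the following: $\models_{\supset} A$ iff (by completeness/soundness for $\mathbf{S^-_\bot}$-models, which is the semantics whose theory matches $\mathbf{S^-_\bot}$) $\mathbf{S^-_\bot}\vdash A$ iff (by Proposition~\ref{prop.trans.pr}(i)) $\mathbf{L_4}\vdash A^\Box$ iff (by Theorem~\ref{thm.comp@}) $\models_\Box A^\Box$. The analogous chain for (ii) runs $\models_\Box A$ iff $\mathbf{L_4}\vdash A$ iff (by Proposition~\ref{prop.trans.pr}(ii)) $\mathbf{S^-_\bot}\vdash A^\supset$ iff $\models_\supset A^\supset$. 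So the real content is reduced to having completeness of $\mathbf{S^-_\bot}$ with respect to $\models_\supset$ (the general $\mathbf{S^-_\bot}$-models without a base state), which the excerpt signals is available by an argument "analogous to Theorem~\ref{thm.comp}" — the construction goes through the same canonical model of prime theories, except that without \eqref{AxM6} one can no longer guarantee a single distinguished base state governing all $\supset$-formulas uniformly, which is precisely why the weaker class of models (quantifying "$I_\supset(x,A)\neq 1$ for some $x$" over all worlds) is the right match.

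Concretely, I would first state and use the soundness-and-completeness fact $\mathbf{S^-_\bot}\vdash A$ iff $\models_\supset A$. Soundness is the routine induction on proof length, checking each axiom \eqref{AxM1}--\eqref{AxM5}, \eqref{Ax0}--\eqref{Ax8} and \eqref{MP} is valid on every $\mathbf{S^-_\bot}$-model; the $\supset$-axioms are validated exactly because the "some $x$" clause is the Kripke-style weakening of the base-state clause, so e.g. \eqref{AxM5} holds because if $I_\supset(w,(A\supset B)\supset C)=1$ and $I_\supset(w,A\supset C)=1$ one distributes over the disjunctive truth condition. Completeness is the canonical-model construction from Corollary~\ref{cor.comp2}, Lemma~\ref{lem.comp5} and Lemma~\ref{lem:pPdc1}, but now the "base state" $\Pi$ is replaced by allowing an arbitrary world to witness the failure of the antecedent; one checks the truth lemma case for $C\supset D$ using \eqref{MP3}, \eqref{Kmix} and primeness exactly as before, except that $\Pi$ is no longer forced to be the least element, so the relevant witness world is produced by an appropriate application of Corollary~\ref{cor.comp1} rather than fixed in advance.

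Given that completeness, the proof of the Proposition is then a short two-line chain in each direction, invoking Proposition~\ref{prop.trans.pr} and Theorem~\ref{thm.comp@}. I would write it as: for (i), $\models_\supset A$ iff $\mathbf{S^-_\bot}\vdash A$ iff $\mathbf{L_4}\vdash A^\Box$ iff $\models_\Box A^\Box$, citing the three results in order; and symmetrically for (ii) using $(A^\supset)$ and Proposition~\ref{prop.trans.pr}(ii).

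The main obstacle I anticipate is establishing (or carefully invoking) completeness of $\mathbf{S^-_\bot}$ with respect to the base-state-free $\models_\supset$: in the canonical model one must argue that the set $X$ of non-empty non-trivial prime $\Pi$-theories, ordered by inclusion, together with the interpretation $I_\supset$ reading "$C\supset D$ true at $\Sigma$" as "$C\notin\Gamma$ for some $\Gamma\in X$ or $D\in\Sigma$", actually satisfies the truth lemma even without \eqref{AxM6} pinning down a least world — here the delicate point is that the witness $\Gamma$ with $C\notin\Gamma$ need not be below $\Sigma$, so one must use the prime-theory separation lemmas (Corollary~\ref{cor.comp1}, Lemma~\ref{lem:pPdc1}) to manufacture such a $\Gamma$ from $\nvdash C$, and conversely, to handle the left-to-right direction of the $\supset$-clause, from $C\supset D\in\Sigma$ and any $\Gamma$ containing $C$ one needs $D\in\Sigma$, which follows from \eqref{MP3} and the $\Pi$-theory closure of $\Sigma$ just as in Theorem~\ref{thm.comp}. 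Once that completeness is in hand, the rest is purely a matter of bookkeeping between the four named results.
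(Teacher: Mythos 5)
Your reduction is circular as it stands. The ingredient you lean on --- completeness of ${\bf S^{-}_{\bot}}$ with respect to the base-state-free semantics, i.e.\ ${\bf S^{-}_{\bot}}\vdash A$ iff $\models_{\supset} A$ --- is Corollary~\ref{cor.comp.minus} in the paper, and it is obtained there \emph{as a consequence of} Proposition~\ref{prop.trans.sem} together with Theorem~\ref{thm.comp@} and Proposition~\ref{prop.trans.pr}. So you cannot invoke it to prove the Proposition unless you supply an independent proof. The remark that ``the next completeness follows analogously to Theorem~\ref{thm.comp}'' applies only to ${\bf S_{\bot}}$ and $\models_{\bot}$, i.e.\ to models \emph{with} a base state; it does not cover $\models_{\supset}$. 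Moreover, your sketched canonical-model argument for $\models_{\supset}$ does not go through as described: in the truth-lemma case for $C\supset D$ the witnessing world must now be an arbitrary member $\Gamma$ of $X$ rather than the distinguished $\Pi$, but Lemma~\ref{lem:pPdc1} requires the theory in question to be $\Pi$-deductively closed (which generic members of $X$ are not), and even granting $C\supset D\in\Gamma$ for some such $\Gamma$, there is no analogue of Lemma~\ref{lem.comp1} that transfers $C\supset D$ into the world $\Sigma$ actually under evaluation. Sidestepping exactly this difficulty is the point of the paper's architecture, which \emph{derives} the completeness of ${\bf S^{-}_{\bot}}$ from the translations rather than proving it directly.

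The paper's own proof of Proposition~\ref{prop.trans.sem} is entirely semantic and uses no proof theory at all: since ${\bf L_4}$-models and ${\bf S^{-}_{\bot}}$-models are the same triples $\langle W,\leq,V\rangle$, one shows by induction on $A$ that $I_{\supset}(w,A)=1$ iff $I_{\Box}(w,A^{\Box})=1$ at every world $w$ (and dually for $()^{\supset}$); the only nontrivial case is $B\supset C$ versus $\Box B^{\Box}\to C^{\Box}$, where the global quantifier in the clause for $\supset$ is matched against the quantifier over $\leq$-successors using persistence. If you want to retain a proof-theoretic route, you must first establish Corollary~\ref{cor.comp.minus} by independent means, which is precisely the step your proposal defers and which is harder than the direct induction.
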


\begin{proof}
We note each {\bf L$_4$}-model can be seen as a {\bf S$^-_\bot$}-model, and vice versa. Then it suffices to show by induction on $A$ that $I_{\supset}(w,A)=1$ iff $I_{\Box }(w,A^{\Box })=1$, and similarly for the other translation.\\
\indent For instance, for $A\equiv B\supset C$, 
\vspace{-3mm}
\begin{IEEEeqnarray*}{rCl}
I_{\supset}(w,B\supset C)=1 & \text{ iff } & \forall{w'}(I_{\supset}(w',B)=1)\text{ implies } I_{\supset}(w,C)=1.\\
                            & \text{ iff } & \forall{w'}(I_{\Box }(w',B^{\Box })=1)\text{ implies } I_{\Box }(w,C^{\Box })=1. \text{ (by IH)}\\
                            & \text{ iff } & \forall{w'\geq w}(I_{\Box }(w',\Box B^{\Box })=1\text{ implies } I_{\Box }(w',C^{\Box })=1).\\
                            & \text{ iff } & I_{\Box }(w,(B\supset C)^{\Box })=1.
\end{IEEEeqnarray*}
We safely leave the other cases to the reader.
\end{proof}

As a consequence of Theorem \ref{thm.comp@}, Propositions \ref{prop.trans.pr} and \ref{prop.trans.sem}, we obtain the following.

\begin{corollary}\label{cor.comp.minus}
${\bf S^-_\bot}\vdash A$ iff $\models_{\supset} A$.
\end{corollary}


\begin{definition}\label{def.consv}
Let $\mathcal{M}=\langle W,\leq, V\rangle$ be an {\bf S$^-_\bot$}-model. 
We define another {\bf S$^-_\bot$}-model $\mathcal{M}'=\langle W',\leq',V'\rangle$ by setting $W'=W\cup\{g\}$; $\leq'=\leq\cup\{(g,w):w\in W'\}$ and: $V'(w,p)=1 \text{ iff } V(w,p)=1\text{ or } (w=g\text{ and } V(w',p)=1\text{ for all }w'\in W)$.
\end{definition}

Note that $V'$ is well-defined as a valuation.
We use the notation $\mathcal{M}\models_{\supset} A$ to abbreviate $I_{\supset}(w,A)=1$ \emph{for all} $w\in W$ \emph{in} $\mathcal{M}$.

\begin{lemma}\label{lem.consv}
Let $\mathcal{M}$ and $\mathcal{M}'$ be as in the previous definition. Then for any formula $A$ in $\mathcal{L}^-_{\bot, \supset}$ (i.e. disjunction-free):
\textnormal{(i)} $I_{\supset}(w,A)=1$ iff $I'_{\supset}(w,A)=1$ for any $w\in W$, and 
\textnormal{(ii)} $\mathcal{M}\models_{\supset}A$ iff $I'_{\supset}(g,A)=1$.
\end{lemma}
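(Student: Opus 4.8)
The plan is to prove (i) and (ii) together by induction on the structure of the disjunction-free formula $A$, with (ii) following immediately from (i) once we handle the base state $g$ separately. For (i), the key point is that $\mathcal{M}'$ is obtained from $\mathcal{M}$ by adjoining $g$ below everything, and $g$ was designed so that its atomic theory is exactly the set of atoms true \emph{everywhere} in $\mathcal{M}$; one checks first that $V'$ is a legitimate (persistent) valuation, which is the content of the remark preceding the lemma. The atomic case of (i) for $w\in W$ is then immediate since $V'(w,p)=1$ iff $V(w,p)=1$ for $w\neq g$. The cases for $\land$ and $\to$ are routine: for $\to$ one uses that the only new world below a $w\in W$ is $g$, but $g$ lies below every world of $W$, so the $\leq'$-upset of $w\in W$ is its $\leq$-upset together with possibly nothing new that matters — more precisely, quantifying over $x$ with $w\leq' x$ for $w\in W$ is the same as quantifying over $x\in W$ with $w\leq x$, since $g$ is not $\geq' w$ for any $w\in W$. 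The interesting clause is $\supset$.

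For the $\supset$ case of (i), fix $w\in W$ and consider $A\equiv B\supset C$. By the truth condition, $I'_{\supset}(w,B\supset C)=1$ iff ($I'_{\supset}(x,B)\neq 1$ for some $x\in W'$) or $I'_{\supset}(w,C)=1$. By the induction hypothesis applied to $B$ and $C$ (for worlds in $W$) this becomes: ($I_{\supset}(x,B)\neq 1$ for some $x\in W$, or $I'_{\supset}(g,B)\neq 1$) or $I_{\supset}(w,C)=1$. So everything reduces to showing the disjunct ``$I'_{\supset}(g,B)\neq 1$'' is redundant, i.e. that if $I_{\supset}(x,B)=1$ for all $x\in W$ then $I'_{\supset}(g,B)=1$. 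This is exactly statement (ii) in the direction we need, so I would in fact prove by simultaneous induction the auxiliary claim: \emph{if $I_{\supset}(x,A)=1$ for all $x\in W$, then $I'_{\supset}(g,A)=1$}, alongside (i). For atoms this is the defining clause of $V'$; for $\land$ it is trivial; for $\to$, if $B\to C$ holds throughout $W$ and $g\leq' x$ with $I'_{\supset}(x,B)=1$, then either $x=g$ (handled by the $\to$-subinduction, using that $B,C$ hold throughout $W$ hence at $g$ by hypothesis and the subinduction on $B$) or $x\in W$ (then use (i) at $x$ and that $B\to C$ holds at $x$); for $\supset$, if $B\supset C$ holds throughout $W$ then in particular picking any $x\in W$ either $I_{\supset}(y,B)\neq1$ for some $y\in W$ — whence $I'_{\supset}(y,B)\neq1$ by (i) and so $I'_{\supset}(g,B\supset C)=1$ — or $I_{\supset}(x,C)=1$ for all $x\in W$, whence $I'_{\supset}(g,C)=1$ by the subinduction and again $I'_{\supset}(g,B\supset C)=1$.

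Once (i) and the auxiliary claim are in hand, (ii) is immediate: $\mathcal{M}\models_{\supset}A$ means $I_{\supset}(w,A)=1$ for all $w\in W$, which by the auxiliary claim gives $I'_{\supset}(g,A)=1$; conversely if $I'_{\supset}(g,A)=1$ then by persistence in $\mathcal{M}'$ (since $g\leq' w$ for all $w\in W$) we get $I'_{\supset}(w,A)=1$ for all $w\in W$, hence $I_{\supset}(w,A)=1$ for all $w\in W$ by (i), i.e. $\mathcal{M}\models_{\supset}A$. The main obstacle is bookkeeping: the $\supset$ clause is genuinely global (it quantifies over all worlds, not an upset), so the induction must be set up to prove the ``holds everywhere in $W$ $\Rightarrow$ holds at $g$'' statement simultaneously with the pointwise equivalence (i) rather than as a corollary — this is precisely why disjunction must be excluded, since for $B\lor C$ one could have $B\lor C$ true throughout $W$ without either disjunct being true everywhere, so the auxiliary claim would fail at $g$.
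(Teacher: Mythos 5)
Your proposal is correct and follows essentially the same route as the paper: a simultaneous induction on the disjunction-free formula proving the pointwise equivalence (i) together with the claim relating truth throughout $W$ in $\mathcal{M}$ to truth at $g$ in $\mathcal{M}'$, which is exactly what is needed to absorb the extra world $g$ in the global $\supset$-clause. The only cosmetic difference is that you carry one direction of (ii) through the induction and recover the converse from persistence in $\mathcal{M}'$, whereas the paper proves the biconditional (ii) directly at each induction step; both are fine.
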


\begin{proof}
We show (i) and (ii) by simultaneous induction on $A$. Here we treat the cases for $\to$ and $\supset$.
If $A\equiv A_{1}\to A_{2}$ then for (i):
\vspace{-3mm}
\begin{IEEEeqnarray*}{rCl}
I_{\supset}(w,A_{1}\to A_{2})=1 & \text{ iff } & \forall w'\geq w(I_{\supset}(w',A_{1})=1\text{ implies }I_{\supset}(w',A_{2})=1 ).\\
                                  & \text{ iff } & \forall w'\geq w(I'_{\supset}(w',A_{1})=1\text{ implies }I'_{\supset}(w',A_{2})=1 ).\\
                                  & \text{ iff } & I'_{\supset}(w,A_{1}\to A_{2})=1.
\end{IEEEeqnarray*}
and for (ii), first we note
\vspace{-3mm}
\begin{IEEEeqnarray*}{rCl}
\mathcal{M}\models_{\supset}A_{1}\to A_{2} & \text{ iff } & \forall w\in W(I_{\supset}(w,A_{1})=1\text{ implies }I_{\supset}(w,A_{2})=1 ).\\
                                            & \text{ iff } & \forall w\in W(I'_{\supset}(w,A_{1})=1\text{ implies }I'_{\supset}(w,A_{2})=1 ).
\end{IEEEeqnarray*}
Also if $I'_{\supset}(g,A_{1})=1$, then by IH $\mathcal{M}\models_{\supset}A_{1}$. So $\mathcal{M}\models_{\supset}A_{2}$, and by IH again $I'_{\supset}(g,A_{2})=1$. Combining the above two observations, we conclude $I'_{\supset}(g,A_{1}\to A_{2})=1$. For the other direction, if $I'_{\supset}(g,A_{1}\to A_{2})=1$ then $\forall w\in W(I'_{\supset}(w,A_{1})=1$ implies $I'_{\supset}(w,A_{2})=1 )$. Thus by the equivalence above, $\mathcal{M}\models_{\supset}A_{1}\to A_{2}$.

\indent If $A\equiv A_{1}\supset A_{2}$, then for (i):
\vspace{-3mm}
\begin{IEEEeqnarray*}{rCl}
I_{\supset}(w,A_{1}\supset A_{2})=1 & \text{ iff } & \mathcal{M}\models_{\supset}A_{1}\text{ implies }I_{\supset}(w,A_{2})=1.\\
                                  & \text{ iff } & I'_{\supset}(g,A_{1})=1\text{ implies }I'_{\supset}(w,A_{2})=1.\\
                                  & \text{ iff } & I'_{\supset}(w,A_{1}\supset A_{2})=1.
\end{IEEEeqnarray*}
and for (ii):
\vspace{-4mm}
\begin{IEEEeqnarray*}{rCl}
\mathcal{M}\models_{\supset}A_{1}\supset A_{2} & \text{ iff } & \mathcal{M}\models_{\supset}A_{1}\text{ implies }\mathcal{M}\models_{\supset}A_{2}.\\
                                  & \text{ iff } & I'_{\supset}(g,A_{1})=1\text{ implies }I'_{\supset}(g,A_{2})=1.\\
                                  & \text{ iff } & I'_{\supset}(g,A_{1}\supset A_{2})=1.
\end{IEEEeqnarray*}
This completes the proof.
\end{proof}

\begin{theorem}\label{thm.consv}
Let $A$ be a formula in $\mathcal{L}^-_{\bot, \supset}$. Then $\models_{\bot} A$ implies $\models_{\supset}A$. 
\end{theorem}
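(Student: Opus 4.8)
The statement asks us to show that for disjunction-free $A$, validity in the class of {\bf S$_\bot$}-models (those with a base state) implies validity in the larger class of {\bf S$^-_\bot$}-models (those without a designated base state). The plan is to argue by contraposition: assume $A$ fails in some {\bf S$^-_\bot$}-model $\mathcal{M}=\langle W,\leq,V\rangle$, so that $I_{\supset}(w_0,A)\neq 1$ for some $w_0\in W$, and manufacture from $\mathcal{M}$ a genuine {\bf S$_\bot$}-model refuting $A$. The natural candidate is precisely the construction $\mathcal{M}'=\langle W',\leq',V'\rangle$ of Definition~\ref{def.consv}, which adjoins a fresh bottom element $g$ below everything; by the remark following that definition $V'$ is a legitimate persistent valuation, and since $g$ is the least element of $W'$, $\mathcal{M}'$ is indeed an {\bf S$_\bot$}-model.

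The key step is to transfer the failure of $A$ from $\mathcal{M}$ to $\mathcal{M}'$. Here Lemma~\ref{lem.consv} does the work: part (i) tells us that for every $w\in W$ the interpretations $I_{\supset}$ and $I'_{\supset}$ agree on all disjunction-free formulas, so from $I_{\supset}(w_0,A)\neq 1$ we immediately get $I'_{\supset}(w_0,A)\neq 1$ with $w_0\in W\subseteq W'$. Hence $A$ is not true at every state of $\mathcal{M}'$, i.e. $\not\models_\bot A$. Taking the contrapositive yields $\models_\bot A \Rightarrow \models_\supset A$, which is exactly the claim. (One could alternatively phrase the conclusion through part (ii) of Lemma~\ref{lem.consv}, noting that $\mathcal{M}\models_\supset A$ fails, hence $I'_\supset(g,A)\neq 1$; but using part (i) at the witnessing world $w_0$ is the most direct route and does not even require knowing where the failure occurs.)

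I expect the only real subtlety to be bookkeeping about the two notions of semantic consequence: $\models_\bot$ and $\models_\supset$ are both defined as "truth at all worlds of all models in the respective class," so the contraposition must be handled at the level of models-with-a-distinguished-refuting-world rather than at a single base state, and one must check that the fresh world $g$ in $\mathcal{M}'$ does not accidentally rescue $A$ — but this is ruled out by Lemma~\ref{lem.consv}(i) together with the fact that $w_0$ already refutes $A$. Everything else is a direct appeal to the preceding definition and lemma, so the argument is short.

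\begin{proof}
We prove the contrapositive: suppose $\not\models_{\supset} A$. Then there is an {\bf S$^-_\bot$}-model $\mathcal{M}=\langle W,\leq,V\rangle$ and a state $w_0\in W$ with $I_{\supset}(w_0,A)\neq 1$. Let $\mathcal{M}'=\langle W',\leq',V'\rangle$ be the model obtained from $\mathcal{M}$ as in Definition~\ref{def.consv}. As noted there, $V'$ is a well-defined persistent valuation, and since $g$ is by construction the least element of $\langle W',\leq'\rangle$, the model $\mathcal{M}'$ is an {\bf S$_\bot$}-model. Now $w_0\in W\subseteq W'$, and $A$ is disjunction-free, so Lemma~\ref{lem.consv}(i) applies and gives $I'_{\supset}(w_0,A)=I_{\supset}(w_0,A)\neq 1$. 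Hence it is not the case that $I'_{\supset}(w,A)=1$ for all $w\in W'$, so $\not\models_{\bot} A$. This establishes the contrapositive, and therefore $\models_{\bot} A$ implies $\models_{\supset} A$.
\end{proof}
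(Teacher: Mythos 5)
Your proof is correct and follows essentially the same route as the paper's: contrapose, embed the refuting {\bf S$^-_\bot$}-model into the base-pointed model $\mathcal{M}'$ of Definition~\ref{def.consv}, and transfer the failure of $A$ via Lemma~\ref{lem.consv}. The only difference is cosmetic: the paper invokes part (ii) of that lemma to locate the failure directly at the new base state $g$, whereas you use part (i) at the original witness $w_0$ and then tacitly rely on persistence (equivalently, on the fact that truth at $g$ coincides with truth at every world) to pass from failure at $w_0$ to $\not\models_{\bot}A$ --- a one-line observation worth making explicit, but not a gap.
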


\begin{proof}
If $\not\models_{\supset} A$, there is an {\bf S$^-_\bot$}-model $\mathcal{M}$ with $\mathcal{M}\not\models_{\supset} A$. By Lemma~\ref{lem.consv}, there is another {\bf S$^-_\bot$}-model $\mathcal{M}'$ with a base state such that $\mathcal{M}'\not\models_{\supset} A$. But as we observed $\mathcal{M}'$ is nothing but an {\bf S$_\bot$}-model; so $\not\models_{\bot} A$. 
\end{proof}

\begin{corollary}\label{cor.consv}
Let $A$ be a formula in $\mathcal{L}^-_{\bot, \supset}$. Then ${\bf S_\bot2}\vdash A$ implies
 ${\bf S^-_\bot2}\vdash A$.
\end{corollary}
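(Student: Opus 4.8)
The plan is to derive Corollary~\ref{cor.consv} as an immediate consequence of Theorem~\ref{thm.consv} together with the soundness and completeness results already assembled in this section. Concretely, I would argue the contrapositive: suppose $A$ is a formula of $\mathcal{L}^-_{\bot,\supset}$ with ${\bf S^-_\bot2}\nvdash A$. By the remark identifying ${\bf S^-_\bot}$ with ${\bf S^-_\bot2}$, this means ${\bf S^-_\bot}\nvdash A$, and hence by Corollary~\ref{cor.comp.minus} we have $\not\models_\supset A$. Applying the contrapositive of Theorem~\ref{thm.consv}, we obtain $\not\models_\bot A$. Finally, by the completeness of ${\bf S_\bot}$ with respect to $\models_\bot$ (the Proposition stating ${\bf S_\bot}\vdash A$ iff $\models_\bot A$) and the identification of ${\bf S_\bot}$ with ${\bf S_\bot2}$, we conclude ${\bf S_\bot2}\nvdash A$. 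Chaining these equivalences in the contrapositive direction gives exactly ${\bf S_\bot2}\vdash A \Rightarrow {\bf S^-_\bot2}\vdash A$.

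The key steps, in order, are: (1) invoke the syntactic equivalences ${\bf S_\bot}={\bf S_\bot2}$ and ${\bf S^-_\bot}={\bf S^-_\bot2}$ so that the whole argument can be phrased with schematic systems; (2) invoke Corollary~\ref{cor.comp.minus} to pass from ${\bf S^-_\bot}$-provability to $\models_\supset$-validity; (3) invoke the completeness Proposition for ${\bf S_\bot}$ to pass between ${\bf S_\bot}$-provability and $\models_\bot$-validity; and (4) insert Theorem~\ref{thm.consv} as the bridge $\models_\bot A \Rightarrow \models_\supset A$ for disjunction-free $A$. Since every one of these four ingredients is already established earlier in the excerpt, the proof of the corollary itself is only a short bookkeeping argument assembling them.

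I do not expect a genuine obstacle here: the corollary is designed to be the ``syntactic shadow'' of the semantic conservativity statement, and all the heavy lifting has been done in Lemma~\ref{lem.consv} and Theorem~\ref{thm.consv}. The one point requiring a little care is making sure the logical direction is right — Theorem~\ref{thm.consv} runs $\models_\bot A \Rightarrow \models_\supset A$, i.e. passing to the \emph{weaker} semantics (more models) preserves validity, which corresponds on the syntactic side to ${\bf S_\bot2}\vdash A \Rightarrow {\bf S^-_\bot2}\vdash A$, i.e.\ passing to the \emph{weaker} system preserves provability in the restricted language. One must also double-check that $A$ being in $\mathcal{L}^-_{\bot,\supset}$ is exactly the hypothesis needed to apply both Lemma~\ref{lem.consv} (via Theorem~\ref{thm.consv}) and the completeness results, which hold for the full language $\mathcal{L}_{\bot,\supset}\supseteq\mathcal{L}^-_{\bot,\supset}$ and hence a fortiori for disjunction-free formulas. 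Beyond that, the proof is a two-line chain of ``iff''s read in the contrapositive.
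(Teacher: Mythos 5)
Your proof is correct and is essentially the paper's argument read in the contrapositive: both chain the identifications ${\bf S_\bot2}={\bf S_\bot}$ and ${\bf S^-_\bot}={\bf S^-_\bot2}$ with soundness/completeness for $\models_\bot$, Theorem~\ref{thm.consv}, and Corollary~\ref{cor.comp.minus}. (Minor quibble: the step from $\not\models_\bot A$ to ${\bf S_\bot}\nvdash A$ is the contrapositive of \emph{soundness} rather than completeness, but since the cited Proposition is a biconditional this is immaterial.)
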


\begin{proof}
If ${\bf S_{\bot}}2\vdash A$, then ${\bf S_{\bot}}\vdash A$ and so $\models_{\bot} A$ by soundness. So by Theorem \ref{thm.consv}, $\models_{\supset}A$. Hence by Corollary \ref{cor.comp.minus}, ${\bf S^{-}_{\bot}}\vdash A$ and thus ${\bf S^{-}_{\bot}2}\vdash A$. 
\end{proof}

\begin{corollary}\label{cor.disj}
There is no $A{\in} \mathsf{Form}^-_{\bot, \supset}$ such that (i) ${\bf S_{\bot}2}\vdash A$ and (ii) the system ${\bf S^{-}_{\bot}2}+A$ derives \eqref{AxM6'}.
\end{corollary}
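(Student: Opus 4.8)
The plan is to derive Corollary~\ref{cor.disj} by a short argument from the conservativity result of Corollary~\ref{cor.consv}, reasoning by contradiction. Suppose, for contradiction, that there is a formula $A\in\mathsf{Form}^-_{\bot,\supset}$ (i.e.\ disjunction-free) such that (i) ${\bf S_\bot2}\vdash A$ and (ii) the system ${\bf S^-_\bot2}+A$ derives \eqref{AxM6'}. From (i) and Corollary~\ref{cor.consv}, since $A$ is disjunction-free, we get ${\bf S^-_\bot2}\vdash A$. Hence ${\bf S^-_\bot2}+A$ proves nothing more than ${\bf S^-_\bot2}$ itself, so by (ii) we would have ${\bf S^-_\bot2}\vdash\eqref{AxM6'}$, and therefore ${\bf S^-_\bot}\vdash\eqref{AxM6}$ (using the coincidence of ${\bf S^-_\bot}$ and ${\bf S^-_\bot2}$ noted in the earlier remark, together with \eqref{Sub}).

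The remaining task is to rule out ${\bf S^-_\bot}\vdash\eqref{AxM6}$. The natural route is semantic: by Corollary~\ref{cor.comp.minus}, ${\bf S^-_\bot}\vdash\eqref{AxM6}$ would entail $\models_\supset\eqref{AxM6}$, so it suffices to exhibit a single {\bf S$^-_\bot$}-model (without a base state) falsifying \eqref{AxM6}, i.e.\ some $w$ with $I_\supset\bigl(w,(p{\supset}r){\to}((q{\supset}r){\to}((p\lor q){\supset}r))\bigr)\neq 1$. Using the truth condition for $\supset$ in {\bf S$^-_\bot$}-models ($I_\supset(w,A{\supset}B)=1$ iff $I_\supset(x,A)\neq 1$ for some $x$, or $I_\supset(w,B)=1$), one wants a model where $p$ fails somewhere and $q$ fails somewhere (so $p{\supset}r$ and $q{\supset}r$ both hold everywhere vacuously), yet $p\lor q$ holds everywhere while $r$ fails somewhere (so $(p\lor q){\supset}r$ fails). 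Concretely, take $W=\{w_1,w_2\}$ with $w_1,w_2$ incomparable (each $\leq$-related only to itself), $V(w_1,p)=1$, $V(w_1,q)=0$, $V(w_2,p)=0$, $V(w_2,q)=1$, and $V(\cdot,r)=0$ everywhere. Then $p$ fails at $w_2$ and $q$ fails at $w_1$, so $p{\supset}r$ and $q{\supset}r$ are true at every state; but $p\lor q$ is true at every state and $r$ is false at $w_1$, so $(p\lor q){\supset}r$ is false at $w_1$, hence the whole implication fails at $w_1$. Thus $\not\models_\supset\eqref{AxM6}$, contradicting the above, and the corollary follows.

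I expect the main (and only real) obstacle to be the construction and verification of the countermodel, which hinges on the fact that in {\bf S$^-_\bot$}-models the antecedent of $\supset$ is evaluated \emph{globally} (some state) rather than relative to a distinguished base state; one must check that no base state is present, so the persistence constraint on $V$ imposes nothing across the incomparable worlds, and confirm the truth values of the relevant subformulas at each world. Everything else is bookkeeping: the passage from \eqref{AxM6'} back to \eqref{AxM6} via substitution, the coincidence of the schema and substitution-rule formulations, and the chain through Corollaries~\ref{cor.consv} and \ref{cor.comp.minus}. Hence the proof: assuming such an $A$ exists, Corollary~\ref{cor.consv} collapses ${\bf S^-_\bot2}+A$ to ${\bf S^-_\bot2}$, forcing ${\bf S^-_\bot}\vdash\eqref{AxM6}$, which the countermodel refutes via Corollary~\ref{cor.comp.minus}.
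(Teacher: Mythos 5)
Your proposal is correct and follows essentially the same route as the paper's own proof: apply Corollary~\ref{cor.consv} to collapse ${\bf S^{-}_{\bot}2}+A$ to ${\bf S^{-}_{\bot}2}$, then refute \eqref{AxM6'} semantically via soundness with a two-world countermodel that is identical (up to renaming) to the one the paper uses. The only cosmetic difference is that you route the refutation through the schema \eqref{AxM6} and Corollary~\ref{cor.comp.minus}, whereas the paper falsifies the axiom \eqref{AxM6'} directly; both verifications of the countermodel agree.
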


\begin{proof}
If ${\bf S_{\bot}2}\vdash A$ then by Corollary \ref{cor.consv}, ${\bf S^{-}_{\bot}2}\vdash A$ already. Thus ${\bf S^{-}_{\bot}2}+A={\bf S^{-}_{\bot}2}$.
However, we can show $\not\models_{\supset}\eqref{AxM6'}$ by considering a model $(W,\leq,V)$ where $W{=}\{w,w'\}$, $\leq=\{(w,w),(w',w')\}$, $V(w,p)=V(w',q)=1$ and $V(w,q){=}V(w',p){=}V(w,r){=}V(w',r){=}0$. Then $I(w,p{\supset} r){=}1$ and $I(w,q{\supset} r){=}1$, but $I(w,(p{\lor} q){\supset} r){=}0$. Thus by soundness, ${\bf S^{-}_{\bot}}\nvdash \eqref{AxM6'}$ and so ${\bf S^{-}_{\bot}2}\nvdash \eqref{AxM6'}$. Therefore the second condition cannot be satisfied. 
\end{proof}

\begin{remark}
If there is a formula which satisfies the above conditions with respect to {\bf S$^-$2} (i.e. the system in $\mathcal{L}_{\supset}$), then it is derivable in {\bf S$_\bot$2} and also derives \eqref{AxM6'} when added to {\bf S$^-_\bot$2}. This is impossible by the corollary; so \eqref{AxM6'} cannot be converted to a disjunction-free axiom for the language $\mathcal{L}_{\supset}$ either.
\end{remark}

We conclude this section by observing that a disjunction-free rule cannot replace \eqref{AxM6'} in ${\bf S_{\bot}2}$. Recall that a rule \AxiomC{$A_{1},\cdots,A_{n}$}\UnaryInfC{$B$}\DisplayProof is said to be \emph{derivable} in a proof system if there is a proof in the system of $B$ from $A_{1},\cdots,A_{n}$. Moreover, a rule \AxiomC{$A_{1},\cdots,A_{n}$}\UnaryInfC{$B$}\DisplayProof is \emph{admissible} if the provability of $A_{1},\cdots,A_{n}$ implies the provability of $B$ (see \cite{TroelstraSchwichtenberg2000} for more details). Then, we obtain the following result.

\begin{corollary}\label{cor.disj2}
There is no rule $\mathsf{R}$ of the form \AxiomC{$A_{1},\cdots,A_{n}$}\UnaryInfC{$B$}\DisplayProof, where $B$ is in $\mathcal{L}^-_{\bot, \supset}$, such that (i) $\mathsf{R}$ is admissible in ${\bf S_{\bot}2}$; and (ii) ${\bf S^{-}_{\bot}2}+\mathsf{R}$ derives \eqref{AxM6'}.
\end{corollary}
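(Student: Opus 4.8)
The plan is to reduce the claim about rules to the already-established claim about axioms, Corollary~\ref{cor.disj}, by showing that any rule $\mathsf{R}$ with a disjunction-free conclusion that is admissible in ${\bf S_{\bot}2}$ cannot do anything in ${\bf S^{-}_{\bot}2}$ that ${\bf S^{-}_{\bot}2}$ could not already do. First I would observe that, since ${\bf S^{-}_{\bot}2}$ is a subsystem of ${\bf S_{\bot}2}$, every instance of $\mathsf{R}$ whose premises $A_1,\dots,A_n$ are provable in ${\bf S^{-}_{\bot}2}$ also has its premises provable in ${\bf S_{\bot}2}$; by admissibility of $\mathsf{R}$ in ${\bf S_{\bot}2}$, the conclusion $B$ is then provable in ${\bf S_{\bot}2}$. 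Since $B$ lies in $\mathcal{L}^{-}_{\bot,\supset}$, Corollary~\ref{cor.consv} applies and gives ${\bf S^{-}_{\bot}2}\vdash B$. Hence every rule instance of $\mathsf{R}$ usable over ${\bf S^{-}_{\bot}2}$ yields only conclusions already provable in ${\bf S^{-}_{\bot}2}$; in other words $\mathsf{R}$ is admissible in ${\bf S^{-}_{\bot}2}$ in the weak sense of preserving theoremhood from empty premises.

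The subtlety is that ``${\bf S^{-}_{\bot}2}+\mathsf{R}$ derives \eqref{AxM6'}'' in the statement refers to derivability in the sense recalled just before the corollary: there is a proof of \eqref{AxM6'} from no assumptions in the system augmented with $\mathsf{R}$. Such a proof is a finite sequence of formulas, each an axiom, or obtained by \eqref{MP} or \eqref{Sub}, or obtained by an application of $\mathsf{R}$ to earlier formulas in the sequence. I would argue by induction on the length of such a proof that every formula in it is already provable in ${\bf S^{-}_{\bot}2}$: the axiom and \eqref{MP}/\eqref{Sub} steps are immediate, and for a step by $\mathsf{R}$ its premises occur earlier, so by induction hypothesis they are theorems of ${\bf S^{-}_{\bot}2}$, whence by the previous paragraph the conclusion is too. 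Therefore \eqref{AxM6'} would be a theorem of ${\bf S^{-}_{\bot}2}$ outright, which contradicts the final paragraph of the proof of Corollary~\ref{cor.disj}: the two-world model there witnesses $\not\models_{\supset}\eqref{AxM6'}$, and ${\bf S^{-}_{\bot}2}$ coincides with ${\bf S^{-}_{\bot}}$, which is sound for $\models_{\supset}$ (Corollary~\ref{cor.comp.minus}). So no such $\mathsf{R}$ can exist.

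The main obstacle I anticipate is being careful about the two notions of admissibility and the exact sense in which a rule is ``used'' in a derivation: the argument crucially needs that admissibility of $\mathsf{R}$ in ${\bf S_{\bot}2}$ is about provability (theorems), not about arbitrary derivations from open assumptions, and that derivability of \eqref{AxM6'} in ${\bf S^{-}_{\bot}2}+\mathsf{R}$ means a closed derivation. Because $\mathsf{R}$'s conclusion $B$ is required to be disjunction-free, Corollary~\ref{cor.consv} is available to bridge from provability in ${\bf S_{\bot}2}$ back down to provability in ${\bf S^{-}_{\bot}2}$; this is exactly the hypothesis that makes the inductive step go through, and it is the point where the language restriction on $B$ is indispensable. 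Once that bookkeeping is in place, the rest is routine: a short induction on proof length plus an appeal to Corollary~\ref{cor.disj}'s countermodel.

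\begin{proof}
Suppose for contradiction that such a rule $\mathsf{R}$ exists, say with premises $A_{1},\dots,A_{n}$ and conclusion $B\in\mathcal{L}^{-}_{\bot,\supset}$. We first claim that whenever ${\bf S^{-}_{\bot}2}\vdash A_{i}$ for all $i\leq n$, then ${\bf S^{-}_{\bot}2}\vdash B$. Indeed, since ${\bf S^{-}_{\bot}2}$ is a subsystem of ${\bf S_{\bot}2}$, we have ${\bf S_{\bot}2}\vdash A_{i}$ for all $i\leq n$; by admissibility of $\mathsf{R}$ in ${\bf S_{\bot}2}$ we get ${\bf S_{\bot}2}\vdash B$; and since $B$ is disjunction-free, Corollary~\ref{cor.consv} yields ${\bf S^{-}_{\bot}2}\vdash B$.

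Now consider a derivation of \eqref{AxM6'} in ${\bf S^{-}_{\bot}2}+\mathsf{R}$, i.e.\ a finite sequence of formulas each of which is an axiom of ${\bf S^{-}_{\bot}2}$, or follows from earlier formulas by \eqref{MP} or \eqref{Sub}, or is the conclusion of an application of $\mathsf{R}$ to (substitution instances of) earlier formulas. By induction on the position in this sequence, every formula occurring in it is provable in ${\bf S^{-}_{\bot}2}$: axioms and \eqref{MP}, \eqref{Sub} steps are clear, and for a step by $\mathsf{R}$ the premises occur earlier, hence are provable in ${\bf S^{-}_{\bot}2}$ by the induction hypothesis, so by the claim above the conclusion is provable in ${\bf S^{-}_{\bot}2}$ as well. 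In particular ${\bf S^{-}_{\bot}2}\vdash\eqref{AxM6'}$, hence ${\bf S^{-}_{\bot}}\vdash\eqref{AxM6'}$.

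But in the proof of Corollary~\ref{cor.disj} we exhibited an {\bf S$^{-}_{\bot}$}-model with $\not\models_{\supset}\eqref{AxM6'}$, so by Corollary~\ref{cor.comp.minus} we have ${\bf S^{-}_{\bot}}\nvdash\eqref{AxM6'}$, a contradiction. Hence no such rule $\mathsf{R}$ exists.
\end{proof}
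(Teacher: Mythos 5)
Your proof is correct and takes essentially the same route as the paper: both reduce the claim to showing that any such $\mathsf{R}$ is already admissible in ${\bf S^{-}_{\bot}2}$, using the disjunction-freeness of $B$ together with Corollary~\ref{cor.consv} to pass from ${\bf S_{\bot}2}\vdash B$ back down to ${\bf S^{-}_{\bot}2}\vdash B$, and then appeal to the countermodel showing ${\bf S^{-}_{\bot}2}\nvdash\eqref{AxM6'}$. The only difference is that you spell out the induction on proof length showing that an admissible rule adds no new theorems, a step the paper leaves implicit.
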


\begin{proof}
It suffices to show that if (i) is satisfied, then $\mathsf{R}$ is already admissible in ${\bf S^{-}_{\bot}2}$. Now if ${\bf S^{-}_{\bot}2}\vdash A_{1},\ldots{\bf S^{-}_{\bot}2}\vdash A_{n}$, then ${\bf S_{\bot}2}\vdash A_{1},\ldots{\bf S_{\bot}2}\vdash A_{n}$. Hence ${\bf S_{\bot}2}\vdash B$ by $\mathsf{R}$, but then by Corollary \ref{cor.consv}, ${\bf S^{-}_{\bot}2}\vdash B$. Hence $\mathsf{R}$ is already admissible in ${\bf S^{-}_{\bot}2}$. 
\end{proof}

\section{Adding a weaker absurdity}\label{sec:weak-absurdity}
In order to introduce negation into our language, recall that the intuitionists treat negation as implication to absurdity $\neg A:=A{\to} \bot$. Moreover, intuitionists assume that $\bot$ satisfies the law of explosion $\bot{\to} A$; if no such assumption is made, we obtain Johansson's minimal logic, introduced in \cite{Johansson1937}.

Now, in order to obtain classical negation at the base state, we would need absurdity to be explosive therein. However, this does not necessarily require absurdity to be explosive elsewhere. In other words, the law of explosion may be allowed in a restricted manner, much like the law of excluded middle. This allows us, somewhat surprisingly, to obtain a system, which we refer to as {\bf S$_{\bot_{w}}$}, that combines classical logic and minimal logic. 

\begin{definition}
Let {\bf S$_{\bot_{w}}$} be an expansion of {\bf S} in $\mathcal{L}^-_{\bot, \supset}$ with the following additional axiom.
\vspace{-2mm}
\begin{gather}
\bot{\supset}A \tag{AxE} \label{AxE}    
\end{gather}

\vspace{-2mm}

\noindent We refer to the derivability in {\bf S$_{\bot_{w}}$} as $\vdash_{\bot_{w}}$.
\end{definition}

\begin{remark}
Semantically, we can basically reuse the Kripke model for {\bf S}. The only change we need to make is to expand $V$ to be $V: W\times \mathsf{Prop}\cup\{\bot\} \longrightarrow \{ 0 , 1 \}$ with a condition that $V(g,\bot)=0$ for all $V$. Without this condition and the requirement that there is a base state, the $\supset$-free fragment of this semantics is equivalent to that of minimal logic in \cite{segerberg1968propositional}.
Then $I(w,\bot)$ is defined to be equal to $V(w,\bot)$. We refer to the semantic consequence relation as $\models_{\bot_{w}}$. 
\end{remark}

\begin{theorem}[Soundness and Completeness]\label{thm.mincomp}
For $\Gamma \cup \{ A \}  \subseteq \mathsf{Form}_{\bot,\supset}$, $\Gamma\vdash_{\bot_{w}} A$ iff $\Gamma\models_{\bot_{w}} A$.
\end{theorem}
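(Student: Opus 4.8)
The plan is to mimic the soundness/completeness argument for {\bf S} (Theorems in Section~\ref{sec:soundnesscompleteness}) with the minor modifications forced by the presence of a non-explosive $\bot$. Soundness is routine induction on the length of the proof: every axiom of {\bf S} is valid on {\bf S$_{\bot_w}$}-models by the argument already used for {\bf S}-models (the clauses for $\land$, $\lor$, $\to$, $\supset$ are unchanged and persistence still holds, since $V(w,\bot)$ is simply one more atom-like value subject to heredity), while \eqref{AxE} is valid precisely because $V(g,\bot)=0$ is imposed: if $I(g,\bot)\neq 1$ then $I(w,\bot{\supset}A)=1$ for every $w$. And \eqref{MP} preserves truth at $g$ as before.

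For completeness I would argue the contrapositive exactly as in Theorem~\ref{thm.comp}. Given $\Gamma\nvdash_{\bot_w}A$, use the analogues of Corollary~\ref{cor.comp2} to obtain a prime, $\Pi$-deductively closed $\Pi$-theory $\Pi\supseteq\Gamma$ with $A\notin\Pi$; here one must check that Lemmas~\ref{lem.comp1}--\ref{lem:pPdc1}, which only rely on \eqref{Ax1}, \eqref{AxC}, \eqref{Trans}, \eqref{MP3}, \eqref{Kmix} and Proposition~\ref{prop.disj}, all of which remain derivable in {\bf S$_{\bot_w}$} since it extends {\bf S}. Build the canonical model on the set $X$ of non-empty non-trivial prime $\Pi$-theories ordered by inclusion, with base state $\Pi$, and set $I(\Sigma,p)=1$ iff $p\in\Sigma$ \emph{and additionally} $I(\Sigma,\bot)=1$ iff $\bot\in\Sigma$. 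The one genuinely new point is that this $V$ is a legitimate {\bf S$_{\bot_w}$}-valuation, i.e. $\bot\notin\Pi$: this holds because $\bot\in\Pi$ together with \eqref{AxE} and $\Pi$-deductive closure would force $\Pi=\mathsf{Form}^-_{\bot,\supset}$, contradicting $A\notin\Pi$. Heredity for $\bot$ is immediate from $\Sigma\subseteq\Delta$.

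The truth lemma $I(\Sigma,B)=1$ iff $B\in\Sigma$ then goes through by induction on $B$ with the same case analysis as Theorem~\ref{thm.comp}: the $\land$, $\lor$ cases are standard, the $\to$ case uses Lemma~\ref{lem.comp5}, the $\supset$ case uses Lemmas~\ref{lem:pPdc1}, \ref{lem.comp1} and \eqref{Kmix}, \eqref{MP3}, and the new base case $B\equiv\bot$ holds by the definition of $I$ on $\bot$. Finally $B\in\Pi$ for all $B\in\Gamma$ and $A\notin\Pi$ give $I(\Pi,B)=1$ for all $B\in\Gamma$ while $I(\Pi,A)\neq 1$, so $\Gamma\not\models_{\bot_w}A$.

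\textbf{Main obstacle.} There is no deep obstacle; the only thing that needs care is verifying that the canonical valuation respects the defining constraint $V(g,\bot)=0$ of {\bf S$_{\bot_w}$}-models — that is, that the canonical base point $\Pi$ really does not contain $\bot$. This is exactly where \eqref{AxE} and the $\Pi$-deductive closure of $\Pi$ are used, and it is the reason the completeness argument for {\bf S} transfers without requiring any change to the lemmas of Section~\ref{sec:soundnesscompleteness}. One should also double-check that dropping $\lor$ from the language (we are in $\mathcal{L}^-_{\bot,\supset}$) does not break the Restall-style machinery; but since {\bf S$_{\bot_w}$} still contains $\lor$ in its axioms via the schemata \eqref{Ax6}--\eqref{Ax8} and \eqref{AxM6}, one either keeps $\lor$ in the language or invokes the conjunction/disjunction-free adaptation sketched in Section~\ref{sec:disjunction}; either way the primeness-based construction is available.
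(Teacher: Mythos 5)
Your proposal is correct and follows essentially the same route as the paper: soundness reduces to checking \eqref{AxE} via the constraint $I(g,\bot)=0$, and completeness reuses the canonical construction of Theorem~\ref{thm.comp} with $I(\Sigma,\bot)=1$ iff $\bot\in\Sigma$, the key new point being that $\bot\notin\Pi$ by \eqref{AxE}, \eqref{MP} and the $\Pi$-deductive closure of $\Pi$ (otherwise $\Pi$ would contain every formula, contradicting $A\notin\Pi$). No substantive differences from the paper's argument.
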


\begin{proof}
For soundness, it suffices to check \eqref{AxE} holds in any model. This is immediate from the fact that $I(g,\bot)=0$ in any model. For completeness, we can use the same countermodel construction as in Theorem~\ref{thm.comp}. That is to say, we set $I(\Sigma,\bot)=1$ iff $\bot\in\Sigma$. Then the base state $\Pi$ is deductively closed, so If $\bot\in\Pi$ then by \eqref{AxE} and \eqref{MP} we deduce $A\in\Pi$ for all $A$, a contradiction. Thus $\bot\notin\Pi$ and so the condition for our model that $V(\Pi,\bot)=0$ is satisfied. 
\end{proof}

Next we check that the intuitionistic absurdity (which we denote by $\bot_{i}$ here) satisfying the law of explosion cannot be defined in {\bf S$_{\bot_{w}}$}. 

\begin{proposition}\label{prop.bot}
There is no formula $F\in\mathsf{Form}_{\bot,\supset}$ such that $I(w,F)=0$  for any $w\in W$ in any {\bf S$_{\bot_{w}}$}-model $\langle g,W,\leq,V\rangle$.
\end{proposition}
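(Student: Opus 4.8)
The statement claims that no formula $F\in\mathsf{Form}_{\bot,\supset}$ is \emph{globally false}, i.e. false at every world of every {\bf S$_{\bot_w}$}-model. The plan is to exhibit a single fixed model in which every formula takes value $1$ at \emph{some} world — in fact it will be cleaner to find, for each candidate $F$, a world where $I(w,F)=1$, and the most economical way is to produce one model together with a world at which \emph{all} formulas are true. The obvious candidate is the one-point model $\langle g, W, \leq, V\rangle$ with $W=\{g\}$, $\leq$ the identity, and $V(g,p)=1$ for every $p\in\mathsf{Prop}$; here the constraint $V(g,\bot)=0$ forces $I(g,\bot)=0$, so this model alone does not make $\bot$ true anywhere, and indeed $\bot$ itself is a formula that is globally false in \emph{that} model. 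So a single universal model will not work, and the argument must genuinely depend on $F$.

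The real idea is as follows. Given a purported globally-false $F$, build a model witnessing the failure by taking $W=\{g,w\}$ with $g<w$, and choosing $V$ so that $V(g,\bot)=0$ but $V(w,\bot)=1$, and $V(g,p)=V(w,p)=1$ for all propositional variables $p$ (this respects persistence and the base-state condition on $\bot$). First I would prove by induction on the complexity of $A\in\mathsf{Form}_{\bot,\supset}$ that $I(w,A)=1$ for \emph{every} formula $A$: the atomic cases hold since $V(w,p)=1$ and $V(w,\bot)=1$; $\land$ is immediate; for $A\to B$ one needs $I(x,A)=1\Rightarrow I(x,B)=1$ for all $x\geq w$, but the only such $x$ is $w$ itself, and by IH $I(w,B)=1$; for $A\supset B$ one needs $I(g,A)\neq 1$ or $I(w,B)=1$, and the second disjunct holds by IH. Hence at the world $w$ every formula is true, so no $F$ can satisfy $I(w,F)=0$ there, contradicting global falsity. (Note this model must include the base state and satisfy $V(g,\bot)=0$ — these are exactly the conditions in the remark preceding the proposition, and they are compatible with the construction since $g\neq w$.)

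One subtlety to be careful about: the clause for $\supset$ refers to $I(g,A)$, not $I(w,A)$, so the induction for the $\supset$ case does not even need to know anything about $g$ — it only uses the IH at $w$ for $B$. This means the construction is robust; the only place the two-world structure matters is guaranteeing $V(w,\bot)=1$ while keeping $V(g,\bot)=0$ so that the model is legitimate. The main (and really only) obstacle is presentational: making sure the chosen $V$ is a genuine valuation (persistence from $g$ to $w$ holds because everything is already $1$ at $g$, and $\bot$ going from $0$ at $g$ to $1$ at $w$ is fine since the persistence condition on $\bot$ is only downward-closure is \emph{not} required for $\bot$ — rather persistence says truth is preserved upward, and $0\to 1$ upward is allowed). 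With that checked, the inductive claim "$I(w,A)=1$ for all $A$" finishes everything immediately.

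\begin{proof}[Proof sketch to be expanded]
Suppose for contradiction such an $F$ exists. Consider the {\bf S$_{\bot_w}$}-model $\langle g,W,\leq,V\rangle$ with $W=\{g,w\}$, $\leq$ the reflexive closure of $\{(g,w)\}$, $V(g,p)=V(w,p)=1$ for all $p\in\mathsf{Prop}$, $V(g,\bot)=0$ and $V(w,\bot)=1$. This satisfies all the conditions on an {\bf S$_{\bot_w}$}-model: persistence holds (nothing atomic drops from $1$ to $0$ going up), and $V(g,\bot)=0$. By induction on $A\in\mathsf{Form}_{\bot,\supset}$ one shows $I(w,A)=1$ for all $A$: the atomic and $\bot$ cases are by definition of $V$; $\land$ is trivial; for $A\to B$, the only $x\geq w$ is $w$, and $I(w,B)=1$ by IH; for $A\supset B$, $I(w,B)=1$ by IH. In particular $I(w,F)=1\neq 0$, contradicting the assumption that $I(w,F)=0$ for every world in every model.
\end{proof}
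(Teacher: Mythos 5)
Your argument is correct, and it is essentially the paper's own strategy: exhibit a concrete {\bf S$_{\bot_{w}}$}-model and show by induction that any candidate $F$ is forced to be true at a world above the base, contradicting global falsity. The difference is only in the choice of valuation: the paper makes $\bot$ and the variables occurring in $F$ true exactly on $W\setminus\{g\}$ (and needs the two-case invariant that every subformula of $F$ is true either on all of $W$ or on $W\setminus\{g\}$), whereas your two-point model with all atoms true everywhere and $\bot$ true only at $w$ collapses the invariant to ``every formula is true at $w$,'' which is a mild simplification; your checks that the valuation is persistent and that $V(g,\bot)=0$ are exactly the points that need checking. One small repair: since $\mathsf{Form}_{\bot,\supset}$ contains $\lor$, add the (trivial) disjunction case to your induction.
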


\begin{proof}
Suppose there is such a formula. Choose a model such that there are more than two states, and (i) $V(w,\bot)=1$ iff $w\in W\backslash\{g\}$; (ii) $V(w,p)=1$ iff $w\in W\backslash\{g\}$ for all $p$ occurring in $F$; (iii) $V(w,q)=1$ iff $w\in\emptyset$ for other propositional variables. Then we can show by induction that for any subformula $A$ of $F$, $I(w,A)=1$ iff $w\in W$, or $I(w,A)=1$ iff $w\in W\backslash\{g\}$. Then $I(w, F)=1$, a contradiction. 
\end{proof}

Finally, we show that {\bf S$_{\bot_{w}}$} is a conservative extension of minimal logic. Let \textbf{MPC} be the propositional minimal logic defined by \eqref{Ax1}--\eqref{Ax8} and \eqref{MP} in the language $\mathcal{L}_{\bot}=\{\land,\lor,\to,\bot\}$. As we mentioned above, the Kripke semantics for \textbf{MPC} is obtainable from that of {\bf S$_{\bot_{w}}$} by posing no restriction on the valuation of $\bot$ and the shape of a model. By making use of $\vdash_{j}$ and $\models_{j}$ to denote the derivability and validity in \textbf{MPC}, respectively, we obtain the following result. 

\begin{theorem}\label{thm.bot}
For $A\in \mathsf{Form}_{\bot}$, if $\vdash_{\bot_{w}} A$, then $\vdash_{j} A$.
\end{theorem}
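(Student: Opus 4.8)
The plan is to prove the contrapositive via the completeness theorems on both sides: if $\nvdash_j A$ then $\nvdash_{\bot_w} A$. By the Kripke completeness of \textbf{MPC} (soundness-completeness with respect to $\models_j$), $\nvdash_j A$ yields a minimal-logic Kripke model $\mathcal{M}=\langle W,\leq,V\rangle$ with no base-state restriction and an arbitrary valuation of $\bot$, together with a world $w_0\in W$ such that $I(w_0,A)\neq 1$. The task is then to manufacture from $\mathcal{M}$ an \textbf{S}$_{\bot_w}$-model — i.e.\ one equipped with a base state $g$ that is the $\leq$-least element and satisfies $V(g,\bot)=0$ — that still refutes $A$ at some world, so that $\nvdash_{\bot_w}A$ follows by soundness (Theorem~\ref{thm.mincomp}).

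The construction I would use is exactly the ``add a fresh root'' trick already employed in Definition~\ref{def.consv} and Lemma~\ref{lem.consv}: set $W'=W\cup\{g\}$ with $g$ a new point below everything, $\leq'=\leq\cup\{(g,w):w\in W'\}$, and define $V'(w,p)=1$ iff $V(w,p)=1$ or ($w=g$ and $V(w'',p)=1$ for all $w''\in W$), and crucially $V'(g,\bot)=0$ while $V'(w,\bot)=V(w,\bot)$ for $w\in W$. This $V'$ is a legitimate persistent valuation, and since $g$ carries $\bot$ the value $0$, the resulting structure is a genuine \textbf{S}$_{\bot_w}$-model. First I would check that for every formula $B\in\mathsf{Form}_{\bot}$ (note $A$ is in the $\{\land,\lor,\to,\bot\}$-language, with no occurrence of $\supset$) and every $w\in W$ we have $I'(w,B)=1$ iff $I(w,B)=1$ — this is a routine induction on $B$, the atomic and $\bot$ cases being immediate from the definition of $V'$, and the $\land,\lor,\to$ cases going through because on the old worlds $\leq'$ and $\leq$ agree and $g$ lies strictly below them. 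Consequently $I'(w_0,A)\neq 1$, so $A$ is refuted in an \textbf{S}$_{\bot_w}$-model, giving $\nvdash_{\bot_w}A$ by soundness.

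A small subtlety to address is that Lemma~\ref{lem.consv} is stated for \emph{disjunction-free} formulas because the added root $g$ behaves like a ``conjunction'' of the worlds of $\mathcal{M}$ and this can disturb disjunctions evaluated at $g$ itself; but here I do not need anything about the value at $g$, only that the restriction-to-$W$ embedding preserves all truth values, and that holds for \emph{arbitrary} formulas in $\mathcal{L}_{\bot}$, disjunction included, precisely because on the old worlds nothing changes. So the relevant preservation lemma is the easy direction, not the full content of Lemma~\ref{lem.consv}. The main obstacle — really the only place requiring care — is verifying that $V'$ satisfies persistence and that $I'\!\restriction\!W = I\!\restriction\!W$ despite the modified valuation at $g$; once that induction is in hand, everything else is bookkeeping with the two completeness/soundness theorems. (One should also note that $\mathcal{M}$, having a root added, could already have had a least element, but that is harmless: the construction simply inserts $g$ below it.)
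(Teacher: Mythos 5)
This is essentially the paper's own proof read contrapositively: the paper shows $\models_{\bot_{w}}A$ implies $\models_{j}A$ by truncating an arbitrary \textbf{MPC}-model at $w$, adjoining a fresh root at which all atoms and $\bot$ are false, observing that the result is an $\mathbf{S}_{\bot_{w}}$-model, and using persistence together with the fact that truth of $\supset$-free formulas at the old worlds is unchanged, before finishing with completeness of \textbf{MPC}; your countermodel transformation uses the same add-a-root idea and the same two soundness/completeness ingredients (the truncation step is indeed dispensable, and your valuation at $g$ works just as well as the paper's all-false one, your observation about why the disjunction restriction of Lemma~\ref{lem.consv} is not needed here being correct). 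The only point stated too quickly is the final inference ``$A$ is refuted at some world, hence $\nvdash_{\bot_{w}}A$ by soundness'': since $\models_{\bot_{w}}$ is evaluated at the base state, you still need to push the refutation from $w_{0}$ down to $g$, which the Persistence Lemma supplies at once (if $I'(g,A)=1$ then $g\leq' w_{0}$ would force $I'(w_{0},A)=1$, so $I'(g,A)\neq 1$); with that one line added your argument is complete and coincides with the paper's up to contraposition.
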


\begin{proof}
By Theorem \ref{thm.mincomp}, if $\vdash_{\bot_{w}} A$ then $\models_{\bot_{w}} A$. Let $\mathcal{M}=\langle W,\leq,V\rangle$ be an \textbf{MPC}-model. Take $w\in W$. We wish to show $I(w, A)=1$. Towards this, we first note that similarly to intuitionistic logic, we can take the \emph{truncated} model \cite[p.78]{TroelstravanDalen1988} $\mathcal{M}'$ of $\mathcal{M}$, i.e. the submodel restricted to the worlds above $w$. In particular $I(w,A)=1$ iff $I'(w,A)=1$. Moreover, $\mathcal{M}'$ is also a truncated model of the model
$\mathcal{M}_{0}$ obtained by adding a new root $w_{0}$ to $\mathcal{M}'$, with $V_{0}(w_{0},B)=0$ for $B\in\{p,\bot\}$. Then $\mathcal{M}_{0}$ is well-defined, and can be seen as an {\bf S$_{\bot_{w}}$}-model because it satisfies the condition that $\bot$ is not true at the root. Hence by assumption, $I_{0}(w_{0},A)=1$ and so $I_{0}(w,A)=1$. Therefore  $I'(w,A)=1$ and consequently $I(w,A)=1$, as required. Thus $\models_{j} A$ and by the completeness of \textbf{MPC}, $\vdash_{j} A$. 
\end{proof}

\section{Comparison with \textbf{CIPC}}\label{sec:comparison}

So far, we have seen how we may obtain a combination of classical positive logic and intuitionistic positive logic, which can be expanded to combine (i) classical logic and intuitionistic logic, by adding $\bot{\to} A$, as well as (ii) classical logic and minimal logic, by adding $\bot{\supset} A$. In view of these results, we believe that our system offers a ground for classicists to capture intuitionistic logic. Our system, however, is not the first attempt of combining classical positive logic and intuitionistic positive logic. Indeed, there is an attempt by Carlos Caleiro and Jaime Ramos who presented a system that combines classical conditional and intuitionistic conditional in \cite{Caleiro2007combining}. Since this will give us an opportunity to highlight some of the features of our system, we will focus on their system in this section.

The system {\bf CIPC} of Caleiro and Ramos, introduced in \cite{Caleiro2007combining}, is a logic with classical implication ($\Rightarrow$) and intuitionistic implication ($\to$), obtained via the technique of \emph{cryptofibring} (cf. \cite{caleiro2007fibring}). Semantically, it is captured by intuitionistic Kripke models with a base state together with the following condition for the classical conditional.

$\bullet$ $I(w,A\Rightarrow B)=1\text{ iff for some } w'\leq w\ (I(w',A)=0\text{ or }I(w',B)=1)$.

\noindent As observed in \cite[Corollary 2]{Caleiro2007combining}, this is equivalent to $I(g,A)=0\text{ or }I(w,B)=1$.
Moreover the system has two sorts of atomic formulas, classical and intuitionistic. Classical ones are either true in every world in the model, or true nowhere. Intuitionistic ones are evaluated as usual.\footnote{A similar idea is employed by Hidenori Kurokawa in \cite{Kurokawa2009hypersequent} to formulate intuitionistic logic with classical atoms. 
} Then \textbf{CIPC} has the following axiomatization.

\vspace{-3mm}

\noindent 
\begin{minipage}{.49\textwidth}
\begin{gather}
A\Rightarrow(B\Rightarrow A) \label{C1} \tag{C1}\\
(A{\Rightarrow}(B{\Rightarrow}C)){\Rightarrow}((A{\Rightarrow} B){\Rightarrow}(A{\Rightarrow}C)) \label{C2} \tag{C2}\\
((A\Rightarrow B)\Rightarrow A)\Rightarrow A \label{C3} \tag{C3}\\
A\to(B\to A) \label{I1} \tag{I1}\\
(A{\to}(B{\to}C)){\to}((A{\to}B){\to}(A{\to}C)) \label{I2} \tag{I2}\\
\frac{\ A\quad A{\Rightarrow} B \ }{B} \label{CMP} \tag{CMP}
\end{gather}
\end{minipage}
\begin{minipage}{.49\textwidth}
\begin{gather}
A\to(B\Rightarrow A) \label{X1} \tag{X1}\\
(A^{*}\Rightarrow B)\to(A^{*}\to B) \label{X2} \tag{X2}\\
A\Rightarrow((A\Rightarrow B)\to(A\to B)) \label{X3} \tag{X3}\\
(X{\Rightarrow}(A{\to}B)){\to}((X{\Rightarrow}A){\to}(X{\Rightarrow}B)) \label{X4} \tag{X4}\\
\frac{\ A\quad A{\to} B \ }{B} \label{IMP} \tag{IMP} 
\end{gather}
\end{minipage}

\noindent where $A^{*}$ is \emph{classical}, i.e. it is built up from classical atoms by means of $\Rightarrow$.\\
\indent As the above semantic condition makes it clear, {\bf CIPC} is almost identical to {\bf S} except for the restriction of the connectives and the presence of classical atoms. Now, Definition \ref{def.consv} can be naturally extended to treat classical atoms, and consequently  Corollary \ref{cor.disj} applies to {\bf CIPC} as well. Therefore, for {\bf CIPC} in the full language with conjunction and disjunction, it is insufficient to add standard axiom for $\land$ and $\lor$. Indeed, if we adopt the interpretation of $\supset$ in {\bf S$^-$}, namely

$\bullet$ $I(w, A{\Rightarrow} B)=1\text{ iff for some } x\ (I(x, A)\neq 1)\text{ or }I(w, B)=1$,

\noindent then it is straightforward to observe that the axioms of {\bf CIPC} are sound with respect to the class of Kripke models without the condition that there is a base state. Now if {\bf CIPC} in the full language is complete with respect to the class of Kripke models with a base state, then
$A\Rightarrow C\to((B\Rightarrow C)\to(A\lor B\Rightarrow C))$
has to be provable in {\bf CIPC}, and so it must be valid in all Kripke models without the condition of a base state. However analogously to what we observed in in Corollary \ref{cor.consv} this is impossible. Therefore {\bf CIPC} in the full language cannot be complete with respect to the class of Kripke models with a base state.\\
\indent Another thing to note here is that the above three interpretations for $\Rightarrow$ coincide only in a Kripke model with a base state. If we do not assume a base state, then the first and the third condition are no longer equivalent. In particular, the first condition becomes equivalent to that of $\Not A\lor B$ in Graham Priest's logic of co-negation \cite{PriestdaCL}. This difference illustrates why \eqref{X3} is said to fail when considering arbitrary Kripke models in \cite[Section 5]{Caleiro2007combining}. This occurs when the first condition is considered, but not when the third condition is considered, as we discussed above.

\section{Concluding remarks}\label{sec:conclusion}
In this paper, motivated by questions concerning the relation between classical logic and intuitionistic logic, we introduced an expansion of positive intuitionistic logic by classical conditional. Semantically, we made essential use of the base point in defining the classical conditional, and proof-theoretically, we introduced the system as an axiomatic proof system. We then established soundness and strong completeness results. Moreover, by taking this system as the basic system, we 
discussed the indispensability of disjunction in \S\ref{sec:disjunction}, pointed out that we may also combine classical logic and minimal logic in \S\ref{sec:weak-absurdity}, and quickly compared with one of the most closely related proposals, namely the system {\bf CIPC} due to Caleiro and Ramos in the literature in \S\ref{sec:comparison}. The following table summarizes the systems we looked at in the paper.

\vspace{-2mm}
\begin{table}[H]
\begin{center}
\begin{tabular}{r | l}
System & \\
\hline
{\bf S} & New combination of classical and intuitionistic conditionals\\
{\bf T} & Alternative combination incomparable with {\bf S}\\
${\bf S}_{\bot_{w}}$ & Combination of classical and minimal conditionals\\
\hline
$()_{\bot}$ & Addition of intuitionistic $\bot$ to the system\\
$()^{-}$ & Subsystem without \eqref{AxM6} (or \eqref{AxM6'})\\
$(){\bf 2}$ & System defined with substitution rule\\
\hline
{\bf MPC}    & Johansson's minimal logic\\
${\bf L_{4}}$ & One of intuitionistic {\bf S5} of Ono\\
{\bf CIPC} & Combination of classical and intuitionistic conditionals by Caleiro and Ramos
\end{tabular}
\end{center}
\caption{List of systems discussed in this paper}
\end{table}

\vspace{-3mm}

Further results we are ready to report, though had to be kept aside for the full version due to space restriction, include a formulation of sequent calculus with a detailed comparison to {\bf LEci} discussed in \cite{pimentel2018,pimentel2019ecumenical}, as well as some comparison to the proposal due to Steffen Lewizka presented in \cite{Lewitzka2017}.

Needless to say, our paper is meant to be a starting point rather than offering our final words on this topic. Future directions to be pursued, beside fully addressing the question {\bf (Q2)} based on our new system, include (i) investigations into combination of subintuitionistic logic and classical logic, (ii) investigations into other ways of adding classical conditional on top of positive intuitionistic logic, such as those making use of Humberstone's constant $\Omega$ (cf. \cite{Humberstone2006,NikiOmoriOmega}), (iii) investigations into the combination via Beth semantics instead of Kripke semantics, and (iv) investigations into other ways to combine two conditionals, via other routes outlined in the introduction.

\bibliographystyle{eptcs}
\bibliography{biblio}

\end{document}